\newtheorem{definition}{Definition}
\newtheorem{lemma}[definition]{Lemma}
\newtheorem{remark}[definition]{Remark}
\newtheorem{corollary}[definition]{Corollary}
\newtheorem{proposition}[definition]{Proposition}
\begin{document}
	\title
[Conformal super Virasoro algebra]
{Conformal super Virasoro algebra: matrix model and quantum deformed algebra}

\author{Fridolin Melong$^{\dagger,\ddagger}$}

\address{ $\dagger$ Institut f\"ur Mathematik, Universit\"at Z\"urich,\\Winterthurerstrasse 190, CH-8057, Z\"urich, Switzerland
	{\itshape e-mail:} \normalfont  
	\texttt{fridomelong@gmail.com}}

\address{$\ddagger$ International Chair in Mathematical Physics
	and Applications
	(ICMPA-UNESCO Chair), 
	University of Abomey-Calavi,
	072 B.P. 50 Cotonou,   Benin Republic, 	\\
	and Centre International de Recherches et d'Etude Avanc\'{e}es en Sciences Math\'{e}matiques \& Informatiques et Applications (CIREASMIA), 072 B.P. 50 Cotonou, Republic of Benin.} 
\begin{abstract}
	In this paper, we construct the super Virasoro algebra with an arbitrary conformal dimension $\Delta$ from the generalized $\mathcal{R}(p,q)$-deformed quantum algebra and investigate 
	the $\mathcal{R}(p,q)$-deformed super Virasoro algebra with  the particular  conformal dimension $\Delta=1.$ Furthermore, we perform the $\mathcal{R}(p,q)$-conformal Virasoro $n$ algebra, the $\mathcal{R}(p,q)$-conformal super Virasoro $n$-algebra ($n$ even) and discuss a toy model for the $\mathcal{R}(p,q)$-conformal Virasoro constraints and $\mathcal{R}(p,q)$-conformal super Virasoro constraints. Besides, we generalized the notion of the $\mathcal{R}(p,q)$-elliptic hermitian matrix model with an arbitrary conformal dimension $\Delta$. Finally, we deduce relevant particular cases generated by quantum algebras known in the literature.
\end{abstract}
\subjclass[2020]{17B37,  17B68, 81R10}
\keywords{$\mathcal{R}(p,q)$-calculus; quantum algebra; Super Virasoro algebra; conformal; super-Virasoro constraints.}
\maketitle
\tableofcontents
The notion of quantum  algebras and groups ( construction and representation theory), classical or  deformation, are  areas of mathematics, from mathematical physics of field theory and statistical mechanics. These topics were connected with many parts of mathematics and remain today an area of 
research activities. 
Many authors are investigated works related to the deformations of Virasoro algebra, see 
\cite{CILPP},\cite{CKL} (and references therein).  The Virasoro algebra with a conformal dimension $\Delta$ is related to the Korteweg-de-Vries (KdV) integrable systems, and  plays an important role in physics. This motivated the series of works devoted these last years  to its  deformation  and generalization \cite{CKL,CPP,Hounkonnou:2015laa}. 
Furthermore, the   
$q$-deformed Virasoro algebras with  conformal parameter $\Delta,$  multiplicative and comultiplication rule for  deformed generators were realized, the $q$-deformed central extension term , and related $q$-deformed KdV equation were determined for the particular cases of the conformal dimension $\Delta =0, 1/2, 1$ \cite{AS,CILPP,CEP,CPP}. Also, the  $q$-deformed energy-momentum tensor associated with the $q$-deformed central term was presented \cite{CILPP}. 
The two parameters deformation of the Virasoro algebra with conformal dimension and properties of  comultiplication were studied and investigated . Besides, the central charge term for the Virasoro algebra and the associated deformed nonlinear $(p,q)$-KdV-equation  (Korteweg-de Vries equation) were determined in \cite{CJ}.

Moreover,  the  generalized  Virasoro algebra, and related algebraic and  hydrodynamics properties were studied in \cite{Hounkonnou:2015laa}.

Now, we briefly recall definitions, notations and known results concerning $\mathcal{R}(p,q)$-calculus, quantum deformed algebras.

We consider  two positive real numbers $ p$ and $q,$ such that $ 0<q<p<1,$ and a 
meromorphic function $\mathcal{R}$ defined on $\mathbb{C}\times\mathbb{C}$ by \cite{HB}: \begin{eqnarray}\label{r10}
\mathcal{R}(s,t)= \sum_{u,v=-l}^{\infty}r_{uv}s^u\,t^v,
\end{eqnarray}
where $r_{uv}$ are complex numbers, $l\in\mathbb{N}\cup\left\lbrace 0\right\rbrace,$ $\mathcal{R}(p^n,q^n)>0,  \forall n\in\mathbb{N},$ and $\mathcal{R}(1,1)=0$ by definition. The bidisk $\mathbb{D}_{R}$ is defined by:  \begin{eqnarray*}
	\mathbb{D}_{R}
	&=&\left\lbrace w=(w_1,w_2)\in\mathbb{C}^2: |w_j|<R_{j} \right\rbrace,
\end{eqnarray*}
where $R$ is the convergence radius of the series (\ref{r10}) defined by Hadamard formula \cite{TN}:
\begin{eqnarray*}
	\lim\sup_{s+t \longrightarrow \infty} |r_{st}R^s_1\,R^t_2|^{\frac{1}{s+t}}=1
\end{eqnarray*}
and denote by  $\mathcal{O}(\mathbb{D}_{R})$ the set of holomorphic functions defined on $\mathbb{D}_{R}.$ 

The 
$\mathcal{R}(p,q)$-deformed numbers  \cite{HB}:
\begin{eqnarray}\label{rpqnumber}
[n]_{\mathcal{R}(p,q)}:=\mathcal{R}(p^n,q^n),\quad n\in\mathbb{N}\cup\{0\},
\end{eqnarray}
the
$\mathcal{R}(p,q)$-deformed factorials
\begin{eqnarray*}\label{s0}
	[n]!_{\mathcal{R}(p,q)}:=\left \{
	\begin{array}{l}
		1\quad\mbox{for}\quad n=0\\
		\\
		\mathcal{R}(p,q)\cdots\mathcal{R}(p^n,q^n)\quad\mbox{for}\quad n\geq 1,
	\end{array}
	\right .
\end{eqnarray*}
and the  $\mathcal{R}(p,q)$-deformed binomial coefficients
\begin{eqnarray*}\label{bc}
	\bigg[\begin{array}{c} m  \\ n\end{array} \bigg]_{\mathcal{R}(p,q)} := \frac{[m]!_{\mathcal{R}(p,q)}}{[n]!_{\mathcal{R}(p,q)}[m-n]!_{\mathcal{R}(p,q)}},\quad m,n\in\mathbb{N}\cup\{0\},\quad m\geq n.
\end{eqnarray*}

Consider the following linear operators defined on  $\mathcal{O}(\mathbb{D}_{R})$ by \cite{HB1}: 
\begin{align*}
\;Q:\psi\longmapsto Q\psi(z):&= \psi(qz),\\
\; P:\psi\longmapsto P\psi(z):&=\psi(pz),\\
\;{\mathcal D}_{p,q}:\psi\longmapsto {\mathcal D}_{p,q}\psi(z):&=\frac{\psi(pz)-\psi(qz)}{z(p-q)},
\end{align*}
and the $\mathcal{R}(p,q)$-derivative 
\begin{eqnarray*}\label{rpqdera}
	{\mathcal D}_{\mathcal{R}( p,q)}:={\mathcal D}_{p,q}\frac{p-q}{P-Q}\mathcal{R}( P,Q)
\end{eqnarray*}
or
\begin{eqnarray}\label{rpqder}
{\mathcal D}_{\mathcal{R}( p,q)}\psi(z):=\frac{p-q}{p^{P}-q^{Q}}\mathcal{R}(p^{P},q^{Q})\frac{\psi(pz)-\psi(qz)}{z(p-q)}.
\end{eqnarray}
The  algebra associated with the $\mathcal{R}(p,q)-$ deformation is a quantum algebra, denoted $\mathcal{A}_{\mathcal{R}(p,q)},$ generated by the set of operators $\{1, A, A^{\dagger}, N\}$ satisfying the following commutation relations\cite{HB1}:
\begin{eqnarray*}
	&& \label{algN1}
	\quad A A^\dag= [N+1]_{\mathcal {R}(p,q)},\quad\quad\quad A^\dag  A = [N]_{\mathcal {R}(p,q)}.
	\cr&&\left[N,\; A\right] = - A, \qquad\qquad\quad \left[N,\;A^\dag\right] = A^\dag
\end{eqnarray*}
with the realization on  ${\mathcal O}(\mathbb{D}_R)$ given by:
\begin{eqnarray*}\label{algNa}
	A^{\dagger} := z,\qquad A:=\partial_{\mathcal {R}(p,q)}, \qquad N:= z\partial_z,
\end{eqnarray*} 
where $\partial_z:=\frac{\partial}{\partial z}$ is the  derivative on $\mathbb{C}.$

Moreover, the  generalizations of the two parameters deformations  ($(p,q)$-deformed) Heisenberg algebras,  also called $\mathcal{R}(p,q)$-deformed quantum algebras were  investigated in \cite{HB1}. The importance of this work is to generalize  
$q$-and multi-parameter oscillator algebras well known in the physics literature. The  $\mathcal{R}(p,q)$-deformation   has the advantage to   the construction of   quantum groups and algebras.\cite{CK,CZ,G,SK}.  In the same idea, 
the $\mathcal{R}(p,q)$-deformed conformal Virasoro algebra was constructed, the $\mathcal{R}(p,q)$-deformed  Korteweg-de Vries equation for a conformal dimension $\Delta=1$ was derived, and the energy-momentum tensor from the $\mathcal{R}(p,q)$-deformed quantum algebra   for the conformal dimension $\Delta=2$ was presented by Hounkonnou and Melong \cite{HM}.

Recently, the  generalizations   of  Witt and Virasoro algebras were performed, and  the corresponding Korteweg-de Vries equations from the $\mathcal{R}(p,q)$-deformed quantum  algebras  were derived. Related relevant properties were investigated and discussed. Also,  the $\mathcal{R}(p,q)$-deformed Witt $n$-algebra was constructed, and  the Virasoro constraints for a toy model, which play an important role in the study of matrix models was presented \cite{HMM}. Following theses works, the super Virasoro $n$-algebras from the generalized quantum deformed algebras were constructed by Melong \cite{melong2022}. The generalization of the Heisenberg-Virasoro algebra and matrix models from the $\mathcal{R}(p,q)$-deformed quantum algebra were investigated by Melong and Wulkenhaar \cite{melongwulkenhaar}.  
Section $4$ describes the characterization of the $\mathcal{R}(p,q)$-conformal super Virasoro algebra, the $\mathcal{R}(p,q)$-conformal super Virasoro $n$-algebra  and a toy model for the $\mathcal{R}(p,q)$-conformal super Virasoro constraints. Particular cases are deduced. 
In Section $5$,  the conformal elliptic hermitian matrix model from the $\mathcal{R}(p,q)$-quantum algebra is obtained. We end with  concluding remarks in section $6.$
\section{$\mathcal{R}(p,q)$-conformal super Virasoro algebra $(\Delta\neq 0,1)$}
In this section, we derive  an $\mathcal{R}(p,q)$-extension of a conformal algebra with an arbitrary conformal dimension $\Delta$ \cite{CJ}. The $\mathcal{R}(p,q)$-generators are computed using the analogue of the $\mathcal{R}(p,q)$-Leibniz rule. The
$\mathcal{R}(p,q)$-super  Witt  and super  Virasoro algebras are built and discussed. A  result deduced by  Chaichian {\it et al} in \cite{CILPP} is recovered in a specific case.
\subsection{Constuction of the $\mathcal{R}(p,q)$-conformal generators}
Let $\mathcal{B}={\mathcal B}_0\oplus \mathcal{B}_1$ be  the super-commutative associative superalgebra such that ${\mathcal B}_0=\mathbb{C}\big[z,z^{-1}\big]$ and $\mathcal{B}_1=\theta\,\mathcal{B}_0,$ where $\theta$ is the Grassman variable with $\theta^2=0$ \cite{WYLWZ}:

We define the algebras endomorphism $\sigma$ on $\mathcal{B}$ as follows \cite{melong2022}:
\begin{eqnarray}\label{sigmarpq}
\sigma(t^n):=\tau_2^n\,t^n\quad\mbox{and}\quad \sigma(\theta):=\tau_2\,\theta,
\end{eqnarray}
where  $(\tau_i)_{i\in\{1,2\}}$  are functions  depending on the parameters $p$ and $q.$ 

The two linear maps $\partial_t$ and $\partial_{\theta}$ on $\mathcal{B}$ are defined by:
\begin{eqnarray*}
	\left \{
	\begin{array}{l}
		\partial_t(t^n):=[n]_{{\mathcal R}(p,q)}\,t^n\mbox{,}\quad \partial_t(\theta\,t^n):=[n]_{{\mathcal R}(p,q)}\,\theta\,t^n, \\
		\\
		\partial_{\theta}(t^n):=0\mbox{,}\quad \partial_{\theta}(\theta\,t^n):=\tau_2^n\,t^n.
	\end{array}
	\right .
\end{eqnarray*}
We assume that the linear map $\bar{\Delta}=\partial_{t}+\theta \partial_{\theta}$ on  ${\mathcal B}$ is an even $\sigma$-derivation. Then:
\begin{align}\label{deltaxyrpq}
\,\bar{\Delta}(x\,y)&=\bar{\Delta}(x)\,y+\sigma(x)\delta(y),\nonumber\\
\,\bar{\Delta}(t^n)&= [n]_{{\mathcal R}(p,q)}\,t^n\quad\mbox{and}\quad \bar{\Delta}(\theta\,t^n)= \big([n]_{{\mathcal R}(p,q)}+\tau^n_2\big)\,\theta\,t^n. 
\end{align}

We consider $\phi_{\Delta}(z)$ be an arbitrary field with conformal dimension $\Delta.$ Thus, $\phi_{\Delta}(z)$ has the transformation property:
\begin{eqnarray}\label{eq1}
\phi_{\Delta}(z) \longrightarrow \big(\psi^{'}(z)\big)^{\Delta}\,\phi_{\Delta}\big(\psi(z)\big),
\end{eqnarray}
where $\psi(z)$ is a function of the conformal transformation, $"'"$ is the derivative and $\Delta$ is an arbitrary number.

The relation (\ref{eq1}) takes the infinitesimal form:
\begin{align*}
\,\psi(z)&= z+ \epsilon(z),\nonumber\\
\,\delta\big(\phi_{\Delta}(z)\big)&= \epsilon^{1-\Delta}\,\partial\,\epsilon^{\Delta}\,\phi_{\Delta}(z),
\end{align*}
where $\partial$ is the classical derivative. 

Taking $\epsilon=-z^{n+1},$ we obtain:
\begin{definition}
	The $\mathcal{R}(p,q)$-conformal super algebra is generated by bosonic and fermionic operators  $\mathcal{L}^{\Delta}_n$ of parity $0$ and $\mathcal{G}^{\Delta}_n$ of parity $1$  defined as follows:
	\begin{align*}
	\mathcal{L}^{(\Delta)}_n\phi_{\Delta}(z)&:=-z^{(n+1)(1-\Delta)}\,\bar{\Delta}(z^{(n+1)\Delta}\phi_{\Delta}(z)),\\
	\mathcal{G}^{(\Delta)}_n\phi_{\Delta}(z)&:=-\theta z^{(n+1)(1-\Delta)}\,\bar{\Delta}(z^{(n+1)\Delta}\phi_{\Delta}(z)).
	\end{align*}
\end{definition}
Note that, taking $\Delta=0,$ we recovered the $\mathcal{R}(p,q)$-generators given in \cite{melong2022}:
\begin{eqnarray*}
	\mathcal{L}_n\,\phi(z)=-z^{n}\,\bar{\Delta}\,\phi(z)\quad \mbox{and}\quad
	\mathcal{G}_n\,\phi(z)=-\theta z^{n}\,\bar{\Delta}\,\phi(z).
\end{eqnarray*}

We assume that for the field $\phi_{\Delta}(z),$ the linear $\bar{\Delta}$ can be written in the form of the $\mathcal{R}(p,q)$-number \eqref{rpqnumber} as follows:
\begin{eqnarray}\label{eq7}
\bar{\Delta}\,\phi_{\Delta}(z):=\frac{1}{z} [z\partial_z]_{\mathcal{R}(p,q)}\,\phi_{\Delta}(z).
\end{eqnarray}
Then, 
\begin{definition}
	The $\mathcal{R}(p,q)$-conformal super generators $\mathcal{L}^{(\Delta)}_n$  and $\mathcal{G}^{(\Delta)}_n$ are taking the following form:
	\begin{align}\label{scg}
	\mathcal{L}^{(\Delta)}_n\phi_{\Delta}(z)&=-[z\partial_z+\Delta(n+1)-n]_{\mathcal{R}(p,q)}\,z^n\,\phi_{\Delta}(z),\\
	\mathcal{G}^{(\Delta)}_n\phi_{\Delta}(z)&=-\theta\, [z\partial_z+\Delta(n+1)-n]_{\mathcal{R}(p,q)}\,z^n\,\phi_{\Delta}(z).
	\end{align}
\end{definition}
\begin{proposition}
	For the conformal dimenson $(\Delta \neq 0,1),$ the $\mathcal{R}(p,q)$-conformal super generators $\mathcal{L}^{(\Delta)}_n$ and $\mathcal{G}^{(\Delta)}_n$ satisfies the following algebraic structures:
	\begin{align}
	[\mathcal{L}^{(\Delta)}_n , \mathcal{L}^{(\Delta)}_m]_{X_{\Delta},Y_{\Delta}}
	&=(p-q)^{-1}\bigg\lbrace p^{N_{\Delta}}(X_{\Delta}p^{-n}-Y_{\Delta}p^{-m})\nonumber\\&- q^{N_{\Delta}}(X_{\Delta}q^{-n}-Y_{\Delta}q^{-m}) \bigg\rbrace \mathcal{L}^{(\Delta)}_{n+m},\label{p1a}\\
	[\mathcal{L}^{(\Delta)}_n , \mathcal{G}^{(\Delta)}_m]_{\tilde{X}_{\Delta},\tilde{Y}_{\Delta}}
	&=(p-q)^{-1}\bigg\lbrace p^{N_{\Delta}}(\tilde{X}_{\Delta}p^{-n}-\tilde{Y}_{\Delta}p^{-m})\nonumber\\& - q^{N_{\Delta}}(\tilde{X}_{\Delta}q^{-n}-\tilde{Y}_{\Delta}q^{-m}) \bigg\rbrace \mathcal{G}^{(\Delta)}_{n+m},\label{p1b}
	\end{align}
	and other commutators are zeros, with 
	\begin{equation}
	\left \{
	\begin{array}{l}
	{X_{\Delta}}=(pq)^{n}\frac{[n(\Delta-1)]_{p,q}[\Delta m]_{p,q}}{[n]_{p,q}[m]_{p,q}}\frac{f^{\Delta}_{n+m}(p,q)}{f^{\Delta}_n(p,q)f^{\Delta}_m(p,q)}  \\{\tilde{X}_{\Delta}}=X_{\Delta}\\
	{Y_{\Delta}}=(pq)^{m}\frac{[m(\Delta-1)]_{p,q}[\Delta n]_{p,q}}{[n]_{p,q}[m]_{p,q}}\frac{f^{\Delta}_{n+m}(p,q)}{f^{\Delta}_n(p,q)f^{\Delta}_m(p,q)}\\\\
	{\tilde{Y}_{\Delta}}=(pq)^{m+1}\frac{[(m+1)(\Delta -1)]_{p,q}[\Delta n]_{p,q}}{[n]_{p,q}[m+1]_{p,q}}\frac{f^{\Delta}_{n+m+1}(p,q)}{f^{\Delta}_n(p,q)f^{\Delta}_{m+1}(p,q)}\\\\
	f^{\Delta}_n(p,q)=\frac{p-q}{p^{\Delta(n+1)}-q^{\Delta(n+1)}}\mathcal{R}(p^{\Delta(n+1)}, q^{\Delta(n+1)})\\
	N_{\Delta} =z\partial_z + \Delta.
	\end{array}
	\right. 
	\end{equation}
\end{proposition}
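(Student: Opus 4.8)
The plan is to reduce everything to the action of the generators on monomials and then to a single bookkeeping identity for deformed numbers. First I would record the elementary fact that, by \eqref{scg} together with $z\partial_z\,z^{s}=s\,z^{s}$ and the definition \eqref{rpqnumber} of $[\,\cdot\,]_{\mathcal{R}(p,q)}$, the bosonic generator acts on a monomial as a weighted shift,
\[
\mathcal{L}^{(\Delta)}_n\,z^{s}=-[\,s+\Delta(n+1)\,]_{\mathcal{R}(p,q)}\,z^{n+s},
\]
and likewise $\mathcal{G}^{(\Delta)}_n\,z^{s}=-\theta\,[\,s+\Delta(n+1)\,]_{\mathcal{R}(p,q)}\,z^{n+s}$, the only difference being the Grassmann prefactor. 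Since $\{z^{s}\}$ spans $\mathcal{B}_0$ and $\{\theta z^{s}\}$ spans $\mathcal{B}_1$, it suffices to verify \eqref{p1a}--\eqref{p1b} on such monomials.

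Next I would compute the two orderings of the compositions. Using $N_{\Delta}=z\partial_z+\Delta$ and the fact that multiplication by $z^{n}$ intertwines $z\partial_z$ with its shift, one obtains
\[
\mathcal{L}^{(\Delta)}_n\mathcal{L}^{(\Delta)}_m\,z^{s}=[\,s+\Delta(m+1)\,]_{\mathcal{R}(p,q)}\,[\,m+s+\Delta(n+1)\,]_{\mathcal{R}(p,q)}\,z^{n+m+s},
\]
together with the analogous expression with $n\leftrightarrow m$ for the opposite ordering. For the mixed bracket the same computation carries the factor $\theta$ and invokes $\sigma(\theta)=\tau_2\theta$ and the even $\sigma$-derivation rule $\bar{\Delta}(\theta t^{m})=([m]_{\mathcal{R}(p,q)}+\tau_2^{m})\,\theta t^{m}$ from \eqref{sigmarpq} and \eqref{deltaxyrpq}; this extra $\tau_2^{m}$ contribution is what shifts $m\mapsto m+1$ in the fermionic coefficient $\tilde{Y}_{\Delta}$.

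Then I would form the twisted brackets. Acting on $z^{s}$, the right-hand side operator $(p-q)^{-1}\{p^{N_{\Delta}}(X_{\Delta}p^{-n}-Y_{\Delta}p^{-m})-q^{N_{\Delta}}(X_{\Delta}q^{-n}-Y_{\Delta}q^{-m})\}$, evaluated after $\mathcal{L}^{(\Delta)}_{n+m}$, collapses by the very definition of the $(p,q)$-number to $\{X_{\Delta}[m+s+\Delta]_{p,q}-Y_{\Delta}[n+s+\Delta]_{p,q}\}$ times $\mathcal{L}^{(\Delta)}_{n+m}z^{s}$. The remaining task is to check that the twisted combination of the two compositions equals this expression, which is where the explicit coefficients enter: writing each $\mathcal{R}(p,q)$-weight as a $(p,q)$-number times the normalization $f^{\Delta}_k=[\Delta(k+1)]_{\mathcal{R}(p,q)}/[\Delta(k+1)]_{p,q}$, the ratio $f^{\Delta}_{n+m}/(f^{\Delta}_n f^{\Delta}_m)$ and the prefactors $(pq)^{n},(pq)^{m}$ in $X_{\Delta},Y_{\Delta}$ are precisely what reconcile the two products of $\mathcal{R}(p,q)$-numbers with the single weight $[\,s+\Delta(n+m+1)\,]_{\mathcal{R}(p,q)}$ carried by $\mathcal{L}^{(\Delta)}_{n+m}$. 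The vanishing of the remaining (anti)commutators, in particular those among the $\mathcal{G}^{(\Delta)}_n$, then follows at once from $\theta^{2}=0$.

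I expect the main obstacle to be exactly this last coefficient-matching: disentangling the product $[\,s+\Delta(m+1)\,]_{\mathcal{R}(p,q)}\,[\,m+s+\Delta(n+1)\,]_{\mathcal{R}(p,q)}$ and its $n\leftrightarrow m$ partner so that their twisted difference factors through a single $\mathcal{L}^{(\Delta)}_{n+m}$. This is what forces the specific shapes of $X_{\Delta},Y_{\Delta}$ (and the shifted $\tilde{Y}_{\Delta}$), and verifying that the non-closing cross terms cancel, using the $(p,q)$-addition identity $[a+b]_{p,q}=p^{a}[b]_{p,q}+q^{b}[a]_{p,q}$, is the real content of the argument; everything else is routine substitution.
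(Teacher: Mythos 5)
Your strategy is the one the paper itself intends --- its proof is literally ``By a straightforward calculation'' --- and your setup is sound: the action $\mathcal{L}^{(\Delta)}_n z^s=-[s+\Delta(n+1)]_{\mathcal{R}(p,q)}\,z^{n+s}$, the composition formula, the collapse of the right-hand sides of \eqref{p1a}--\eqref{p1b} into $X_{\Delta}[m+s+\Delta]_{p,q}-Y_{\Delta}[n+s+\Delta]_{p,q}$ acting through $\mathcal{L}^{(\Delta)}_{n+m}$, and the observation that the $\theta\partial_\theta$ part of $\bar{\Delta}$ in \eqref{deltaxyrpq} is what produces the shift $m\mapsto m+1$ in $\tilde{Y}_{\Delta}$ are all correct; and checking the identity on the monomials $z^s$ and $\theta z^s$ is indeed sufficient.

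The problem is that everything you label ``routine substitution'' is in fact the entire content of the proposition, and as described it does not go through. First, the normalization $f^{\Delta}_k=[\Delta(k+1)]_{\mathcal{R}(p,q)}/[\Delta(k+1)]_{p,q}$ converts $\mathcal{R}(p,q)$-numbers into $(p,q)$-numbers only at the special arguments $\Delta(k+1)$; the weights appearing in your compositions are $[s+\Delta(m+1)]_{\mathcal{R}(p,q)}$ and $[m+s+\Delta(n+1)]_{\mathcal{R}(p,q)}$, whose dependence on $s$ cannot be absorbed into the $s$-independent factors $f^{\Delta}_n,f^{\Delta}_m,f^{\Delta}_{n+m}$. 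For a general meromorphic $\mathcal{R}$ the twisted bracket need not close on $\mathcal{L}^{(\Delta)}_{n+m}$ with field-independent coefficients at all; one must invoke the structural hypothesis \eqref{rpqn}, namely $[x]_{\mathcal{R}(p,q)}=(\tau_1^x-\tau_2^x)/(\tau_1-\tau_2)$ with $\tau_i$ functions of $(p,q)$ (which the paper only states in the later Witt $n$-algebra section), and then expand every weight with the addition rule you quote. Second, a classical-limit sanity check shows the identity you set out to verify is not even sign-consistent under the natural convention $[A,B]_{X_\Delta,Y_\Delta}=X_\Delta AB-Y_\Delta BA$: taking $\mathcal{R}(x,y)=(x-y)/(p-q)$ and $p,q\to 1$ gives $X_\Delta,Y_\Delta\to\Delta(\Delta-1)$, the left side of \eqref{p1a} tends to $\Delta(\Delta-1)(n-m)\,\mathcal{L}^{(\Delta)}_{n+m}$, while the right side tends to $\Delta(\Delta-1)(m-n)\,\mathcal{L}^{(\Delta)}_{n+m}$; since $\Delta\neq 0,1$ these differ. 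So before any coefficient matching can succeed you must resolve the bracket convention (or the overall sign), which your proposal never addresses. As it stands, the proposal is a correct reduction followed by an unproved final identity that, taken at face value, fails by a sign.
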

\begin{proof}
	By a straightforward calculation.
\end{proof}
\begin{remark} The  conformal super generators and algebraic strucutres corresponding to quantum algebras in the literature are deduced as follows:
	\begin{enumerate}
		\item[(a)] Taking $\mathcal{R}(x)=\frac{x-x^{-1}}{q-q^{-1}},$ we obtain 
		the $q$-conformal super generators  associated to the {\bf Biedenharn-Macfarlane algebra} \cite{BC,M}
		\begin{eqnarray*}
			\mathcal{L}^{(\Delta)}_n\phi_{\Delta}(z)=-[z\partial_z+\Delta(n+1)-n]_{q}\,z^n\,\phi_{\Delta}(z),\\
			\mathcal{G}^{(\Delta)}_n\phi_{\Delta}(z)=-\theta\, [z\partial_z+\Delta(n+1)-n]_{q}\,z^n\,\phi_{\Delta}(z),
		\end{eqnarray*} 
		with
		\begin{eqnarray*}
			[n]_{q}=\frac{q^n-q^{-n}}{q-q^{-1}}
		\end{eqnarray*}	obeying the  algebraic structures\eqref{p1a} and \eqref{p1b} with the coefficients 
		\begin{equation}
		{X_{\Delta}}=\frac{[n(\Delta-1)]_{q}[\Delta m]_{q}}{[n]_{q}[m]_{q}}\mbox{,}\quad  {\tilde{X}_{\Delta}}=X_{\Delta}
		\end{equation}
		and 
		\begin{equation}
		{Y_{\Delta}}=\frac{[m(\Delta-1)]_{q}[\Delta n]_{q}}{[n]_{q}[m]_{q}},\quad 
		{\tilde{Y}_{\Delta}}=\frac{[(m+1)(\Delta -1)]_{q}[\Delta n]_{q}}{[n]_{q}[m+1]_{q}}. 
		\end{equation}
		\item[(b)] Putting $\mathcal{R}(x,y)=\frac{x-y}{p-q},$ we obtain the $(p,q)$-super  conformal generators corresponding to the deformed {\bf Jagannathan-Srinivasa algebra} \cite{JS}:
		\begin{eqnarray*}
			\mathcal{L}^{(\Delta)}_n\phi_{\Delta}(z)=-[z\partial_z+\Delta(n+1)-n]_{p,q}\,z^n\,\phi_{\Delta}(z),\\
			\mathcal{G}^{(\Delta)}_n\phi_{\Delta}(z)=-\theta\, [z\partial_z+\Delta(n+1)-n]_{p,q}\,z^n\,\phi_{\Delta}(z),
		\end{eqnarray*} 
		satisfiyng the following algebraic structures \eqref{p1a}
		and \eqref{p1b} with the coefficients
		\begin{equation}
		{X_{\Delta}}=(pq)^{n}\frac{[n(\Delta-1)]_{p,q}[\Delta m]_{p,q}}{[n]_{p,q}[m]_{p,q}},\quad  {\tilde{X}_{\Delta}}=X_{\Delta}
		\end{equation}
		and 
		\begin{equation}
		{Y_{\Delta}}=\frac{[m(\Delta-1)]_{p,q}[\Delta n]_{p,q}}{(pq)^{-m}[n]_{p,q}[m]_{p,q}},\quad
		{\tilde{Y}_{\Delta}}=\frac{[(m+1)(\Delta -1)]_{p,q}[\Delta n]_{p,q}}{(pq)^{-m-1}[n]_{p,q}[m+1]_{p,q}}. 
		\end{equation}
		\item[(c)]  For the  choice $\mathcal{R}(u,v)=(p^{-1}-q)^{-1}u^{-1}(1-uv),$ we deduce the conformal super generators induced by the \textbf{Chakrabarti-Jagannathan algebra} \cite{Chakrabarti&Jagan}:
		\begin{eqnarray*}
			\mathcal{L}^{(\Delta)}_n\phi_{\Delta}(z)=-[z\partial_z+\Delta(n+1)-n]_{p^{-1},q}\,z^n\,\phi_{\Delta}(z),\\
			\mathcal{G}^{(\Delta)}_n\phi_{\Delta}(z)=-\theta\, [z\partial_z+\Delta(n+1)-n]_{p^{-1},q}\,z^n\,\phi_{\Delta}(z),
		\end{eqnarray*} 
		obeying  the  algebraic structures \eqref{p1a} and \eqref{p1b} with 
		\begin{equation}
		{X_{\Delta}}=q^{n}\frac{[n(\Delta-1)]_{p^{-1},q}[\Delta m]_{p^{-1},q}}{p^n\,[n]_{p^{-1},q}[m]_{p^{-1},q}},\quad  {\tilde{X}_{\Delta}}=X_{\Delta},
		\end{equation} 
		\begin{equation}
		{Y_{\Delta}}=\frac{q^m\,[m(\Delta-1)]_{p^{-1},q}[\Delta n]_{p^{-1},q}}{p^{m}\,[n]_{p^{-1},q}[m]_{p^{-1},q}},
		\end{equation}
		and
		\begin{equation}
		{\tilde{Y}_{\Delta}}=\frac{q^{m+1}[(m+1)(\Delta -1)]_{p^{-1},q}[\Delta n]_{p^{-1},q}}{p^{m+1}[n]_{p^{-1},q}[m+1]_{p^{-1},q}}. 
		\end{equation}
		\item[(d)] Putting $\mathcal{R}(s,t)=((q-p^{-1})t)^{-1}(st-1)$, we obtain the conformal super generators associated to the \textbf{generalized $q$-Quesne} deformed  algebra \cite{HN}: 
		\begin{eqnarray*}
			\mathcal{L}^{(\Delta)}_n\phi_{\Delta}(z)=-[z\partial_z+\Delta(n+1)-n]^Q_{p,q}\,z^n\,\phi_{\Delta}(z),\\
			\mathcal{G}^{(\Delta)}_n\phi_{\Delta}(z)=-\theta\, [z\partial_z+\Delta(n+1)-n]^Q_{p,q}\,z^n\,\phi_{\Delta}(z),
		\end{eqnarray*}
		satisfying the commutation relations \eqref{p1a} and \eqref{p1b} with 
		\begin{equation}
		{X_{\Delta}}=p^{n}\frac{[n(\Delta-1)]^Q_{p,q}[\Delta m]^Q_{p,q}}{q^n\,[n]^Q_{p,q}[m]^Q_{p,q}},\quad {Y_{\Delta}}=\frac{p^m\,[m(\Delta-1)]^Q_{p,q}[\Delta n]^Q_{p,q}}{q^{m}[n]^Q_{p,q}[m]^Q_{p,q}} 
		\end{equation}
		and 
		\begin{equation}
		{\tilde{X}_{\Delta}}=X_{\Delta},\quad
		{\tilde{Y}_{\Delta}}=\frac{p^{m+1}[(m+1)(\Delta -1)]^Q_{p,q}[\Delta n]^Q_{p,q}}{q^{m+1}[n]^Q_{p,q}[m+1]^Q_{p,q}}. 
		\end{equation}
		\item[(e)] Taking $\mathcal{R}(x,y)=g(p,q)\frac{y^{\nu}}{x^{\mu}}\frac{xy -1}{(q-p^{-1})y}$, 
		we obtain the conformal super generators induced by the \textbf{Hounkonnou-Ngompe generalized 
			algebra}\cite{HNN}: 
		\begin{eqnarray*}
			\mathcal{L}^{(\Delta)}_n\phi_{\Delta}(z)=-[z\partial_z+\Delta(n+1)-n]^{\mu,\nu}_{p,q,g}\,z^n\,\phi_{\Delta}(z),\\
			\mathcal{G}^{(\Delta)}_n\phi_{\Delta}(z)=-\theta\, [z\partial_z+\Delta(n+1)-n]^{\mu,\nu}_{p,q,g}\,z^n\,\phi_{\Delta}(z),
		\end{eqnarray*}
		obeying  the commutation relations \eqref{p1a}  and \eqref{p1b} with 
		\begin{equation}
		{X_{\Delta}}=(pq)^{n}\frac{[n(\Delta-1)]^{\mu,\nu}_{p,q,g}[\Delta m]^{\mu,\nu}_{p,q,g}}{[n]^{\mu,\nu}_{p,q,g}[m]^{\mu,\nu}_{p,q,g}},\quad  {\tilde{X}_{\Delta}}=X_{\Delta}
		\end{equation}
		and 
		\begin{equation}
		{Y_{\Delta}}=\frac{[m(\Delta-1)]^{\mu,\nu}_{p,q,g}[\Delta n]^{\mu,\nu}_{p,q,g}}{(pq)^{-m}[n]^{\mu,\nu}_{p,q,g}[m]^{\mu,\nu}_{p,q,g}},\quad
		{\tilde{Y}_{\Delta}}=\frac{[(m+1)(\Delta -1)]^{\mu,\nu}_{p,q,g}[\Delta n]^{\mu,\nu}_{p,q,g}}{(pq)^{-m-1}[n]^{\mu,\nu}_{p,q,g}[m+1]^{\mu,\nu}_{p,q,g}}. 
		\end{equation}
	\end{enumerate}
\end{remark}
We redefine the parameters $p$ and $q$ as follows:
\begin{eqnarray*}
	\Theta=\sqrt{\frac{p}{q}}\quad\mbox{and}\quad \lambda=\sqrt{\frac{1}{pq}},
\end{eqnarray*}
with
\begin{equation*}
[n]=\lambda^{1-n}[n]_{\Theta}\quad\mbox{and}\quad [n]_{\Theta}=\big(\Theta-\Theta^{-1}\big)^{-1}\big(\Theta^n-\Theta^{-n}\big).
\end{equation*}
Then, the algebra \eqref{p1a}) and \eqref{p1b}, may be mapped to the $\mathcal{R}(\Theta)$-super Virasoro algebra which is the generalization of the $\Theta$-Virasoro algebra presented  by Chaichian  et {\it al} \cite{CILPP}.
\begin{proposition}
	The super generators given as follows:
	\begin{eqnarray*}
		\mathbb{L}_n\phi_{\Delta}(z)=\lambda^{N_{\Delta}-1}\,\mathcal{L}^{(\Delta)}_n\phi_{\Delta}(z)\quad\mbox{and}\quad \mathbb{G}_n\phi_{\Delta}(z)=\lambda^{N_{\Delta}-1}\,\mathcal{G}^{(\Delta)}_n\phi_{\Delta}(z) 
	\end{eqnarray*} 
	satisfies the following commutation relations:
	\begin{align*}
	[\mathbb{L}_n , \mathbb{L}_m]_{X,Y}
	&= K\left\lbrace \Theta^{N_{\Delta}}(X\Theta^{-n}-Y\Theta^{-m}) - \Theta^{N_{\Delta}}(X\Theta^{-n}-Y\Theta^{-m}) \right\rbrace \mathbb{L}_{n+m},\\
	[\mathbb{L}_n , \mathbb{G}_m]_{\tilde{X},\tilde{Y}}
	&= K\left\lbrace \Theta^{N_{\Delta}}(\tilde{X}\Theta^{-n}-\tilde{Y}\Theta^{-m}) - \Theta^{N_{\Delta}}(\tilde{X}\Theta^{-n}-\tilde{Y}\Theta^{-m}) \right\rbrace \mathbb{G}_{n+m},
	\end{align*}
	and other commutation relations are zeros, where 
	\begin{eqnarray*}
		\left \{
		\begin{array}{l}
			{X}=\frac{[n(\Delta-1)]_{\Theta}[\Delta m]_{\Theta}}{[n]_{\Theta}[m]_{\Theta}}\frac{f^{\Delta}_{n+m}(\Theta)}{f^{\Delta}_n(\Theta)f^{\Delta}_m(\Theta)}  \\{\tilde{X}}=X\\
			{Y}=\frac{[m(\Delta-1)]_{\Theta}[\Delta n]_{\Theta}}{[n]_{\Theta}[m]_{\Theta}}\frac{f^{\Delta}_{n+m}(\Theta)}{f^{\Delta}_n(\Theta)f^{\Delta}_m(\Theta)}\\
			{\tilde{Y}}=\frac{[(m+1)(\Delta -1)]_{\Theta}[\Delta n]_{\Theta}}{[n]_{\Theta}[m+1]_{\Theta}}\frac{f^{\Delta}_{n+m+1}(\Theta)}{f^{\Delta}_n(\Theta)f^{\Delta}_{m+1}(\Theta)}\\
			K=\big(\Theta-\Theta^{-1}\big)^{-1}.
		\end{array}
		\right. 
	\end{eqnarray*}
\end{proposition}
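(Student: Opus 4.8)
The plan is to obtain the $\Theta$-algebra by transporting the relations \eqref{p1a}--\eqref{p1b} through the invertible rescaling $\mathbb{L}_n=\lambda^{N_\Delta-1}\mathcal{L}^{(\Delta)}_n$, $\mathbb{G}_n=\lambda^{N_\Delta-1}\mathcal{G}^{(\Delta)}_n$, together with the change of variables $\Theta=\sqrt{p/q}$, $\lambda=1/\sqrt{pq}$. First I would record the dictionary this substitution produces: inverting the definitions gives $p=\lambda^{-1}\Theta$ and $q=\lambda^{-1}\Theta^{-1}$, hence $p^{k}=\lambda^{-k}\Theta^{k}$, $q^{k}=\lambda^{-k}\Theta^{-k}$, $pq=\lambda^{-2}$, and $(p-q)^{-1}=\lambda(\Theta-\Theta^{-1})^{-1}=\lambda K$. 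From these one reads off the number identity $[k]_{p,q}=\lambda^{1-k}[k]_{\Theta}$ already quoted, and the analogous rescaling $f^{\Delta}_{k}(p,q)=\lambda^{\Delta(k+1)-1}f^{\Delta}_{k}(\Theta)$ of the weight functions, since $f^{\Delta}_{k}$ is built from the ratio $\frac{p-q}{p^{\Delta(k+1)}-q^{\Delta(k+1)}}\mathcal{R}(p^{\Delta(k+1)},q^{\Delta(k+1)})$, whose $\mathcal{R}$-value is simply carried along unchanged.

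Next I would isolate the operator identity that drives everything: because the factor $z^{n}$ in $\mathcal{L}^{(\Delta)}_n$ (and in $\mathcal{G}^{(\Delta)}_n$, where $\theta$ is a spectator) raises the $N_\Delta=z\partial_z+\Delta$ eigenvalue by $n$, one has $N_\Delta\,\mathcal{L}^{(\Delta)}_n=\mathcal{L}^{(\Delta)}_n(N_\Delta+n)$, equivalently $\mathcal{L}^{(\Delta)}_n\lambda^{N_\Delta-1}=\lambda^{-n}\lambda^{N_\Delta-1}\mathcal{L}^{(\Delta)}_n$, and likewise for $\mathcal{G}^{(\Delta)}_n$. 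Pushing the rescaling factors through a product then gives $\mathbb{L}_n\mathbb{L}_m=\lambda^{-n}\lambda^{2N_\Delta-2}\mathcal{L}^{(\Delta)}_n\mathcal{L}^{(\Delta)}_m$ and $\mathbb{L}_m\mathbb{L}_n=\lambda^{-m}\lambda^{2N_\Delta-2}\mathcal{L}^{(\Delta)}_m\mathcal{L}^{(\Delta)}_n$, so that
\[
[\mathbb{L}_n,\mathbb{L}_m]_{X,Y}=\lambda^{2N_\Delta-2}\big(X\lambda^{-n}\mathcal{L}^{(\Delta)}_n\mathcal{L}^{(\Delta)}_m-Y\lambda^{-m}\mathcal{L}^{(\Delta)}_m\mathcal{L}^{(\Delta)}_n\big).
\]

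The crucial bookkeeping step is then to verify that, with $X,Y$ defined by the stated $\Theta$-expressions, the quotients $X\lambda^{-n}/X_\Delta$ and $Y\lambda^{-m}/Y_\Delta$ coincide: using the identities of the first paragraph, the $(pq)^{n}=\lambda^{-2n}$ prefactor, the number ratios and the $f$-ratio conspire so that both equal one and the same scalar $\mu=\lambda^{(\Delta-1)(n+m+1)}$. This lets me factor out $\mu$ and invoke \eqref{p1a}. Substituting $\mathcal{L}^{(\Delta)}_{n+m}=\lambda^{1-N_\Delta}\mathbb{L}_{n+m}$ and merging the $N_\Delta$-functions gives $\lambda^{2N_\Delta-2}p^{N_\Delta}\lambda^{1-N_\Delta}=\lambda^{-1}\Theta^{N_\Delta}$ and $\lambda^{2N_\Delta-2}q^{N_\Delta}\lambda^{1-N_\Delta}=\lambda^{-1}\Theta^{-N_\Delta}$, while $(p-q)^{-1}=\lambda K$ supplies the normalization. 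The $\lambda^{\pm1}$ cancel, and $\mu$ cancels against the $1/\mu$ hidden in $X_\Delta p^{-n}=(X/\mu)\Theta^{-n}$, $Y_\Delta q^{-m}=(Y/\mu)\Theta^{m}$, etc., leaving precisely $K\{\Theta^{N_\Delta}(X\Theta^{-n}-Y\Theta^{-m})-\Theta^{-N_\Delta}(X\Theta^{n}-Y\Theta^{m})\}\mathbb{L}_{n+m}$, i.e.\ the asserted $\Theta$-form.

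Finally the mixed relation \eqref{p1b} is transported in the same way, since $\mathcal{G}^{(\Delta)}_m$ carries the identical $N_\Delta$-grading as $\mathcal{L}^{(\Delta)}_m$; the only new input is the parallel consistency check for $\tilde X,\tilde Y$ and the $m\mapsto m+1$ index shift carried by $\tilde Y$. I expect the main obstacle to be exactly this $\lambda$-power accounting: confirming that the prefactors distinguishing $X_\Delta,Y_\Delta,\tilde X_\Delta,\tilde Y_\Delta$ from $X,Y,\tilde X,\tilde Y$ line up with the grading shifts $\lambda^{-n},\lambda^{-m}$ so that a single scalar factors out and cancels, and in particular that the two-parameter $\mathcal{R}(p,q)$-number pieces inside the $f$-functions reorganize into their $\mathcal{R}(\Theta)$ counterparts. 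Everything else is the routine algebra hidden behind the ``straightforward calculation'' of the preceding proposition.
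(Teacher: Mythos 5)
Your strategy --- transporting \eqref{p1a}--\eqref{p1b} through the rescaling $\mathbb{L}_n=\lambda^{N_\Delta-1}\mathcal{L}^{(\Delta)}_n$, $\mathbb{G}_n=\lambda^{N_\Delta-1}\mathcal{G}^{(\Delta)}_n$ with the dictionary $p=\lambda^{-1}\Theta$, $q=\lambda^{-1}\Theta^{-1}$ --- is the natural one, and it is more than the paper itself provides, since the proposition is stated there with no proof at all. Your bosonic computation is correct: the grading identity $\mathcal{L}^{(\Delta)}_n\lambda^{N_\Delta-1}=\lambda^{-n}\lambda^{N_\Delta-1}\mathcal{L}^{(\Delta)}_n$, the conversions $[k]_{p,q}=\lambda^{1-k}[k]_{\Theta}$ and $f^{\Delta}_k(p,q)=\lambda^{\Delta(k+1)-1}f^{\Delta}_k(\Theta)$, and the common scalar $\mu=\lambda^{(\Delta-1)(n+m+1)}$ for both $X\lambda^{-n}/X_{\Delta}$ and $Y\lambda^{-m}/Y_{\Delta}$ all check out; you also correctly (if silently) repair the typo in the statement, whose second brace term must read $\Theta^{-N_\Delta}(X\Theta^{n}-Y\Theta^{m})$, since as printed the right-hand side is identically zero.

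The gap is in the fermionic half, precisely the ``parallel consistency check'' you defer as routine: with the printed coefficients it fails. Repeating your own bookkeeping for $\tilde{Y}$, the prefactor $(pq)^{m+1}$ contributes $\lambda^{-2m-2}$, the number ratio $[(m+1)(\Delta-1)]_{p,q}[\Delta n]_{p,q}/([n]_{p,q}[m+1]_{p,q})$ contributes $\lambda^{2m+n+2-\Delta(n+m+1)}$, and the $f$-ratio again contributes $\lambda^{1-\Delta}$, so that $\tilde{Y}_{\Delta}=\lambda^{n+1-\Delta(n+m+2)}\tilde{Y}$ and hence $\tilde{Y}\lambda^{-m}=\lambda^{(\Delta-1)(n+m+1)+\Delta}\,\tilde{Y}_{\Delta}$, whereas $\tilde{X}\lambda^{-n}=\lambda^{(\Delta-1)(n+m+1)}\,\tilde{X}_{\Delta}$. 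The two scalars differ by $\lambda^{\Delta}$, so no single $\tilde{\mu}$ factors out of $[\mathbb{L}_n,\mathbb{G}_m]_{\tilde{X},\tilde{Y}}$, and you cannot invoke \eqref{p1b}: that relation is an identity only for the specific pair $(\tilde{X}_{\Delta},\tilde{Y}_{\Delta})$, and it does not persist under a relative rescaling of its two coefficients, because $\mathcal{L}^{(\Delta)}_n\mathcal{G}^{(\Delta)}_m$ and $\mathcal{G}^{(\Delta)}_m\mathcal{L}^{(\Delta)}_n$ are independent operators. Since $\Delta\neq 0,1$ and $\lambda\neq 1$, this is a genuine obstruction rather than bookkeeping: either $\tilde{Y}$ in the statement must absorb an extra factor $\lambda^{-\Delta}$ (equivalently $\tilde{Y}_{\Delta}$ must be amended), or the mixed relation as printed is false. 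A complete proof has to detect and resolve this discrepancy; your appeal to ``routine algebra'' is exactly where the argument breaks.
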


Now, we determine the $\mathcal{R}(p,q)$-conformal super Witt algebra. The result is presented in the lemma bellow:
\begin{lemma}
	The $\mathcal{R}(p,q)$-super Witt algebra with conformal dimension $(\Delta\neq 0,1 )$ is driven by  the following commutation  relations:
	\begin{align}
	\big[\mathcal{L}^{(\Delta)}_n,\mathcal{L}^{(\Delta)}_{m}\big]_{x, y}\phi_{\Delta}(z)&=[m-n]_{\mathcal{R}(p,q)}\,\mathcal{L}^{(\Delta)}_{n+m}\phi_{\Delta}(z),\label{crochet1q}\\
	\big[\mathcal{L}^{(\Delta)}_{n},\mathcal{G}^{(\Delta)}_{m}\big]_{\tilde{x}, \tilde{y}}\phi_{\Delta}(z)&= \big([m+1]_{\mathcal{R}(p,q)}-[n]_{\mathcal{R}(p,q)}\big)\,\mathcal{G}^{(\Delta)}_{n+m}\phi_{\Delta}(z),\label{crochet2}\\
	\big[\mathcal{G}^{(\Delta)}_{n},\mathcal{G}^{(\Delta)}_{m}\big]&=0,
	\end{align}
	where
	\begin{eqnarray}\label{coeff}
	\left \{
	\begin{array}{l}
	x=(p-q)\,[n-m]_{\mathcal{R}(p,q)}\,\chi_{nm}(p,q), \\
	y=(p-q)\,[n-m]_{\mathcal{R}(p,q)}\,\chi_{mn}(p,q),\\
	\tilde{x}=(p-q)\,\big([m+1]_{\mathcal{R}(p,q)}-[n]_{\mathcal{R}(p,q)}\big)\,\varLambda_{nm}(p,q), \\
	\tilde{y}=(p-q)\,\big([m+1]_{\mathcal{R}(p,q)}-[n]_{\mathcal{R}(p,q)}\big)\,\varLambda_{mn}(p,q),
	\end{array}
	\right .
	\end{eqnarray}
	\begin{align*}\chi_{mn}(p,q)&=\bigg\lbrace p^{N_{\Delta}}(p^{-n}-\frac{[m(\Delta-1)]_{p,q}[\Delta n]_{p,q}}{[n(\Delta-1)]_{p,q}[\Delta m]_{p,q}}\frac{q^{m-n}}{p^{n}})\nonumber\\&- q^{N_{\Delta}}(q^{-n}-\frac{[m(\Delta-1)]_{p,q}[\Delta n]_{p,q}}{[n(\Delta-1)]_{p,q}[\Delta m]_{p,q}}\frac{p^{m-n}}{q^{n}}) \bigg\rbrace^{-1}
	\end{align*}
	and 
	\begin{align*}
	\varLambda_{mn}(p,q)&=\bigg\lbrace p^{N_{\Delta}}\big(p^{-n}-\frac{[(m+1)(\Delta -1)]_{p,q}[\Delta n]_{p,q}[m]_{p,q}}{[m+1]_{p,q}[n(\Delta-1)]_{p,q}[\Delta m]_{p,q}}\frac{pq^{m+1}}{(pq)^{n
	}}\big)\nonumber\\&- q^{N_{\Delta}}(q^{-n}-\frac{[(m+1)(\Delta -1)]_{p,q}[\Delta n]_{p,q}[m]_{p,q}}{[m+1]_{p,q}[n(\Delta-1)]_{p,q}[\Delta m]_{p,q}}\frac{qp^{m+1}}{(pq)^{n
	}}) \bigg\rbrace^{-1}.
	\end{align*}
\end{lemma}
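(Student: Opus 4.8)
The plan is to deduce the three relations from the deformed commutators of the Proposition by a single left multiplication that renormalises the operator-valued structure constant into the scalar $\mathcal{R}(p,q)$-numbers appearing on the right-hand sides. First I would record the action of the generators \eqref{scg} on a monomial: a direct computation gives
\begin{equation*}
\mathcal{L}^{(\Delta)}_n z^s=-[\,s+\Delta(n+1)\,]_{\mathcal{R}(p,q)}\,z^{s+n},\qquad \mathcal{G}^{(\Delta)}_n z^s=-\theta\,[\,s+\Delta(n+1)\,]_{\mathcal{R}(p,q)}\,z^{s+n}.
\end{equation*}
Hence each product $\mathcal{L}^{(\Delta)}_n\mathcal{L}^{(\Delta)}_m$, $\mathcal{L}^{(\Delta)}_m\mathcal{L}^{(\Delta)}_n$, $\mathcal{L}^{(\Delta)}_n\mathcal{G}^{(\Delta)}_m$, $\mathcal{G}^{(\Delta)}_m\mathcal{L}^{(\Delta)}_n$ sends $z^s$ (resp.\ $\theta z^s$) to one scalar multiple of $z^{s+n+m}$ (resp.\ $\theta z^{s+n+m}$), the scalar being a product of two $\mathcal{R}(p,q)$-numbers at shifted arguments. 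The last relation is then immediate: since every $\mathcal{G}^{(\Delta)}_k$ carries an overall factor $\theta$ and $\theta^2=0$, both $\mathcal{G}^{(\Delta)}_n\mathcal{G}^{(\Delta)}_m$ and $\mathcal{G}^{(\Delta)}_m\mathcal{G}^{(\Delta)}_n$ vanish, whence $\big[\mathcal{G}^{(\Delta)}_n,\mathcal{G}^{(\Delta)}_m\big]=0$.

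For \eqref{crochet1q} I would read \eqref{p1a} as the operator identity
\begin{equation*}
X_{\Delta}\,\mathcal{L}^{(\Delta)}_n\mathcal{L}^{(\Delta)}_m-Y_{\Delta}\,\mathcal{L}^{(\Delta)}_m\mathcal{L}^{(\Delta)}_n=C_{nm}\,\mathcal{L}^{(\Delta)}_{n+m},
\end{equation*}
where $C_{nm}=(p-q)^{-1}\bigl\{p^{N_{\Delta}}(X_{\Delta}p^{-n}-Y_{\Delta}p^{-m})-q^{N_{\Delta}}(X_{\Delta}q^{-n}-Y_{\Delta}q^{-m})\bigr\}$ depends on $N_{\Delta}$ alone. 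Pulling $X_{\Delta}$ out of $C_{nm}$ and using $Y_{\Delta}X_{\Delta}^{-1}=(pq)^{m-n}[m(\Delta-1)]_{p,q}[\Delta n]_{p,q}\big([n(\Delta-1)]_{p,q}[\Delta m]_{p,q}\big)^{-1}$ together with $(pq)^{m-n}p^{-m}=q^{m-n}p^{-n}$ and $(pq)^{m-n}q^{-m}=p^{m-n}q^{-n}$, one recognises the operators $\chi_{nm}(p,q),\chi_{mn}(p,q)$ of \eqref{coeff} as scalar multiples of $C_{nm}^{-1}$. Multiplying the identity above on the left by $[m-n]_{\mathcal{R}(p,q)}\,C_{nm}^{-1}$ — which is harmless because $C_{nm}$ is diagonal on the monomial basis and reads off the fixed degree $n+m$ carried by $\mathcal{L}^{(\Delta)}_{n+m}$ — collapses the right-hand side to $[m-n]_{\mathcal{R}(p,q)}\mathcal{L}^{(\Delta)}_{n+m}$, while the left-hand side becomes the deformed bracket $\big[\mathcal{L}^{(\Delta)}_n,\mathcal{L}^{(\Delta)}_m\big]_{x,y}$ with the coefficients $x,y$ of \eqref{coeff}.

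The fermionic relation \eqref{crochet2} is treated identically, now starting from \eqref{p1b}. Because $\mathcal{L}^{(\Delta)}_n\mathcal{G}^{(\Delta)}_m$ and $\mathcal{G}^{(\Delta)}_m\mathcal{L}^{(\Delta)}_n$ yield the same two $\mathcal{R}(p,q)$-number scalars as in the bosonic case, only decorated by $\theta$, the relation \eqref{p1b} has the form $\tilde X_{\Delta}\mathcal{L}^{(\Delta)}_n\mathcal{G}^{(\Delta)}_m-\tilde Y_{\Delta}\mathcal{G}^{(\Delta)}_m\mathcal{L}^{(\Delta)}_n=\tilde C_{nm}\mathcal{G}^{(\Delta)}_{n+m}$ with $\tilde C_{nm}$ the analogue of $C_{nm}$ built from $\tilde X_{\Delta},\tilde Y_{\Delta}$; the operators $\varLambda_{nm}(p,q),\varLambda_{mn}(p,q)$ are then scalar multiples of $\tilde C_{nm}^{-1}$, and left multiplication by $\big([m+1]_{\mathcal{R}(p,q)}-[n]_{\mathcal{R}(p,q)}\big)\tilde C_{nm}^{-1}$ produces \eqref{crochet2} with the coefficients $\tilde x,\tilde y$. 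Throughout, the standing hypothesis $\Delta\neq0,1$ is exactly what keeps $[n(\Delta-1)]_{p,q}$ and $[\Delta m]_{p,q}$ (and their fermionic partners) nonzero, so that $X_{\Delta},Y_{\Delta},\tilde X_{\Delta},\tilde Y_{\Delta}$ and the inverses $C_{nm}^{-1},\tilde C_{nm}^{-1}$ all make sense.

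The main obstacle is purely computational and lives in the second step: one must check that the two-term product of shifted $\mathcal{R}(p,q)$-numbers coming from $\mathcal{L}^{(\Delta)}_n\mathcal{L}^{(\Delta)}_m$ and $\mathcal{L}^{(\Delta)}_m\mathcal{L}^{(\Delta)}_n$ really factors through the single number $[\,s+\Delta(n+m+1)\,]_{\mathcal{R}(p,q)}$ times an $s$-independent function of $N_{\Delta}$, and then match that function, after stripping the prefactors $(p-q)$ and the powers $(pq)^{\bullet}$, with the explicit $\chi_{nm},\chi_{mn},\varLambda_{nm},\varLambda_{mn}$ of \eqref{coeff}. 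The factorisation itself is already secured by the Proposition, so once \eqref{p1a}–\eqref{p1b} are granted the remaining task is the careful bookkeeping of these coefficients and the verification that no spurious dependence on the monomial degree $s$ survives.
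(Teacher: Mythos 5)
Your proposal takes essentially the same route as the paper: the paper's own proof consists precisely of invoking the Proposition's relations \eqref{p1a}--\eqref{p1b}, factoring the scalar $X_{\Delta}$ (resp. $\tilde{X}_{\Delta}$) out of the operator-valued coefficient, and defining $\chi_{mn}(p,q)$ and $\varLambda_{mn}(p,q)$ as the inverses of what remains, so that rescaling the bracket coefficients turns the structure constant into $[m-n]_{\mathcal{R}(p,q)}$ (resp. $[m+1]_{\mathcal{R}(p,q)}-[n]_{\mathcal{R}(p,q)}$). Your extra details — the action on monomials, the computation of $Y_{\Delta}/X_{\Delta}$, the $\theta^{2}=0$ argument for $\big[\mathcal{G}^{(\Delta)}_{n},\mathcal{G}^{(\Delta)}_{m}\big]=0$, and the remark that left multiplication by a function of $N_{\Delta}$ is legitimate because it is diagonal on the monomial basis — only make explicit what the paper leaves implicit.
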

\begin{proof}
	We use the algebraic structure \eqref{p1a}, \eqref{p1b} and setting 
	\begin{equation*}
	\chi_{mn}(p,q)=\left\lbrace p^{N_{\Delta}}(p^{-n}-\frac{Y_{\Delta}}{X_{\Delta}}p^{-m}) - q^{N_{\Delta}}(q^{-n}-\frac{Y_{\Delta}}{X_{\Delta}}q^{-m}) \right\rbrace^{-1}
	\end{equation*}
	and 
	\begin{equation*}
	\varLambda_{mn}(p,q)=\left\lbrace p^{N_{\Delta}}(p^{-n}-\frac{\tilde{Y}_{\Delta}}{\tilde{X}_{\Delta}}p^{-m}) - q^{N_{\Delta}}(q^{-n}-\frac{\tilde{Y}_{\Delta}}{\tilde{X}_{\Delta}}q^{-m}) \right\rbrace^{-1}
	\end{equation*}
\end{proof}
Setting \begin{eqnarray}\label{rho}
\rho(\mathcal{L}^{(\Delta)}_{n})={[2\,n]_{\mathcal{R}(p,q)}\over [n]_{\mathcal{R}(p,q)}}\,\mathcal{L}^{(\Delta)}_{n}\quad\mbox{and}\quad \rho(\mathcal{G}^{(\Delta)}_{n})={[2(n+1)]_{\mathcal{R}(p,q)}\over [n+1]_{\mathcal{R}(p,q)}}\,\mathcal{G}^{(\Delta)}_{n},
\end{eqnarray}
then  the $\mathcal{R}(p,q)$-deformed super Jacobi identity is presented by the following relation:
\begin{eqnarray}
\sum_{(i,j,l)\in\mathcal{C}(n,m,k)}\,(-1)^{|B_i||B_l|}\Big[\rho(B_i),\Big[B_j,B_l\Big]_{{\mathcal R}(p,q)}\Big]_{{\mathcal R}(p,q)}=0,
\end{eqnarray} 
where  $\mathcal{C}(n,m,k)$ denotes the cyclic permutation of $(n,m,k)$ and the symbol $|B|$ appearing in the exponent of $(-1)$ is to be understood
as the parity of $B.$ 

\begin{proposition}
	The $\mathcal{R}(p,q)$-super Virasoro algebra with conformal dimension $(\Delta \neq 0,1)$ is generated by the following commutation relations:
	\begin{align}\label{J21}
	\,[L^{(\Delta)}_n, L^{(\Delta)}_m]_{x,y}&=[n-m]_{\mathcal{R}(p,q)}\,L^{(\Delta)}_{n+m} + \delta_{n+m,0}\,C^{\Delta}_n(p,q),\\
	\,[L^{(\Delta)}_n, G^{(\Delta)}_m]_{\hat{x},\hat{y}}&=\big([m+1]_{\mathcal{R}(p,q)}-[n]_{\mathcal{R}(p,q)}\big)G^{(\Delta)}_{n+m} + \delta_{n+m+1,0}\,C^{\Delta}_n(p,q),\\
	\,[G^{(\Delta)}_n, G^{(\Delta)}_m]&=0,
	\end{align}
	where $C^{\Delta}_n(p,q)$ is the central charge provided by \cite{HM}:
	\begin{equation}\label{ce1}
	C^{\Delta}_n(p,q) = C(p,q)\frac{[n]_{\mathcal{R}(p,q)}}{[2n]_{\mathcal{R}(p,q)}}(pq)^{\frac{N_{\Delta}}{2}+n}[n-1]_{\mathcal{R}(p,q)}[n]_{\mathcal{R}(p,q)}[n+1]_{\mathcal{R}(p,q)},
	\end{equation}
	$x,$ $y,$ $\tilde{x},$ $\tilde{y}$ are given by \eqref{coeff},
	and $C(p,q)$ is a function of $(p,q).$ 
\end{proposition}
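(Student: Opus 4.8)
The plan is to promote the $\mathcal{R}(p,q)$-super Witt algebra of the preceding Lemma to a centrally extended algebra by adjoining a central operator to each nonzero bracket, and then to fix its coefficient by means of the deformed super Jacobi identity. Concretely, I would keep the deformed structure constants $x,y,\hat{x},\hat{y}$ (the $\chi_{mn}$ and $\varLambda_{mn}$ of \eqref{coeff}) carried over verbatim from the Lemma and posit the most general admissible extension
\[
\begin{aligned}
{}[L^{(\Delta)}_n, L^{(\Delta)}_m]_{x,y} &= [n-m]_{\mathcal{R}(p,q)}\, L^{(\Delta)}_{n+m} + c_{n,m},\\
[L^{(\Delta)}_n, G^{(\Delta)}_m]_{\hat{x},\hat{y}} &= \big([m+1]_{\mathcal{R}(p,q)} - [n]_{\mathcal{R}(p,q)}\big) G^{(\Delta)}_{n+m} + \tilde{c}_{n,m},
\end{aligned}
\]
where $c_{n,m}$ and $\tilde{c}_{n,m}$ are required to commute with every generator, while $[G^{(\Delta)}_n, G^{(\Delta)}_m]$ stays uncharged.

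The first step is a grading argument. Under the $\mathbb{Z}$-grading induced by $N_{\Delta}=z\partial_z+\Delta$, each $L^{(\Delta)}_n$ and $G^{(\Delta)}_n$ shifts degree by its mode index, whereas a central element must sit in degree zero. This forces $c_{n,m}$ to vanish unless $n+m=0$ and $\tilde{c}_{n,m}$ to vanish unless $n+m+1=0$, yielding precisely the factors $\delta_{n+m,0}$ and $\delta_{n+m+1,0}$ of the statement. Writing $c_{n,-n}=C^{\Delta}_n(p,q)$, the graded antisymmetry of the deformed bracket $[\,\cdot\,,\,\cdot\,]_{x,y}$ (exchange of indices together with exchange of the two deformation parameters) then constrains $C^{\Delta}_n$ to obey the cocycle parity that makes it vanish for $n\in\{-1,0,1\}$.

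Next I would substitute the extended relations into the $\mathcal{R}(p,q)$-deformed super Jacobi identity assembled from the rescaling $\rho$ of \eqref{rho} and the cyclic sum introduced above. Evaluated on the triple $(L^{(\Delta)}_n,L^{(\Delta)}_m,L^{(\Delta)}_k)$ with $n+m+k=0$, the non-central part must cancel (this is guaranteed by the Lemma), and the surviving central terms produce a functional recursion linking $C^{\Delta}_n$, $C^{\Delta}_m$ and $C^{\Delta}_{n+m}$ through the coefficients $[n-m]_{\mathcal{R}(p,q)}$ and the $\rho$-weights $[2n]_{\mathcal{R}(p,q)}/[n]_{\mathcal{R}(p,q)}$. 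The mixed triples carrying one and two $G^{(\Delta)}$ insertions confirm that $\tilde{c}_{n,m}$ is governed by the same central function. Solving this recursion --- equivalently, checking that the closed form \eqref{ce1} inherited from the bosonic computation of \cite{HM} satisfies it --- yields the stated $C^{\Delta}_n(p,q)$.

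The main obstacle is this recursion step. Because the relevant brackets are the two-parameter deformed commutators $[\,\cdot\,,\,\cdot\,]_{x,y}$ rather than ordinary commutators, and because the Jacobi identity is twisted by $\rho$, the cyclic sum does not collapse as in the classical case; its vanishing has to be verified against the explicit $\chi_{mn}$ and $\varLambda_{mn}$ of \eqref{coeff}. Once those cancellations are in hand, matching the residual central piece to the product $(pq)^{N_{\Delta}/2+n}[n-1]_{\mathcal{R}(p,q)}[n]_{\mathcal{R}(p,q)}[n+1]_{\mathcal{R}(p,q)}$ is a direct, if lengthy, identification of deformed-number products, and fixing the overall scalar $C(p,q)$ completes the proof.
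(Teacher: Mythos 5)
Your proposal is sound in outline, but you should know that the paper itself supplies no proof of this proposition: the bracket relations are carried over from the preceding Lemma (the super Witt algebra with the coefficients \eqref{coeff}), the central term \eqref{ce1} is simply imported from \cite{HM}, and the $\rho$-twisted super Jacobi identity displayed just before the proposition is left as the implicit consistency mechanism, never evaluated. Your reconstruction --- positing general central terms $c_{n,m}$, $\tilde{c}_{n,m}$, localizing them by a grading argument, and fixing $C^{\Delta}_n$ through the recursion produced by the twisted Jacobi identity --- is therefore strictly more than what the paper does, and it is the standard route followed in \cite{CILPP,CJ,HM} for the bosonic case; what it buys is an actual derivation of \eqref{ce1} rather than a citation.

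Two soft spots remain in your argument. First, a pure degree count gives $\delta_{n+m,0}$ for \emph{both} brackets, since $[L^{(\Delta)}_n,G^{(\Delta)}_m]$ is proportional to $G^{(\Delta)}_{n+m}$; to obtain $\delta_{n+m+1,0}$ you must use the fact that the fermionic generator effectively carries mode index $m+1$, as is visible in the structure constant $[m+1]_{\mathcal{R}(p,q)}-[n]_{\mathcal{R}(p,q)}$ and in $\tilde{Y}_{\Delta}$, which involves $f^{\Delta}_{m+1}$ rather than $f^{\Delta}_{m}$ --- you assert the right delta but your stated justification does not yield it. Second, the heart of the matter, namely that the non-central parts of the $\rho$-twisted cyclic sum cancel against the explicit $\chi_{mn}$ and $\varLambda_{mn}$, and that the closed form \eqref{ce1} with its prefactor $[n]_{\mathcal{R}(p,q)}/[2n]_{\mathcal{R}(p,q)}$ (precisely the inverse of the $\rho$-weight) solves the resulting recursion, is acknowledged but deferred. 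Since neither you nor the paper carries out that computation, your text remains a correct strategy rather than a complete proof --- but as a strategy it is the right one, and more informative than the paper's own treatment.
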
 
\begin{remark}
	The  particular cases are also attract our attention. We deduce the conformal super Virasoro induced by quantum algebras in the literature.
	\begin{enumerate}	
		\item[(a)] The conformal super Virasoro algebra corresponding to the {\bf Biedenharn-Macfarlane algebra} \cite{BC,M}:
		\begin{align*}
		\,[L^{(\Delta)}_n, L^{(\Delta)}_m]_{x,y}&=[n-m]_{q}\,L^{(\Delta)}_{n+m} + \delta_{n+m,0}\,C^{\Delta}_n(q),\\
		\,[L^{(\Delta)}_n, G^{(\Delta)}_m]_{\hat{x},\hat{y}}&=\big([m+1]_{q}-[n]_{q}\big)G^{(\Delta)}_{n+m} + \delta_{n+m+1,0}\,C^{\Delta}_n(q),
		\end{align*}
		and other commutator is zero, with 
		\begin{eqnarray*}
			\left \{
			\begin{array}{l}
				x=\big(q^{n-m}-q^{m-n}\big)\,\chi_{nm}(q), \\
				y=\big(q^{n-m}-q^{m-n}\big)\,\chi_{mn}(q),\\
				\tilde{x}=\big(q^{m+1}-q^{n}+q^{-n}-q^{-m-1}\big)\,\varLambda_{nm}(q), \\
				\tilde{y}=\big(q^{n-m}-q^{m-n}\big)\,\chi_{mn}(q),\\
				\tilde{x}=\big(q^{m+1}-q^{n}+q^{-n}-q^{-m-1}\big)\,\varLambda_{mn}(q),
			\end{array}
			\right .
		\end{eqnarray*}
		\begin{equation*}
		C^{\Delta}_n(q) = C(q)(q^n + q^{-n})^{-1}[n-1]_{q}[n]_{q}[n+1]_{q}.
		\end{equation*}
		\item[(b)] The conformal super Virasoro algebra from the \textbf{Jagannathan-Srinivasa algebra} \cite{JS} is driven by:
		\begin{align*}
		\,[L^{(\Delta)}_n, L^{(\Delta)}_m]_{x,y}&=[n-m]_{p,q}\,L^{(\Delta)}_{n+m} + \delta_{n+m,0}\,C^{\Delta}_n(p,q),\\
		\,[L^{(\Delta)}_n, G^{(\Delta)}_m]_{\hat{x},\hat{y}}&=\big([m+1]_{p,q}-[n]_{p,q}\big)G^{(\Delta)}_{n+m} + \delta_{n+m+1,0}\,C^{\Delta}_n(p,q),
		\end{align*}
		and other commutator is zero, with 
		\begin{eqnarray*}
			\left \{
			\begin{array}{l}
				x=\big(p^{n-m}-q^{n-m}\big)\,\chi_{nm}(p,q), \\
				y=\big(p^{n-m}-q^{n-m}\big)\,\chi_{mn}(p,q),\\
				\tilde{x}=\big(p^{m+1}-p^{n}+q^{n}-q^{m+1}\big)\,\varLambda_{nm}(p,q), \\
				\tilde{y}=\big(p^{m+1}-p^{n}+q^{n}-q^{m+1}\big)\,\varLambda_{mn}(p,q),
			\end{array}
			\right .
		\end{eqnarray*}
		\begin{equation}
		C^{\Delta}_n(p,q) = C(p,q)(pq)^{\frac{N_{\Delta}}{2}+n}(p^n + q^n)^{-1}[n-1]_{p,q}[n]_{p,q}[n+1]_{p,q}.
		\end{equation}
		\item The conformal super Virasoro algebra induced by the  \textbf{Chakrabarti- Jagannathan algebra}\cite{Chakrabarti&Jagan}: 
		\begin{align*}
		\,[L^{(\Delta)}_n, L^{(\Delta)}_m]_{x,y}&=[n-m]_{p^{-1},q}\,L^{(\Delta)}_{n+m} + \delta_{n+m,0}\,C^{\Delta}_n(p^{-1},q),\\
		\,[L^{(\Delta)}_n, G^{(\Delta)}_m]_{\hat{x},\hat{y}}&=\big([m+1]_{p^{-1},q}-[n]_{p^{-1},q}\big)G^{(\Delta)}_{n+m} + \delta_{n+m+1,0}\,C^{\Delta}_n(p^{-1},q),
		\end{align*}
		and other commutator is zero, with 
		\begin{eqnarray*}
			\left \{
			\begin{array}{l}
				x=(p-q)\,[n-m]_{p^{-1},q}\,\chi_{nm}(p^{-1},q), \\
				y=(p-q)\,[n-m]_{p^{-1},q}\,\chi_{mn}(p^{-1},q),\\
				\tilde{x}=(p-q)\,\big([m+1]_{p^{-1},q}-[n]_{p^{-1},q}\big)\,\varLambda_{nm}(p^{-1},q), \\
				\tilde{y}=(p-q)\,\big([m+1]_{p^{-1},q}-[n]_{p^{-1},q}\big)\,\varLambda_{mn}(p^{-1},q),
			\end{array}
			\right .
		\end{eqnarray*}
		\begin{equation*}
		C^{\Delta}_n(p^{-1},q) = C(p,q)(\frac{q}{p})^{\frac{N_{\Delta}}{2}+n}(p^{-n} + q^n)^{-1}[n-1]_{p^{-1},q}[n]_{p^{-1},q}[n+1]_{p^{-1},q}.
		\end{equation*}
		\item The conformal super Virasoro algebra induced by the  \textbf{Generalized $q$-Quesne algebra} \cite{HN}:
		\begin{align*}
		\,[L^{(\Delta)}_n, L^{(\Delta)}_m]_{x,y}&=[n-m]^Q_{p,q}\,L^{(\Delta)}_{n+m} + \delta_{n+m,0}\,C^{\Delta}_n(p,q),\\
		\,[L^{(\Delta)}_n, G^{(\Delta)}_m]_{\hat{x},\hat{y}}&=\big([m+1]^Q_{p,q}-[n]^Q_{p,q}\big)G^{(\Delta)}_{n+m} + \delta_{n+m+1,0}\,C^{\Delta}_n(p,q),
		\end{align*}
		and other commutator is zero, with 
		\begin{eqnarray*}
			\left \{
			\begin{array}{l}
				x=(p-q)\,[n-m]^Q_{p,q}\,\chi^Q_{nm}(p,q), \\
				y=(p-q)\,[n-m]^Q_{p,q}\,\chi^Q_{mn}(p,q),\\
				\tilde{x}=(p-q)\,\big([m+1]^Q_{p,q}-[n]^Q_{p,q}\big)\,\varLambda^Q_{nm}(p,q), \\
				\tilde{y}=(p-q)\,\big([m+1]^Q_{p,q}-[n]^Q_{p,q}\big)\,\varLambda^Q_{mn}(p,q),
			\end{array}
			\right .
		\end{eqnarray*} 
		\begin{equation}
		C^{\Delta}_n(p,q) = C(p,q)(\frac{p}{q})^{\frac{N_{\Delta}}{2}+n}(p^n + q^{-n})^{-1}[n-1]^Q_{p,q}[n]^Q_{p,q}[n+1]^Q_{p,q} 
		\end{equation}
		\item The conformal super Virasoro algebra generated by the \textbf{Hounkonnou-Ngompe generalized algebra} \cite{HNN}:
		\begin{align*}
		\,[L^{(\Delta)}_n, L^{(\Delta)}_m]_{x,y}&=[n-m]^{\mu,\nu}_{p,q,g}\,L^{(\Delta)}_{n+m} + \delta_{n+m,0}\,C^{\Delta}_n(p,q),\\
		\,[L^{(\Delta)}_n, G^{(\Delta)}_m]_{\hat{x},\hat{y}}&=\big([m+1]^{\mu,\nu}_{p,q,g}-[n]^{\mu,\nu}_{p,q,g}\big)G^{(\Delta)}_{n+m} + \delta_{n+m+1,0}\,C^{\Delta}_n(p,q),
		\end{align*}
		and other commutator is zero, with 
		\begin{eqnarray*}
			\left \{
			\begin{array}{l}
				x=(p-q)\,[n-m]^{\mu,\nu}_{p,q,g}\,\chi^{\mu,\nu}_{nm}(p,q,g), \\
				y=(p-q)\,[n-m]^{\mu,\nu}_{p,q,g}\,\chi^{\mu,\nu}_{mn}(p,q,g),\\
				\tilde{x}=(p-q)\,\big([m+1]^{\mu,\nu}_{p,q,g}-[n]^{\mu,\nu}_{p,q,g}\big)\,\varLambda^{\mu,\nu}_{nm}(p,q,g), \\
				\tilde{y}=(p-q)\,\big([m+1]^{\mu,\nu}_{p,q,g}-[n]^{\mu,\nu}_{p,q,g}\big)\,\varLambda^{\mu,\nu}_{mn}(p,q,g),
			\end{array}
			\right .
		\end{eqnarray*}
		\begin{equation}
		\tilde{C}^R_n(p,q) = C(p,q)(pq)^{\frac{N_{\Delta}}{2}+n}(p^n + q^n)^{-1}[n-1]^{\mu,\nu}_{p, q, g}[n]^{\mu,\nu}_{p, q, g}[n+1]^{\mu,\nu}_{p, q, g}. 
		\end{equation}
	\end{enumerate}
\end{remark}
\subsection{Generalized super Virasoro algebra with  $\Delta=1$}
In this section, we consider the algebra given by the relation \eqref{p1a} and \eqref{p1b} for $\Delta=1,$ and derive the  $\mathcal{R}(p,q)$-deformed super Virasoro algebra. 
Then,  the $\mathcal{R}(p,q)$-deformed super generators
$\mathcal{L}^1_n $ and $\mathcal{G}^1_n :$
\begin{eqnarray*}
	\mathcal{L}^{1}_n\phi(z)=-[z\partial_z+1]_{\mathcal{R}(p,q)}\,z^n\,\phi(z),\\
	\mathcal{G}^{1}_n\phi(z)=-\theta\, [z\partial_z+1]_{\mathcal{R}(p,q)}\,z^n\,\phi(z)
\end{eqnarray*}
defined the superalgebra
obeying the following relations:
\begin{align}\label{d4}
\,[\mathcal{L}^{1}_n , \mathcal{L}^{1}_m]_{\hat{u},\hat{v}}\phi(z)&=[n-m]_{\mathcal{R}(p,q)}q^{N_{1}-m}\mathcal{L}^{1}_{n+m}\phi(z),\\
\,[\mathcal{L}^{1}_n , \mathcal{G}^{1}_m]_{\tilde{u},\tilde{v}}\phi(z)&=\big([m+1]_{\mathcal{R}(p,q)}-[n]_{\mathcal{R}(p,q)}\big)q^{N_{1}-m}\mathcal{G}^{1}_{n+m}\phi(z),\\
\,[\mathcal{G}^{1}_n , \mathcal{G}^{1}_m]&=0\nonumber,
\end{align}
with 
\begin{eqnarray}\label {d5}
\left \{
\begin{array}{l}
\displaystyle
\hat{u}=\hat{\Lambda}_{nm}(p,q),\quad \tilde{u}=\tilde{\Lambda}_{nm}(p,q),\\
\\
\hat{v}=p^{m-n}\hat{\Lambda}_{nm}(p,q),\quad \tilde{v}=p^{m-n}\,\tilde{\Lambda}_{nm}(p,q)\\
,
\end{array}
\right. 
\end{eqnarray}
\begin{eqnarray*}
	\hat{\Lambda}_{nm}(p,q)=\frac{[n-m]_{\mathcal{R}(p,q)}}{[m-n]_{p,q}}\frac{f^{1}_{n+m}(p,q)}{f^{1}_n(p,q)f^{1}_m(p,q)},
\end{eqnarray*}
and 
\begin{eqnarray*}
	\tilde{\Lambda}_{nm}(p,q)=\frac{[m+1]_{\mathcal{R}(p,q)}-[n]_{\mathcal{R}(p,q)}}{[m-n]_{p,q}}\frac{f^{1}_{n+m}(p,q)}{f^{1}_n(p,q)f^{1}_m(p,q)}.
\end{eqnarray*}

Now, we rewrite the $\mathcal{R}(p,q)$-deformed super generators in the following way:
\begin{eqnarray*}
	L^{1}_n\phi(z) =q^{-N_1}\,\mathcal{L}^{1}_n\phi(z)\quad\mbox{and}\quad 
	G^{1}_n\phi(z)=q^{-N_1}\,\mathcal{G}^{1}_n\phi(z).
\end{eqnarray*}
Then, they obeys  the following commutation relations:
\begin{align}
\,[L^1_n , L^1_m]_{u,v}\phi(z)&=[n-m]_{\mathcal{R}(p,q)}L^1_{n+m}\phi(z),\label{d15a}\\
\,[L^{1}_n ,G^{1}_m]_{a,b}\phi(z)&=\big([m+1]_{\mathcal{R}(p,q)}-[n]_{\mathcal{R}(p,q)}\big)\,G^{1}_{n+m}\phi(z),\label{d15b}\\
\,[G^{1}_n ,G^{1}_m]&=0\label{d15c},
\end{align}
where
\begin{align}\label {d16}
u&=q^{m-n}\,\hat{\Lambda}_{nm}(p,q),\quad
v=p^{m-n}\,\hat{\Lambda}_{mn}(p,q),\\
a&=q^{m-n}\,\tilde{\Lambda}_{nm}(p,q),\quad
b=p^{m-n}\,\tilde{\Lambda}_{nm}(p,q),
\end{align}
equivalently,
\begin{align*}
\,[L^1_n , L^1_m]\phi(z)&=[n-m]_{\mathcal{R}(p,q)}\,p^{N_1-m}q^{-N_1+n}\,\hat{\varLambda}_{nm}(p,q)L^1_{n+m}\phi(z)\\
\,[L^1_n , G^1_m]\phi(z)&=\big([m+1]_{\mathcal{R}(p,q)}-[n]_{\mathcal{R}(p,q)}\big)p^{N_1-m}q^{-N_1+n}\,\tilde{\varLambda}_{nm}(p,q)G^1_{n+m}\phi(z)\\
\,[G^1_n , G^1_m]&=0.
\end{align*}
\begin{remark} 
	From the particular cases of the intergers $n$ and $m,$ we deduce the $\mathcal{R}(p,q)$-deformed super $su(1,1)$ subalgebra is generated by the  commutation relations:
	\begin{align*}
	\,[L^1_0 , L^1_1]_{u,v}\phi(z)&=[-1]_{\mathcal{R}(p,q)}L^1_{1}\phi(z)\\
	\,[L^{1}_0 ,G^{1}_1]_{a,b}\phi(z)&=\big([2]_{\mathcal{R}(p,q)}\big)\,G^{1}_{1}\phi(z)\\
	\,[G^{1}_0 ,G^{1}_1]&=0,
	\end{align*}
	where
	\begin{align*}
	u&=q\,\hat{\Lambda}_{01}(p,q),\quad
	v=p\,\hat{\Lambda}_{10}(p,q),\\
	a&=q\,\tilde{\Lambda}_{01}(p,q),\quad
	b=p\,\tilde{\Lambda}_{10}(p,q), 
	\end{align*}
	\begin{align*}
	\,[L^1_{-1} , L^1_0]_{u,v}\phi(z)&=[-1]_{\mathcal{R}(p,q)}L^1_{-1}\phi(z)\\
	\,[L^{1}_{-1} ,G^{1}_0]_{a,b}\phi(z)&=\big([1]_{\mathcal{R}(p,q)}-[-1]_{\mathcal{R}(p,q)}\big)\,G^{1}_{-1}\phi(z)\\
	\,[G^{1}_{-1} ,G^{1}_0]&=0,
	\end{align*}
	where
	\begin{align*}
	u&=q\,\hat{\Lambda}_{-10}(p,q),\quad
	v=p\,\hat{\Lambda}_{0(-1)}(p,q),\\
	a&=q\,\tilde{\Lambda}_{-10}(p,q),\quad
	b=p\,\tilde{\Lambda}_{0(-1)}(p,q), 
	\end{align*}
	and
	\begin{align*}
	\,[L^1_{-1} , L^1_1]\phi(z)&= [2]p^{N_1-1}q^{-N_1-1}\,\hat{\varLambda}_{-10}(p,q)L^1_{0}\phi(z)\\
	\,[L^1_{-1} , G^1_1]\phi(z)&=\big([2]_{\mathcal{R}(p,q)}-[-1]_{\mathcal{R}(p,q)}\big)p^{N_1-1}q^{-N_1-1}\,\tilde{\varLambda}_{-11}(p,q)G^1_{0}\phi(z)\\
	\,[G^1_{-1} , G^1_1]\phi(z)&=0.
	\end{align*}
\end{remark}

The central extension of the algebra (\ref{d15a}),  (\ref{d15b}), and (\ref{d15c}) is generated by the commutation relations:
\begin{align}
\,[L^1_n , L^1_m]_{u,v}\phi(z)&=[n-m]_{\mathcal{R}(p,q)}L^1_{n+m}\phi(z)+\delta_{n+m,0}\,C^{R}_n(p,q),\label{d17a}\\
\,[L^{1}_n ,G^{1}_m]_{a,b}\phi(z)&=\big([m+1]_{\mathcal{R}(p,q)}-[n]_{\mathcal{R}(p,q)}\big)\,G^{1}_{n+m}\phi(z)\nonumber\\&+\delta_{n+m+1,0}\,C^{R}_n(p,q),\label{d17b}\\
\,[G^{1}_n ,G^{1}_m]\phi(z)&=0,\label{d17c}
\end{align}
where $u,$ $v,$ $a,$ and $b$ are given by (\ref{d16}). 
\begin{remark}
	The $(p,q)$-deformed super generators
	$\mathcal{L}^1_n $ and $\mathcal{G}^1_n :$
	\begin{align*}
	\,\mathcal{L}^{1}_n\phi(z)&=-[z\partial_z+1]_{p,q}\,z^n\,\phi(z),\\
	\,\mathcal{G}^{1}_n\phi(z)&=-\theta\, [z\partial_z+1]_{p,q}\,z^n\,\phi(z)
	\end{align*}
	defined the superalgebra
	satisfying the following relations:
	\begin{align*}
	\,[\mathcal{L}^{1}_n , \mathcal{L}^{1}_m]_{\hat{u},\hat{v}}\phi(z)&=[n-m]_{p,q}q^{N_{1}-m}\mathcal{L}^{1}_{n+m}\phi(z),\\
	\,[\mathcal{L}^{1}_n , \mathcal{G}^{1}_m]_{\tilde{u},\tilde{v}}\phi(z)&=\big([m+1]_{p,q}-[n]_{p,q}\big)q^{N_{1}-m}\mathcal{G}^{1}_{n+m}\phi(z),\\
	\,[\mathcal{G}^{1}_n , \mathcal{G}^{1}_m]=0\nonumber,
	\end{align*}
	with 
	\begin{eqnarray*}
		\left \{
		\begin{array}{l}
			\displaystyle
			\hat{u}=\hat{\Lambda}_{nm}(p,q),\quad \tilde{u}=\tilde{\Lambda}_{nm}(p,q),\\
			\\
			\hat{v}=p^{m-n}\hat{\Lambda}_{nm}(p,q),\quad \tilde{v}=p^{m-n}\,\tilde{\Lambda}_{nm}(p,q)\\
			,
		\end{array}
		\right. 
	\end{eqnarray*}
	\begin{eqnarray*}
		\hat{\Lambda}_{nm}(p,q)=\frac{[n-m]_{p,q}}{[m-n]_{p,q}}\frac{f^{1}_{n+m}(p,q)}{f^{1}_n(p,q)f^{1}_m(p,q)},
	\end{eqnarray*}
	and 
	\begin{eqnarray*}
		\tilde{\Lambda}_{nm}(p,q)=\frac{[m+1]_{p,q}-[n]_{p,q}}{[m-n]_{p,q}}\frac{f^{1}_{n+m}(p,q)}{f^{1}_n(p,q)f^{1}_m(p,q)}.
	\end{eqnarray*}
	Now, we rewrite the $(p,q)$-deformed super generators in the following way:
	\begin{eqnarray*}
		L^{1}_n\phi(z) =q^{-N_1}\,\mathcal{L}^{1}_n\phi(z)\quad\mbox{and}\quad 
		G^{1}_n\phi(z)=q^{-N_1}\,\mathcal{G}^{1}_n\phi(z).
	\end{eqnarray*}
	Then, they obeys  the following commutation relations:
	\begin{align*}
	\,[L^1_n , L^1_m]_{u,v}\phi(z)&=[n-m]_{p,q}L^1_{n+m}\phi(z),\\
	\,[L^{1}_n ,G^{1}_m]_{a,b}\phi(z)&=\big([m+1]_{p,q}-[n]_{p,q}\big)\,G^{1}_{n+m}\phi(z),\\
	\,[G^{1}_n ,G^{1}_m]&=0,
	\end{align*}
	where
	\begin{align*}
	u&=q^{m-n}\,\hat{\Lambda}_{nm}(p,q),\quad
	v&=p^{m-n}\,\hat{\Lambda}_{mn}(p,q),\\
	a&=q^{m-n}\,\tilde{\Lambda}_{nm}(p,q),\quad
	b&=p^{m-n}\,\tilde{\Lambda}_{nm}(p,q), 
	\end{align*}
	equivalently,
	\begin{align*}
	\,[L^1_n , L^1_m]\phi(z)&=[n-m]_{p,q}\,p^{N_1-m}q^{-N_1+n}\,\hat{\varLambda}_{nm}(p,q)L^1_{n+m}\phi(z)\\
	\,[L^1_n , G^1_m]\phi(z)&=\big([m+1]_{p,q}-[n]_{p,q}\big)p^{N_1-m}q^{-N_1+n}\,\tilde{\varLambda}_{nm}(p,q)G^1_{n+m}\phi(z)\\
	\,[G^1_n , G^1_m]&=0.
	\end{align*}
	Furthermore, the $(p,q)$-super $su(1,1)$ subalgebra is generated by the following commutation relations:
	\begin{align*}
	\,[L^1_0 , L^1_1]_{u,v}\phi(z)&=[-1]_{p,q}L^1_{1}\phi(z)\\
	\,[L^{1}_0 ,G^{1}_1]_{a,b}\phi(z)&=\big([2]_{p,q}\big)\,G^{1}_{1}\phi(z)\\
	\,[G^{1}_0 ,G^{1}_1]&=0,
	\end{align*}
	where
	\begin{align*}
	u&=q\,\hat{\Lambda}_{01}(p,q),\quad
	v&=p\,\hat{\Lambda}_{10}(p,q),\\
	a&=q\,\tilde{\Lambda}_{01}(p,q),\quad
	b&=p\,\tilde{\Lambda}_{10}(p,q), 
	\end{align*}
	\begin{align*}
	\,[L^1_{-1} , L^1_0]_{u,v}\phi(z)&=[-1]_{p,q}L^1_{-1}\phi(z)\\
	\,[L^{1}_{-1} ,G^{1}_0]_{a,b}\phi(z)&=\big([1]_{p,q}-[-1]_{p,q}\big)\,G^{1}_{-1}\phi(z)\\
	\,[G^{1}_{-1} ,G^{1}_0]&=0,
	\end{align*}
	where
	\begin{align*}
	u&=q\,\hat{\Lambda}_{-10}(p,q),\quad
	v&=p\,\hat{\Lambda}_{0(-1)}(p,q),\\
	a&=q\,\tilde{\Lambda}_{-10}(p,q),\quad
	b&=p\,\tilde{\Lambda}_{0(-1)}(p,q), 
	\end{align*}
	and
	\begin{align*}
	\,[L^1_{-1} , L^1_1]\phi(z)&= [2]p^{N_1-1}q^{-N_1-1}\,\hat{\varLambda}_{-10}(p,q)L^1_{0}\phi(z)\\
	\,[L^1_{-1} , G^1_1]\phi(z)&=\big([2]_{p,q}-[-1]_{p,q}\big)p^{N_1-1}q^{-N_1-1}\,\tilde{\varLambda}_{-11}(p,q)G^1_{0}\phi(z)\\
	\,[G^1_{-1} , G^1_1]\phi(z)&=0.
	\end{align*}
\end{remark}
\section{ $\mathcal{R}(p,q)$-Virasoro $n$-algebra with conformal dimension $(\Delta\neq 0,1)$ }
In this section, we construct the quantum conformal  Virasoro $n$-algebra from the $\mathcal{R}(p,q)$-quantum algebra. Besides,  particular cases induced by quantum algebras in the literature are deduced.
\subsection{ $\mathcal{R}(p,q)$-conformal Witt $n$-algebra }
The  $\mathcal{R}(p,q)$-Witt $n$-algebra with conformal dimension $(\Delta\neq 0,1)$ is investigated. Moreover, several examples are presented. 
\begin{definition}\cite{WYLWZ}
	The L\'evi-Civit\'a symbol is defined by:
	\begin{eqnarray}\label{LCsymbol}
	\epsilon^{j_1 \cdots j_p}_{i_1 \cdots i_p}= det\left( \begin{array} {ccc}
	\delta^{j_1}_{i_1} &\cdots&  \delta^{j_1}_{i_p}   \\ 
	\vdots && \vdots \\
	\delta^{j_p}_{i_1} & \cdots& \delta^{j_p}_{i_p}
	\end {array} \right) .
	\end{eqnarray}
\end{definition}
Let us firstly introduced the $\mathcal{R}(p,q)$-conformal Witt $n$-algebra ($n$ even). The result is contained in the following lemma.
\begin{lemma}
	The $\mathcal{R}(p,q)$-Witt $n$-algebra with conformal dimension $(\Delta \neq 0,1)$ is given by:
	\begin{align}\label{nalg}
	\big[\mathcal{L}^{\Delta}_{m_1},\mathcal{L}^{\Delta}_{m_2},\ldots, \mathcal{L}^{\Delta}_{m_n}\big]&=\frac{\big(q-p\big)^{n-1\choose 2}}{ (\tau_1\tau_2)^{\lfloor \frac{n-1}{2} \rfloor \bar{m}}} \Bigg(\frac{[-2\bar{m}]_{\mathcal{R}(p,q)}}{2[-\bar{m}]_{{\mathcal R}(p,q)}}\Bigg)^{\beta}\nonumber\\&\times\prod_{j=1}^{n}\bigg(\tau_1^{\Delta(\bar{m}-1)}+ (-1)^n\tau_2^{\Delta(t\partial_t-\bar{m})}\bigg)\nonumber\\&\times\prod_{1\leq i < j\leq n}\Big([m_i]_{{\mathcal R}(p,q)}-[m_j]_{{\mathcal R}(p,q)}\Big)\mathcal{L}^{\Delta}_{\bar{m}},
	\end{align}
	where $\bar{m}=\sum_{l=1}^{n}m_l.$
\end{lemma}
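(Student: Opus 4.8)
The plan is to evaluate the antisymmetrised $n$-fold product directly from a normal-ordered form of the generators and to recognise the resulting permutation sum as a Vandermonde-type determinant. First I would recast each conformal generator from \eqref{scg} in the factored shape
$$\mathcal{L}^{(\Delta)}_{m}=-\,z^{m}\,[\,z\partial_z+\Delta(m+1)\,]_{\mathcal{R}(p,q)},$$
which follows by pulling the monomial $z^{m}$ to the left of the $\mathcal{R}(p,q)$-number and using the elementary operator identity $F(z\partial_z)\,z^{m}=z^{m}\,F(z\partial_z+m)$, valid for any function $F$ of the number operator $N=z\partial_z$. The weights $\tau_1,\tau_2$ enter here through the $\sigma$-derivation rule \eqref{deltaxyrpq} and the prescription $\sigma(t^{k})=\tau_2^{k}t^{k}$ that governs how $\bar{\Delta}$ distributes over the powers $z^{(m+1)\Delta}$ in the original definition.

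Next I would expand the $n$-bracket via the L\'evi-Civit\'a symbol \eqref{LCsymbol} as the signed sum $\sum_{\sigma\in S_n}\operatorname{sgn}(\sigma)\,\mathcal{L}^{(\Delta)}_{m_{\sigma(1)}}\cdots\mathcal{L}^{(\Delta)}_{m_{\sigma(n)}}$ and normal-order each summand. Commuting every monomial to the far left produces the common prefactor $(-1)^{n}z^{\bar m}$ together with a product of $n$ shifted $\mathcal{R}(p,q)$-numbers
$$\prod_{k=1}^{n}\big[\,N+s^{\sigma}_k+\Delta(m_{\sigma(k)}+1)\,\big]_{\mathcal{R}(p,q)},\qquad s^{\sigma}_k=\sum_{j>k}m_{\sigma(j)},$$
whose arguments are the partial sums fixed by $\sigma$. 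Collecting the $\tau$-powers released in this reordering across all $n$ positions is what assembles the factor $\prod_{j=1}^{n}\big(\tau_1^{\Delta(\bar m-1)}+(-1)^{n}\tau_2^{\Delta(t\partial_t-\bar m)}\big)$; the sign $(-1)^{n}$ forces $n$ even for the $\Delta$-weights to pair up consistently.

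The decisive step is the antisymmetrisation. After extracting $z^{\bar m}$ and the $\Delta$-weights, the remaining signed sum over $\sigma$ of the products of shifted brackets is a determinant whose rows are indexed by $m_1,\dots,m_n$; expanding $[\,\cdot\,]_{\mathcal{R}(p,q)}$ through its dependence on $p^{N}$ and $q^{N}$ renders this a Vandermonde-type determinant in the quantities $[m_i]_{\mathcal{R}(p,q)}$, which evaluates to $\prod_{1\le i<j\le n}\big([m_i]_{\mathcal{R}(p,q)}-[m_j]_{\mathcal{R}(p,q)}\big)$. The normalisation constants $(q-p)^{\binom{n-1}{2}}$, $(\tau_1\tau_2)^{-\lfloor(n-1)/2\rfloor\bar m}$ and the ratio $\big([-2\bar m]_{\mathcal{R}(p,q)}/2[-\bar m]_{\mathcal{R}(p,q)}\big)^{\beta}$ are precisely the bookkeeping factors that accompany this reduction to the product form. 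I expect this Vandermonde collapse to be the main obstacle: one must verify that every cross-term coming from the non-leading parts of the shifted brackets cancels under the sign sum, and that the surviving contribution carries exactly these prefactors. A clean way to control it is induction on the even integer $n$, taking the direct $n=2$ computation as the base case and peeling off one index at a time.

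Finally I would reassemble the pieces: the operator left after the determinant reduction is $-\,z^{\bar m}\,[\,z\partial_z+\Delta(\bar m+1)\,]_{\mathcal{R}(p,q)}=\mathcal{L}^{(\Delta)}_{\bar m}$, so the whole expression is the stated scalar multiple of $\mathcal{L}^{(\Delta)}_{\bar m}$. As independent checks of the normalisation I would confirm that the $n=2$ specialisation is consistent with the bracket computed in the preceding Lemma \eqref{crochet1q} and that the limit $\Delta=0$ recovers the $\mathcal{R}(p,q)$-Witt $n$-algebra of \cite{HMM}.
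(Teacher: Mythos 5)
Your proposal diverges from the paper's proof at a point that is fatal rather than cosmetic: you take the $n$-bracket to be the plain antisymmetrized product $\epsilon^{i_1\cdots i_n}_{1\cdots n}\,\mathcal{L}^{\Delta}_{m_{i_1}}\cdots\mathcal{L}^{\Delta}_{m_{i_n}}$ and expect the prefactors of \eqref{nalg} --- the power of $(q-p)$, the factor $(\tau_1\tau_2)^{-\lfloor (n-1)/2\rfloor\bar m}$, the ratio $\big([-2\bar m]_{\mathcal{R}(p,q)}/2[-\bar m]_{\mathcal{R}(p,q)}\big)^{\beta}$, and the Vandermonde product --- to fall out of normal-ordering and a determinant evaluation. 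In the paper these objects do not emerge from the computation; they are built into the definition of the bracket. The paper's proof first splits each generator by the addition law $[u+v]_{\mathcal{R}(p,q)}=\tau_1^{v}[u]_{\mathcal{R}(p,q)}+\tau_2^{u}[v]_{\mathcal{R}(p,q)}$ as $\mathcal{L}^{\Delta}_m=l^{\Delta}_m+S^{\Delta}_m$, with $l^{\Delta}_{m}=\tau_1^{\Delta(m+1)}[t\partial_t-m]_{\mathcal{R}(p,q)}\,t^m$ and $S^{\Delta}_m=\tau_2^{(t\partial_t-m)}[\Delta(m+1)]_{\mathcal{R}(p,q)}\,t^m$ (relations \eqref{opp1} and \eqref{opp2}), and then evaluates the weighted $n$-brackets \eqref{rnb1} and \eqref{rnb2}, whose very definition contains the permutation-dependent weights $\prod_{j}\tau_1^{\Delta(m_{i_j}-1)}$ and $(\tau_1\tau_2)^{\sum_{j}(\lfloor n/2\rfloor-j+1)m_{i_j}}$ together with the ratio raised to the power $\beta$. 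It is exactly these weights, inherited from the deformed $n$-bracket of Wang et al.\ and the earlier $\mathcal{R}(p,q)$-Witt $n$-algebra construction, that make the signed sum collapse to $\prod_{1\leq i<j\leq n}\big([m_i]_{\mathcal{R}(p,q)}-[m_j]_{\mathcal{R}(p,q)}\big)$ times a single generator.

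With your unweighted bracket, the Vandermonde collapse you yourself flag as the main obstacle genuinely fails: the cross terms of the shifted brackets $[N+s^{\sigma}_k+\Delta(m_{\sigma(k)}+1)]_{\mathcal{R}(p,q)}$ do not cancel under the sign sum, because corresponding terms from different permutations differ by permutation-dependent powers of $\tau_1,\tau_2$ (for $\tau_1=\tau_2=1$ your argument would be sound, which is precisely why the deformed setting requires the weighted bracket). Moreover, without the two-term splitting $\mathcal{L}^{\Delta}_m=l^{\Delta}_m+S^{\Delta}_m$ there is no mechanism that can produce the factor $\prod_{j=1}^{n}\big(\tau_1^{\Delta(\bar m-1)}+(-1)^n\tau_2^{\Delta(t\partial_t-\bar m)}\big)$: in the paper it arises from combining the pure-$l$ and pure-$S$ brackets, not from commuting monomials past number operators. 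Your own consistency check already signals the problem: even for $n=2$ the algebra \eqref{crochet1q} closes only for the modified bracket $[\cdot,\cdot]_{x,y}$ with the coefficients of \eqref{coeff}, not for the ordinary commutator. The missing ideas are thus twofold: adopt the weighted $n$-bracket as the object actually being computed, and perform the $l$/$S$ decomposition before antisymmetrizing.
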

\begin{proof} 
	The generalization of the generators of the conformal Witt algebra from the $\mathcal{R}(p,q)$-quantum algebras is first introduced by Hounkonnou and Melong   as follows \cite{HM}:
	\begin{eqnarray}\label{rpqop}
	\mathcal{L}^{\Delta}_{m}\phi(t)=[t\partial_{t}+ \Delta(m+1)-m]_{\mathcal{R}(p,q)}\,t^{m}\phi(t)
	\end{eqnarray}
	and
	recently,  the generators of the Witt algebra were constructed in the form \cite{HMM}:
	\begin{eqnarray}\label{oppg}
	l_{m}\phi(t)=[t\partial_{t}-m]_{\mathcal{R}(p,q)}\,t^{m}\phi(t).
	\end{eqnarray}
	By using the relation
	\begin{align}
	[u+v]_{\mathcal{R}(p,q)}&= \tau^{v}_1\,[u]_{\mathcal{R}(p,q)}+ \tau^{u}_2\,[v]_{\mathcal{R}(p,q)}\nonumber\\&=\tau^{v}_2\,[u]_{\mathcal{R}(p,q)}+ \tau^{u}_1\,[v]_{\mathcal{R}(p,q)}, 
	\end{align}
	where $\big(\tau_i\big)_{i\in\{1,2\}}$ are  parameters depending on $p$ and $q,$ 
	the operators (\ref{rpqop}) can be re-written in the simpler  form:
	\begin{eqnarray}
	\mathcal{L}^{\Delta}_{m}\phi(t)=l^{\Delta}_{m}\phi(t) + S^{\Delta}_m\,\phi(t),
	\end{eqnarray} 
	where 
	\begin{eqnarray}\label{opp1}
	l^{\Delta}_{m}\phi(t)=\tau^{\Delta(m+1)}_1\,[t\partial_t-m]_{\mathcal{R}(p,q)}\,t^m\phi(t)
	\end{eqnarray}
	and 
	\begin{eqnarray}\label{opp2}
	S^{\Delta}_m\phi(t)=\tau^{(t\partial_t-m)}_2\,[\Delta(m+1)]_{\mathcal{R}(p,q)}\,t^m\phi(t).
	\end{eqnarray}
	From our previous works \cite{HMM, melong2022} and Wang et {\it al} \cite{WYLWZ},  
	we can consider the  $n$-brackets $(n\geq 3)$  defined as follows:
	\begin{align}\label{rnb1}
	\big[l^{\Delta}_{m_1},l^{\Delta}_{m_2},\cdots, l^{\Delta}_{m_n}\big]&:=\bigg(\frac{[-2\bar{m}]_{\mathcal{R}(p,q)}}{2[-\bar{m}]_{\mathcal{R}(p,q)}}\bigg)^{\beta}\epsilon^{i_1i_2\cdots i_n}_{12\cdots n}\prod_{j=1}^{n}\tau_1^{\Delta(m_{i_j}-1)}\nonumber\\&\times(\tau_1\tau_2)^{\sum_{j=1}^{n}\big(\lfloor \frac{n}{2} \rfloor-j+1\big)m_{i_j}}l^{\Delta}_{m_{i_1}} \cdots
	l^{\Delta}_{m_{i_n}},
	\end{align}
	and 
	\begin{align}\label{rnb2}
	\big[S^{\Delta}_{m_1},S^{\Delta}_{m_2},\cdots, S^{\Delta}_{m_n}\big]&:=\bigg(\frac{[-2\bar{m}]_{\mathcal{R}(p,q)}}{2[-\bar{m}]_{\mathcal{R}(p,q)}}\bigg)^{\beta}\epsilon^{i_1i_2\cdots i_n}_{12\cdots n}\prod_{j=1}^{n}\tau_2^{\Delta(t\partial_t-m_{i_j})}\nonumber\\&\times(\tau_1\tau_2)^{\sum_{j=1}^{n}\big(\lfloor \frac{n}{2} \rfloor-j+1\big)m_{i_j}}S^{\Delta}_{m_{i_1}} \cdots
	S^{\Delta}_{m_{i_n}},
	\end{align}
	where $\beta=\frac{1+(-1)^n }{ 2},$  $\lfloor n \rfloor=\max\{m\in\mathbb{Z}\ m\leq n\}$ is the floor function. Putting the operators (\ref{opp1}) and \eqref{opp2} in the relation (\ref{rnb1}) and \eqref{rnb2}, we obtain the generators $\mathcal{L}^{\Delta}_{m}$ satisfying the $\mathcal{R}(p,q)$-conformal Witt $n$-algebra \eqref{nalg}.
\end{proof}
\begin{remark} Some examples of $\mathcal{R}(p,q)$-deformed conformal Witt $n$-algebra are deduced as:
	\begin{enumerate}
		\item[(i)]For $n=3,$ we obtain the $\mathcal{R}(p,q)$-conformal Witt $3$-algebra as follows:
		\begin{align*}
		\big[\mathcal{L}^{\Delta}_{m_1},\mathcal{L}^{\Delta}_{m_2}, \mathcal{L}^{\Delta}_{m_3}\big]&=\frac{q-p}{ (\tau_1\tau_2)^{m_1+m_2+m_3}} \nonumber\\&\times\prod_{j=1}^{3}\bigg(\tau_1^{\Delta(m_1+m_2+m_3-1)} -\tau_2^{\Delta(t\partial_t-m_1-m_2-m_3)}\bigg)\nonumber\\&\times\prod_{1\leq i < j\leq 3}\Big([m_i]_{{\mathcal R}(p,q)}-[m_j]_{{\mathcal R}(p,q)}\Big)\mathcal{L}^{\Delta}_{m_1+m_2+m_3}.
		\end{align*}
		\item[(ii)]The $\mathcal{R}(p,q)$-conformal Witt $4$-algebra is deduced by taking $n=4:$
		\begin{align*}
		\big[\mathcal{L}^{\Delta}_{m_1},\mathcal{L}^{\Delta}_{m_2},\mathcal{L}^{\Delta}_{m_3}, \mathcal{L}^{\Delta}_{m_4}\big]&=\frac{\big(q-p\big)^{3}}{ (\tau_1\tau_2)^{\bar{m}}} \Bigg(\frac{[-2\bar{m}]_{\mathcal{R}(p,q)}}{2[-\bar{m}]_{{\mathcal R}(p,q)}}\Bigg)\nonumber\\&\times\prod_{j=1}^{4}\bigg(\tau_1^{\Delta(\bar{m}-1)}+ \tau_2^{\Delta(t\partial_t-\bar{m})}\bigg)\nonumber\\&\times\prod_{1\leq i < j\leq 4}\Big([m_i]_{{\mathcal R}(p,q)}-[m_j]_{{\mathcal R}(p,q)}\Big)\mathcal{L}^{\Delta}_{\bar{m}},
		\end{align*}
		where $\bar{m}=\sum_{l=1}^{4}m_l.$
		\item[(iii)] In the limit $p\longrightarrow q,$ the relation (\ref{nalg}) is reduced to the null $n$-algebra in the case $n\geq 3.$
	\end{enumerate} 
\end{remark}
We can prove that the relation \eqref{nalg} satisfies the conformal
generalized Jacobi identity (GJI)
\begin{eqnarray*}
	\epsilon^{i_{1},\ldots,i_{2n-1}}_{n_{1},\ldots,n_{2n-1}}\big[\big[\mathcal{L}^{\Delta}_{i_1},\ldots,\mathcal{L}^{\Delta}_{i_{2n-1}}\big]_{\mathcal{R}(p,q)},\mathcal{L}^{\Delta}_{i_{n+1}},\ldots, \mathcal{L}^{\Delta}_{i_{2n-1}}\big]_{\mathcal{R}(p,q)}=0,
\end{eqnarray*}
and the skewsymmetry 
hold for \eqref{nalg}. Thus,  the  $\mathcal{R}(p,q)$-conformal Witt $n$-algebra \eqref{nalg} is a higher order Lie algebra.

Now, we investigate the $\mathcal{R}(p,q)$-Virasoro $2n$-algebra $(n\geq 2)$ with conformal dimension $(\Delta \neq 0,1).$ It's the centrally extended of the $\mathcal{R}(p,q)$-conformal Witt $n$-algebra \eqref{nalg}. Then, 
\begin{proposition}
	For $(\Delta \neq 0,1),$ the $\mathcal{R}(p,q)$-conformal Virasoro $2n$-algebra ($n$ even) is determined by the following commutation relation:
	\begin{eqnarray}\label{cnalg}
	\big[L^{\Delta}_{m_1},\ldots, L^{\Delta}_{m_{2n}}\big]=f^{\Delta}_{\mathcal{R}(p,q)}(m_1,\ldots,m_{2n})L^{\Delta}_{\tilde{m}}+ \tilde{C}^{\Delta}_{\mathcal{R}(p,q)}(m_1,\ldots,m_{2n}),
	\end{eqnarray}
	where
	\begin{align}\label{cnalg1}
	f^{\Delta}_{\mathcal{R}(p,q)}(m_1,\ldots,m_{2n})&=\frac{\big(q-p\big)^{2n-1\choose 2}}{ (\tau_1\tau_2)^{\lfloor \frac{2n-1}{2} \rfloor \tilde{m}}} \Bigg(\frac{[-2\tilde{m}]_{\mathcal{R}(p,q)}}{2[-\tilde{m}]_{{\mathcal R}(p,q)}}\Bigg)^{\beta}\prod_{j=1}^{2n}\bigg(\tau_1^{\Delta(\tilde{m}-1)}\nonumber\\&+ \tau_2^{\Delta(t\partial_t-\tilde{m})}\bigg)\prod_{1\leq i < j\leq 2n}\Big([m_i]_{{\mathcal R}(p,q)}-[m_j]_{{\mathcal R}(p,q)}\Big),
	\end{align}
	$\tilde{m}=\sum_{l=1}^{2n}\,m_l$ and
	\begin{align}\label{cv}
	\tilde{C}^{\Delta}_{\mathcal{R}(p,q)}(m_1,&\cdots,m_{2n})={c(p,q)\epsilon^{i_1\cdots i_{2n}}_{1\cdots 2n}\over 6\times  2^n\times n!}\prod_{l=1}^{n}\Big(\frac{q}{p}\Big)^{i\,N_{\Delta/2}}{[m_{i_{2l-1}}-1]_{{\mathcal R}(p,q)}\over \big(\tau_1\tau_2\big)^{m_{2l-1}}}\nonumber\\&\times{[m_{2l-1}]_{{\mathcal R}(p,q)}\over [2m_{2l-1}]_{{\mathcal R}(p,q)}} [m_{i_{2l-1}}]_{{\mathcal R}(p,q)}[m_{i_{2l-1}}+1]_{{\mathcal R}(p,q)}
	\delta_{m_{i_{2l-1}}+ m_{i_{2l}},0}.
	\end{align}
\end{proposition}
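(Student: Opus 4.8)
The plan is to obtain the $\mathcal{R}(p,q)$-conformal Virasoro $2n$-algebra as the centrally extended version of the $\mathcal{R}(p,q)$-conformal Witt $n$-algebra established in the preceding Lemma. The starting point is the decomposition $\mathcal{L}^{\Delta}_m = l^{\Delta}_m + S^{\Delta}_m$ together with the two $n$-bracket definitions \eqref{rnb1} and \eqref{rnb2}, applied now with $2n$ arguments in place of $n$. The structure constant $f^{\Delta}_{\mathcal{R}(p,q)}(m_1,\ldots,m_{2n})$ in \eqref{cnalg1} is then nothing but the structure constant appearing in \eqref{nalg} with $n$ replaced by $2n$ and $\bar m$ replaced by $\tilde m=\sum_{l=1}^{2n} m_l$; since $2n$ is even, the exponent $\beta=\tfrac{1+(-1)^{2n}}{2}=1$ and the sign $(-1)^{2n}=+1$, so the product $\prod_{j=1}^{2n}\big(\tau_1^{\Delta(\tilde m-1)}+\tau_2^{\Delta(t\partial_t-\tilde m)}\big)$ carries a plus sign exactly as written. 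Thus the leading (non-central) term of \eqref{cnalg} follows immediately from the Lemma.

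Next I would introduce the central extension. Following the rewriting $L^{\Delta}_n = q^{-N}\mathcal{L}^{\Delta}_n$ and the central charge \eqref{ce1} already recorded for the two-index bracket, the idea is to antisymmetrize the two-index central term over all $2n$ indices and pair them off into $n$ disjoint Kronecker deltas $\delta_{m_{i_{2l-1}}+m_{i_{2l}},0}$. Concretely, I would insert the single central charge $C^{\Delta}_n(p,q)$ of \eqref{ce1} into each factor of a product over $l=1,\ldots,n$, sum against the L\'evi-Civit\'a symbol $\epsilon^{i_1\cdots i_{2n}}_{1\cdots 2n}$ to enforce full antisymmetry, and divide by the combinatorial factor $2^n\, n!$ that accounts for the ordering within each pair and among the pairs. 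The factor $1/6$ and the prefactor $c(p,q)$ are inherited from the normalization of the central term in \eqref{ce1}, while the accumulated factors $(q/p)^{i N_{\Delta/2}}$, $(\tau_1\tau_2)^{-m_{2l-1}}$, and the ratio $[m_{2l-1}]/[2m_{2l-1}]$ together with the three consecutive $\mathcal{R}(p,q)$-numbers $[m_{i_{2l-1}}-1]$, $[m_{i_{2l-1}}]$, $[m_{i_{2l-1}}+1]$ reproduce \eqref{cv} after substituting $N_{\Delta}\mapsto N_{\Delta/2}$ appropriately.

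The remaining task is to verify that this proposed extension is consistent, i.e.\ that the $2n$-bracket so defined is totally antisymmetric and satisfies the generalized Jacobi identity. Total antisymmetry of both the structure-constant part and the central part is manifest from the presence of $\epsilon^{i_1\cdots i_{2n}}_{1\cdots 2n}$ and from the antisymmetric product $\prod_{1\le i<j\le 2n}([m_i]-[m_j])$, exactly as in the Witt $n$-algebra case. For the generalized Jacobi identity one argues as in the Lemma: the non-central part already satisfies the conformal GJI (this was asserted for \eqref{nalg}), and the central part, being a sum of paired deltas with coefficients depending only on the paired indices, drops out of the GJI because each nested bracket contracts one central pair against the L\'evi-Civit\'a symbol and vanishes by antisymmetry.

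I expect the main obstacle to be the bookkeeping in the central term \eqref{cv}, specifically the precise combinatorial normalization $6\times 2^n\times n!$ and the origin of the factor $\prod_{l=1}^{n}(q/p)^{i N_{\Delta/2}}$, which must be tracked carefully through the repeated use of the splitting $[u+v]_{\mathcal{R}(p,q)}=\tau_1^v[u]_{\mathcal{R}(p,q)}+\tau_2^u[v]_{\mathcal{R}(p,q)}$ and through the conjugation $L^{\Delta}_n=q^{-N}\mathcal{L}^{\Delta}_n$ when it acts on the $(pq)^{N_{\Delta}/2+n}$ factor of \eqref{ce1}. Everything else reduces to the already-proven Witt $n$-algebra computation with $n\mapsto 2n$, so the heart of the proof is this careful matching of the central charge normalization rather than any genuinely new algebraic input.
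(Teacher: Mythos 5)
Your treatment of the leading term coincides with the paper's: the structure constant \eqref{cnalg1} is read off from the Witt $n$-algebra \eqref{nalg} with $n$ replaced by $2n$ (so that $\beta=\tfrac{1+(-1)^{2n}}{2}=1$ and the sign in the product is $+$), and the paper's proof says essentially nothing more about that part. The divergence — and the problem — is in the central extension.

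The paper does not obtain \eqref{cv} by antisymmetrizing the two-index central charge \eqref{ce1}. It imports the (non-conformal) $2n$-bracket central term $C^{\Delta}_{\mathcal{R}(p,q)}(m_1,\ldots,m_{2n})$ wholesale from \cite{melong2022} and multiplies it by the conformal dressing $\prod_{i=1}^{n}\Gamma^{i}(N_{\Delta})$ with $\Gamma(N_{\Delta})=(q/p)^{N_{\Delta/2}}$, cited from \cite{CJ,HM}; the factor $(q/p)^{i\,N_{\Delta/2}}$ in \eqref{cv} is literally $\Gamma^{i}$, so \eqref{cv} holds by construction, not by computation. Your proposal instead builds the central term bottom-up from \eqref{ce1}, and this is exactly where it cannot close: each pair produced by your antisymmetrization of \eqref{ce1} would carry the operator factor $(pq)^{N_{\Delta}/2+m_{i_{2l-1}}}$, whereas each pair in \eqref{cv} carries $(q/p)^{i\,N_{\Delta/2}}(\tau_1\tau_2)^{-m_{2l-1}}$. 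No substitution of the form $N_{\Delta}\mapsto N_{\Delta/2}$, and no conjugation by $q^{-N}$, turns a power of $pq$ into a power of $q/p$ (nor flips $(pq)^{+m}$ into $(\tau_1\tau_2)^{-m}$): these are genuinely different functions of $p$ and $q$, not rearrangements of the same expression. You flag precisely this matching as ``the heart of the proof,'' but it is the one step your construction cannot deliver, and the paper sidesteps it by factorization plus citation rather than by the bookkeeping you envisage. So the proposal is adequate for \eqref{cnalg1} but does not establish \eqref{cv}; to recover the paper's statement you would have to abandon the antisymmetrization of \eqref{ce1} and instead invoke the factorization $\tilde{C}^{\Delta}_{\mathcal{R}(p,q)}=\prod_{i=1}^{n}\Gamma^{i}(N_{\Delta})\,C^{\Delta}_{\mathcal{R}(p,q)}$ through the central term of \cite{melong2022}.
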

\begin{proof}  For the relation \eqref{nalg}, we deduce \eqref{cnalg1}. Moreover, the conformal central extention 
	$\tilde{C}^{\Delta}_{\mathcal{R}(p,q)}(m_1,\ldots,m_{2n})$ is provided by the factorization
	\begin{eqnarray*}
		\tilde{C}^{\Delta}_{\mathcal{R}(p,q)}(m_1,\ldots,m_{2n})=\prod_{i=1}^{n}\Gamma^{i}(N_{\Delta})\,C^{\Delta}_{\mathcal{R}(p,q)}(m_1,\ldots,m_{2n}),
	\end{eqnarray*}
	where $C^{\Delta}_{\mathcal{R}(p,q)}(m_1,\ldots,m_{2n})$ is given in \cite{melong2022}: 
	\begin{align*}
	C^{\Delta}_{\mathcal{R}(p,q)}(m_1,\cdots,m_{2n})&={c(p,q)\epsilon^{i_1\cdots i_{2n}}_{1\cdots 2n}\over 6\times  2^n\times n!}\prod_{l=1}^{n}{[m_{i_{2l-1}}-1]_{{\mathcal R}(p,q)}\over \big(\tau_1\tau_2\big)^{m_{2l-1}}}{[m_{2l-1}]_{{\mathcal R}(p,q)}\over [2m_{2l-1}]_{{\mathcal R}(p,q)}}\nonumber\\&\times [m_{i_{2l-1}}]_{{\mathcal R}(p,q)}[m_{i_{2l-1}}+1]_{{\mathcal R}(p,q)}
	\delta_{m_{i_{2l-1}}+ m_{i_{2l}},0}
	\end{align*}
	and $\Gamma^{i}(N_{\Delta})$ by \cite{CJ,HM}:
	\begin{eqnarray*}
		\Gamma(N_{\Delta})=\Big(\frac{q}{p}\Big)^{N_{\Delta/2}}.
	\end{eqnarray*}
	Thus, the relation \eqref{ce1} follows. 
	Finally, the $\mathcal{R}(p,q)$-conformal Virasoro $2n$-algebra ($n$-even) is generated by the relations \eqref{cnalg},\eqref{cnalg1}, and \eqref{cv}.	
\end{proof}
\begin{remark}
	The  particular cases of  the conformal  Virasoro $2n$-algebra ($n$ even) from the quantum algebras existing in the literature are deuced as:
	\begin{enumerate}	
		\item[(a)] The conformal Virasoro $2n$-algebra ($n$ even) associated to the {\bf Biedenharn-Macfarlane algebra} \cite{BC,M}:
		\begin{eqnarray}\label{BMcnalg}
		\big[L^{\Delta}_{m_1},\ldots, L^{\Delta}_{m_{2n}}\big]=f^{\Delta}_{q}(m_1,\ldots,m_{2n})L^{\Delta}_{\tilde{m}}+ \tilde{C}^{\Delta}_{q}(m_1,\ldots,m_{2n}),
		\end{eqnarray}
		where
		\begin{align}\label{BMcnalg1}
		f^{\Delta}_{q}(m_1,\ldots,m_{2n})&=\big(q^{-1}-q\big)^{2n-1\choose 2} \Bigg(\frac{[-2\tilde{m}]_{q}}{2[-\tilde{m}]_{q}}\Bigg)^{\beta}\prod_{j=1}^{2n}\bigg(q^{\Delta(\tilde{m}-1)}\nonumber\\&+ q^{-\Delta(t\partial_t-\tilde{m})}\bigg)\prod_{1\leq i < j\leq 2n}\Big([m_i]_{q}-[m_j]_{q}\Big),
		\end{align}
		$\tilde{m}=\sum_{l=1}^{2n}\,m_l$ and
		\begin{align}\label{BMcv}
		\tilde{C}^{\Delta}_{q}(m_1,\cdots,m_{2n}&)={c(q)\epsilon^{i_1\cdots i_{2n}}_{1\cdots 2n}\over 6\times  2^n\times n!}\prod_{l=1}^{n}q^{-2i\,N_{\Delta/2}}{[m_{2l-1}]_{q}\over [2m_{2l-1}]_{q}}\nonumber\\&\times [m_{i_{2l-1}}-1]_{q}[m_{i_{2l-1}}]_{q}[m_{i_{2l-1}}+1]_{q}
		\delta_{m_{i_{2l-1}}+ m_{i_{2l}},0}.
		\end{align}
		\item[(b)] The conformal Virasoro $2n$-algebra ($n$ even) from the \textbf{Jagannathan-Srinivasa algebra} \cite{JS}:
		\begin{eqnarray}\label{JScnalg}
		\big[L^{\Delta}_{m_1},\ldots, L^{\Delta}_{m_{2n}}\big]=f^{\Delta}_{p,q}(m_1,\ldots,m_{2n})L^{\Delta}_{\tilde{m}}+ \tilde{C}^{\Delta}_{p,q}(m_1,\ldots,m_{2n}),
		\end{eqnarray}
		where
		\begin{align}\label{JScnalg1}
		f^{\Delta}_{p,q}(m_1,\ldots,m_{2n})&=\frac{\big(q-p\big)^{2n-1\choose 2}}{ (p\,q)^{\lfloor \frac{2n-1}{2} \rfloor \tilde{m}}} \Bigg(\frac{[-2\tilde{m}]_{p,q}}{2[-\tilde{m}]_{p,q}}\Bigg)^{\beta}\prod_{j=1}^{2n}\bigg(p^{\Delta(\tilde{m}-1)}\nonumber\\&+ q^{\Delta(t\partial_t-\tilde{m})}\bigg)\prod_{1\leq i < j\leq 2n}\Big([m_i]_{p,q}-[m_j]_{p,q}\Big),
		\end{align}
		$\tilde{m}=\sum_{l=1}^{2n}\,m_l$ and
		\begin{align}\label{JScv}
		\tilde{C}^{\Delta}_{p,q}(m_1,\cdots,m_{2n}&)={c(p,q)\epsilon^{i_1\cdots i_{2n}}_{1\cdots 2n}\over 6\times  2^n\times n!}\prod_{l=1}^{n}\Big(\frac{q}{p}\Big)^{i\,N_{\Delta/2}}{[m_{i_{2l-1}}-1]_{p,q}\over \big(p\,q\big)^{m_{2l-1}}}\nonumber\\&\times{[m_{2l-1}]_{p,q}\over [2m_{2l-1}]_{p,q}} [m_{i_{2l-1}}]_{p,q}[m_{i_{2l-1}}+1]_{p,q}
		\delta_{m_{i_{2l-1}}+ m_{i_{2l}},0}.
		\end{align}
		\item[(c)] The conformal  Virasoro $2n$-algebra ($n$ even) induced by the  \textbf{Chakrabarti- Jagannathan algebra}\cite{Chakrabarti&Jagan}: 
		\begin{eqnarray}\label{CJcnalg}
		\big[L^{\Delta}_{m_1},\ldots, L^{\Delta}_{m_{2n}}\big]=f^{\Delta}_{p^{-1},q}(m_1,\ldots,m_{2n})L^{\Delta}_{\tilde{m}}+ \tilde{C}^{\Delta}_{p^{-1},q}(m_1,\ldots,m_{2n}),
		\end{eqnarray}
		where
		\begin{align}\label{CJcnalg1}
		f^{\Delta}_{p^{-1},q}(m_1,\ldots,m_{2n})&=\frac{\big(q-p\big)^{2n-1\choose 2}}{ (\frac{q}{p})^{\lfloor \frac{2n-1}{2} \rfloor \tilde{m}}} \Bigg(\frac{[-2\tilde{m}]_{p^{-1},q}}{2[-\tilde{m}]_{p^{-1},q}}\Bigg)^{\beta}\prod_{j=1}^{2n}\bigg(p^{-\Delta(\tilde{m}-1)}\nonumber\\&+ q^{\Delta(t\partial_t-\tilde{m})}\bigg)\prod_{1\leq i < j\leq 2n}\Big([m_i]_{p^{-1},q}-[m_j]_{p^{-1},q}\Big),
		\end{align}
		$\tilde{m}=\sum_{l=1}^{2n}\,m_l$ and
		\begin{align}\label{CJcv}
		\tilde{C}^{\Delta}_{p^{-1},q}(m_1&,\cdots,m_{2n})={c(p^{-1},q)\epsilon^{i_1\cdots i_{2n}}_{1\cdots 2n}\over 6\times  2^n\times n!}\prod_{l=1}^{n}\big(q\,p\big)^{i\,N_{\Delta/2}}{[m_{i_{2l-1}}-1]_{p^{-1},q}\over \big(\frac{q}{p}\big)^{m_{2l-1}}}\nonumber\\&\times{[m_{2l-1}]_{p^{-1},q}\over [2m_{2l-1}]_{p^{-1},q}} [m_{i_{2l-1}}]_{p^{-1},q}[m_{i_{2l-1}}+1]_{p^{-1},q}
		\delta_{m_{i_{2l-1}}+ m_{i_{2l}},0}.
		\end{align}
		\item The conformal super Virasoro $2n$-algebra ($n$ even) generated by the  \textbf{Generalized $q$-Quesne algebra} \cite{HN}:
		\begin{eqnarray}\label{HNcnalg}
		\big[L^{\Delta}_{m_1},\ldots, L^{\Delta}_{m_{2n}}\big]=f^{\Delta}_{p,q}(m_1,\ldots,m_{2n})L^{\Delta}_{\tilde{m}}+ \tilde{C}^{\Delta}_{p,q}(m_1,\ldots,m_{2n}),
		\end{eqnarray}
		where
		\begin{align}\label{HNcnalg1}
		f^{\Delta}_{p,q}(m_1,\ldots,m_{2n})&=\frac{\big(q^{-1}-p\big)^{2n-1\choose 2}}{ (\frac{p}{q})^{\lfloor \frac{2n-1}{2} \rfloor \tilde{m}}} \Bigg(\frac{[-2\tilde{m}]^Q_{p,q}}{2[-\tilde{m}]^Q_{p,q}}\Bigg)^{\beta}\prod_{j=1}^{2n}\bigg(p^{\Delta(\tilde{m}-1)}\nonumber\\&+ q^{-\Delta(t\partial_t-\tilde{m})}\bigg)\prod_{1\leq i < j\leq 2n}\Big([m_i]^Q_{p,q}-[m_j]^Q_{p,q}\Big),
		\end{align}
		$\tilde{m}=\sum_{l=1}^{2n}\,m_l$ and
		\begin{align}\label{HNcv}
		\tilde{C}^{\Delta}_{p,q}(m_1,\cdots,m_{2n}&)={c(p,q)\epsilon^{i_1\cdots i_{2n}}_{1\cdots 2n}\over 6\times  2^n\times n!}\prod_{l=1}^{n}(q\,p)^{-i\,N_{\Delta/2}}{[m_{i_{2l-1}}-1]^Q_{p,q}\over \big(\frac{p}{q}\big)^{m_{2l-1}}}\nonumber\\&\times{[m_{2l-1}]^Q_{p,q}\over [2m_{2l-1}]^Q_{p,q}} [m_{i_{2l-1}}]^Q_{p,q}[m_{i_{2l-1}}+1]^Q_{p,q}
		\delta_{m_{i_{2l-1}}+ m_{i_{2l}},0}.
		\end{align}
		\item The conformal super Virasoro algebra generated by the \textbf{Hounkonnou-Ngompe generalized algebra} \cite{HNN}:
		\begin{eqnarray}\label{HNNcnalg}
		\big[L^{\Delta}_{m_1},\ldots, L^{\Delta}_{m_{2n}}\big]=f^{\Delta}_{p,q}(m_1,\ldots,m_{2n})L^{\Delta}_{\tilde{m}}+ \tilde{C}^{\Delta}_{p,q}(m_1,\ldots,m_{2n}),
		\end{eqnarray}
		where
		\begin{align}\label{HNNcnalg1}
		f^{\Delta}_{p,q}(m_1,\ldots,m_{2n})&=\frac{\big(q-p\big)^{2n-1\choose 2}}{ (p\,q)^{\lfloor \frac{2n-1}{2} \rfloor \tilde{m}}} \Bigg(\frac{[-2\tilde{m}]^{\mu,\nu}_{p,q,g}}{2[-\tilde{m}]^{\mu,\nu}_{p,q,g}}\Bigg)^{\beta}\prod_{j=1}^{2n}\bigg(p^{\Delta(\tilde{m}-1)}\nonumber\\&+ q^{\Delta(t\partial_t-\tilde{m})}\bigg)\prod_{1\leq i < j\leq 2n}\Big([m_i]^{\mu,\nu}_{p,q,g}-[m_j]^{\mu,\nu}_{p,q,g}\Big),
		\end{align}
		$\tilde{m}=\sum_{l=1}^{2n}\,m_l$ and
		\begin{align}\label{HNNcv}
		\tilde{C}^{\Delta}_{p,q}(m_1,&\cdots,m_{2n})={c(p,q)\epsilon^{i_1\cdots i_{2n}}_{1\cdots 2n}\over 6\times  2^n\times n!}\prod_{l=1}^{n}\Big(\frac{q}{p}\Big)^{i\,N_{\Delta/2}}{[m_{i_{2l-1}}-1]^{\mu,\nu}_{p,q,g}\over \big(p\,q\big)^{m_{2l-1}}}\nonumber\\&\times{[m_{2l-1}]^{\mu,\nu}_{p,q,g}\over [2m_{2l-1}]^{\mu,\nu}_{p,q,g}} [m_{i_{2l-1}}]^{\mu,\nu}_{p,q,g}[m_{i_{2l-1}}+1]^{\mu,\nu}_{p,q,g}
		\delta_{m_{i_{2l-1}}+ m_{i_{2l}},0}.
		\end{align}
	\end{enumerate}
\end{remark}
\subsection{Another $\mathcal{R}(p,q)$-conformal Witt $n$-algebra}
In this section,  we determine another $\mathcal{R}(p,q)$-conformal Witt $n$-algebra. Some examples for a  value of $n$ are deduced and particular cases induced from quantum algebras are derived.
\begin{definition}
	For $(\Delta \neq 0,1),$ the $\mathcal{R}(p,q)$-conformal operator is defined by:
	\begin{eqnarray}\label{rpqop1}
	{\mathcal T}^{a\Delta}_m:=x^{(1-\Delta)(m+1)}{\mathcal D}_{{\mathcal R}(p^{a},q^{a})}\,x^{\Delta(m+1)},
	\end{eqnarray}
\end{definition}
where ${\mathcal D}_{{\mathcal R}(p^{a},q^{a})}$ is the ${\mathcal R}(p,q)$-derivative presented by:
\begin{eqnarray*}
	{\mathcal D}_{{\mathcal R}(p^{a},q^{a})}\big(\phi(z)\big)={p^{a}-q^{a}\over p^{a\,P}-q^{a\,Q}}{\mathcal R}(p^{a\,P},q^{a\,Q})\,\mathcal{D}_{p^a,q^a}\phi(z).
\end{eqnarray*}
Then, after computation, the relation \eqref{rpqop1} is reduced as:
\begin{eqnarray}\label{rpqopa}
{\mathcal T}^{a\Delta}_m=[\Delta(m+1)]_{{\mathcal R}(p^{a},q^{a})}\,z^{m}.
\end{eqnarray}
\begin{proposition}
	The $\mathcal{R}(p,q)$-conformal operators (\ref{rpqop1}) satisfy the product relation given by: \begin{align}\label{pre}
	\mathcal{T}^{a\Delta}_{m}.\mathcal{T}^{b\Delta}_n&={\big(\tau^{a+b}_1-\tau^{a+b}_2\big)\tau^{an(1-\Delta)}_1\over \big(\tau^{a}_1-\tau^{a}_2\big)\big(\tau^{b}_1-\tau^{b}_2\big)\tau^{b\Delta\,m}_1}\,\mathcal{T}^{(a+b)\Delta}_{m+n}+ f^{\Delta(a,b)}_{\mathcal{R}(p,q)}(m,n)\nonumber\\& - {\tau^{a(\Delta(m+1)+n)}_2\over \tau^{b\Delta\,m}_1\big(\tau^{a}_1-\tau^{a}_2\big)}\,\mathcal{T}^{b\Delta}_{m+n}- {\tau^{b\Delta(n+1)}_2\over \tau^{an(\Delta-1)}_1\big(\tau^{b}_1-\tau^{b}_2\big)}\, \mathcal{T}^{a\Delta}_{m+n},
	\end{align}
	and the commutation relation
	\begin{align}\label{crto}
	\Big[\mathcal{T}^{a\Delta}_{m}, \mathcal{T}^{b\Delta}_n\Big]&={\big(\tau^{a+b}_1-\tau^{a+b}_2\big)\over \big(\tau^{a}_1-\tau^{a}_2\big)\big(\tau^{b}_1-\tau^{b}_2\big)}\bigg(\frac{\tau^{an(1-\Delta)}_1}{\tau^{b\Delta\,m}_1}-\frac{\tau^{bm(1-\Delta)}_1}{\tau^{a\Delta\,n}_1}\bigg)\mathcal{T}^{(a+b)\Delta}_{m+n}\nonumber\\ &-\frac{\tau^{b\Delta(n+1)}_2}{\tau^{a\Delta\,n}_1}\frac{\big(\tau^{an}_1-\tau^{bm}_2\big)}{\big(\tau^{b}_1-\tau^{b}_2\big)}\mathcal{T}^{a\Delta}_{m+n}+h^{\Delta(a,b)}_{\mathcal{R}(p,q)}(m,n)\nonumber\\ &- \frac{\tau^{a\Delta(m+1)}_2}{\tau^{b\Delta\,m}_1}\frac{\big(\tau^{bm}_1-\tau^{an}_2\big)}{\big(\tau^{a}_1-\tau^{a}_2\big)}\mathcal{T}^{b\Delta}_{m+n},
	\end{align}
	where $f^{\Delta(a,b)}_{\mathcal{R}(p,q)}(m,n)$ and $h^{\Delta(a,b)}_{\mathcal{R}(p,q)}(m,n)$ are given by the relations \eqref{crtof} and \eqref{crtoh}.
\end{proposition}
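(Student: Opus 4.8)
The plan is to work directly from the reduced form \eqref{rpqopa} of the conformal operators. First I would record the action on a monomial: applying the definition \eqref{rpqop1} to $z^k$ and using $\mathcal{D}_{\mathcal{R}(p^a,q^a)}z^{j}=[j]_{\mathcal{R}(p^a,q^a)}\,z^{j-1}$ gives $\mathcal{T}^{a\Delta}_m z^k=[\Delta(m+1)+k]_{\mathcal{R}(p^a,q^a)}\,z^{m+k}$, so that the conformal-dimension bracket carries the number operator $z\partial_z$ exactly as in \eqref{scg}. Composing the two operators, and noting that $\mathcal{T}^{b\Delta}_n$ first raises the degree by $n$, I obtain
\[
\mathcal{T}^{a\Delta}_m\,\mathcal{T}^{b\Delta}_n\,z^k=[\Delta(m+1)+n+k]_{\mathcal{R}(p^a,q^a)}\,[\Delta(n+1)+k]_{\mathcal{R}(p^b,q^b)}\,z^{m+n+k}.
\]
Thus the whole problem is reduced to re-expressing a product of two $\mathcal{R}$-brackets of scales $a$ and $b$ in terms of single brackets of scales $a+b$, $a$ and $b$, which are precisely the brackets defining $\mathcal{T}^{(a+b)\Delta}_{m+n}$, $\mathcal{T}^{a\Delta}_{m+n}$ and $\mathcal{T}^{b\Delta}_{m+n}$.

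The second step is to linearize each bracket through the base parameters $\tau_1,\tau_2$, writing $[X]_{\mathcal{R}(p^c,q^c)}$ in terms of $\tau_1^{cX}$ and $\tau_2^{cX}$ up to the $f^{\Delta}$-normalizations already recorded in the statement. Expanding the product of the two linearized brackets produces four monomials in $\tau_1,\tau_2$. The two pure contributions $\tau_1^{a\cdot}\tau_1^{b\cdot}=\tau_1^{(a+b)\cdot}$ and $\tau_2^{(a+b)\cdot}$ recombine, by the additivity ($\tau$-split) identity used in the Witt $n$-algebra lemma now taken at base $p^{a+b}$, into the single scale-$(a+b)$ bracket; the mismatch of the three normalizations supplies the prefactor $\frac{\tau_1^{a+b}-\tau_2^{a+b}}{(\tau_1^a-\tau_2^a)(\tau_1^b-\tau_2^b)}$, while the surplus powers of $\tau_1$ track the index shifts $\Delta(m+1)$, $\Delta(n+1)$ and the intermediate $n$, reproducing the coefficient $\frac{(\tau_1^{a+b}-\tau_2^{a+b})\tau_1^{an(1-\Delta)}}{(\tau_1^a-\tau_2^a)(\tau_1^b-\tau_2^b)\tau_1^{b\Delta m}}$ of $\mathcal{T}^{(a+b)\Delta}_{m+n}$. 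The two mixed contributions $\tau_1^{a\cdot}\tau_2^{b\cdot}$ and $\tau_2^{a\cdot}\tau_1^{b\cdot}$ are then regrouped: the parts proportional to a single scale-$a$ or scale-$b$ exponential reassemble the brackets of $\mathcal{T}^{a\Delta}_{m+n}$ and $\mathcal{T}^{b\Delta}_{m+n}$, producing the coefficients $-\frac{\tau_2^{a(\Delta(m+1)+n)}}{\tau_1^{b\Delta m}(\tau_1^a-\tau_2^a)}$ and $-\frac{\tau_2^{b\Delta(n+1)}}{\tau_1^{an(\Delta-1)}(\tau_1^b-\tau_2^b)}$ of \eqref{pre}, and the genuinely irreducible leftover is collected and \emph{defined} to be the remainder $f^{\Delta(a,b)}_{\mathcal{R}(p,q)}(m,n)$ of \eqref{crtof}. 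This yields the product relation \eqref{pre}.

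Finally, for the commutator \eqref{crto} I would rerun the computation for the reversed product $\mathcal{T}^{b\Delta}_n\,\mathcal{T}^{a\Delta}_m$, obtained from the above by the substitution $(a,m)\leftrightarrow(b,n)$, and subtract. The pure scale-$(a+b)$ contributions combine antisymmetrically, replacing the single factor $\tau_1^{an(1-\Delta)}/\tau_1^{b\Delta m}$ by the difference $\frac{\tau_1^{an(1-\Delta)}}{\tau_1^{b\Delta m}}-\frac{\tau_1^{bm(1-\Delta)}}{\tau_1^{a\Delta n}}$, exactly the coefficient of $\mathcal{T}^{(a+b)\Delta}_{m+n}$ in \eqref{crto}; the scale-$a$ and scale-$b$ pieces antisymmetrize into the stated $\mathcal{T}^{a\Delta}_{m+n}$ and $\mathcal{T}^{b\Delta}_{m+n}$ coefficients, and the difference of the two remainders is set equal to $h^{\Delta(a,b)}_{\mathcal{R}(p,q)}(m,n)$ of \eqref{crtoh}. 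The hard part will be purely organizational: keeping the four towers of $\tau_1,\tau_2$ exponents straight across the non-commuting composition, and verifying that once the scale-$(a+b)$, scale-$a$ and scale-$b$ brackets have been extracted the residual terms genuinely close into $f$ and $h$; this last step again rests on the additivity of the $\mathcal{R}$-numbers at each scale.
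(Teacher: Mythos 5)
Your overall strategy --- compose the two operators, linearize every $\mathcal{R}$-bracket through $[X]_{\mathcal{R}(p^c,q^c)}=\big(\tau_1^{cX}-\tau_2^{cX}\big)/\big(\tau_1^{c}-\tau_2^{c}\big)$ as in \eqref{rpqn}, expand the product into four $\tau$-monomials, regroup the two pure ones into the scale-$(a+b)$ bracket and the two mixed ones into the scale-$a$ and scale-$b$ brackets, define the leftover to be $f^{\Delta(a,b)}_{\mathcal{R}(p,q)}(m,n)$ of \eqref{crtof}, and obtain \eqref{crto} by the substitution $(a,m)\leftrightarrow(b,n)$ and subtraction, with $h=f^{\Delta(a,b)}(m,n)-f^{\Delta(b,a)}(n,m)$ as in \eqref{crtoh} --- is exactly the route the paper takes: its proof writes the composition as $\mathcal{T}^{a\Delta}_{m}.\mathcal{T}^{b\Delta}_n=[\Delta(n+1)]_{\mathcal{R}(p^{b},q^{b})}[\Delta(m+1)+n]_{\mathcal{R}(p^{a},q^{a})}\,x^{m+n}$ and then performs this same regrouping.

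There is, however, one step in your plan that would fail if executed literally. You keep a general monomial $z^k$, so your brackets are $[\Delta(m+1)+n+k]_{\mathcal{R}(p^a,q^a)}\,[\Delta(n+1)+k]_{\mathcal{R}(p^b,q^b)}$, and you then claim they decompose into the brackets defining $\mathcal{T}^{(a+b)\Delta}_{m+n}$, $\mathcal{T}^{a\Delta}_{m+n}$, $\mathcal{T}^{b\Delta}_{m+n}$ with the constant (i.e.\ $z\partial_z$-free) coefficients appearing in \eqref{pre}. This is false for $k\neq 0$: the pure terms do recombine for every $k$, since $\tau_1^{a(A+k)}\tau_1^{b(B+k)}=\tau_1^{aA+bB}\,\tau_1^{(a+b)k}$ matches the $\tau_1$-part of the scale-$(a+b)$ bracket, but each mixed term carries the $k$-dependence $\tau_1^{ak}\tau_2^{bk}$ (resp.\ $\tau_2^{ak}\tau_1^{bk}$), which is not the $k$-dependence of any single scale-$a$, scale-$b$ or scale-$(a+b)$ bracket, so no constant coefficients and no constant remainder $f$ can absorb them. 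A concrete check makes this visible: for $m=n=0$, $a=b=1$, $\tau_1=p$, $\tau_2=q$, the left-hand side minus the right-hand side of \eqref{pre}, both applied to $z^k$, equals $2q^{\Delta}(1-q^{k})\big(p^{\Delta+k}-q^{\Delta+k}\big)/(p-q)^2$, which vanishes only at $k=0$. The proposition is therefore a statement about the reduced symbols \eqref{rpqopa} --- equivalently the $k=0$ evaluation, which is precisely the $k$-free composition formula the paper writes down. Once you specialize to that level, your expansion and regrouping go through verbatim and reproduce the paper's $f$ and $h$; as written, though, your proof asserts an operator identity on all monomials that does not hold.
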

\begin{proof} From the relation \eqref{rpqop1} and after computation,  we have
	\begin{eqnarray*}
		\mathcal{T}^{a\Delta}_{m}.\mathcal{T}^{b\Delta}_n=[\Delta(n+1)]_{{\mathcal R}(p^{b},q^{b})}[\Delta(m+1)+n]_{{\mathcal R}(p^{a},q^{a})}x^{m+n}.
	\end{eqnarray*}
	Using the $\mathcal{R}(p,q)$-numbers\cite{HMM}:
	\begin{eqnarray}\label{rpqn}
	[n]_{\mathcal{R}(p,q)}=\frac{\tau^n_1-\tau^n_2}{\tau_1-\tau_2},\quad \tau_1\neq \tau_2,
	\end{eqnarray}
	the relation \eqref{pre} and \eqref{crto} follows
	by  straightforward computation with
	\begin{eqnarray}\label{crtof}
	f^{\Delta(a,b)}_{\mathcal{R}(p,q)}(m,n)=\tau^{(a+b)\Delta(m+n+1)}_2\,[-\Delta\,m]_{\mathcal{R}(p^{b},q^{b})}\,[n(1-\Delta)]_{\mathcal{R}(p^{a},q^{a})},
	\end{eqnarray}
	\begin{eqnarray*}
		f^{\Delta(b,a)}_{\mathcal{R}(p,q)}(n,m)=\tau^{(a+b)\Delta(m+n+1)}_2\,[-\Delta\,n]_{\mathcal{R}(p^{a},q^{a})}\,[m(1-\Delta)]_{\mathcal{R}(p^{b},q^{b})}.
	\end{eqnarray*}	
	and
	\begin{align}\label{crtoh}
	h^{\Delta(a,b)}_{\mathcal{R}(p,q)}(m,n)&=f^{\Delta(a,b)}_{\mathcal{R}(p,q)}(m,n)-f^{\Delta(b,a)}_{\mathcal{R}(p,q)}(n,m)\nonumber\\&=\tau^{(a+b)\Delta(m+n+1)}_2\bigg([-\Delta\,m]_{\mathcal{R}(p^{b},q^{b})}\,[n(1-\Delta)]_{\mathcal{R}(p^{a},q^{a})}\nonumber\\&-[-\Delta\,n]_{\mathcal{R}(p^{a},q^{a})}\,[m(1-\Delta)]_{\mathcal{R}(p^{b},q^{b})}\bigg).
	\end{align}
\end{proof}
\begin{remark} 
	Putting $a=b=1,$ in the relation \eqref{crto}, we obtain the commutation relation:
	\begin{align*}
	\Big[\mathcal{T}^{\Delta}_{m}, \mathcal{T}^{\Delta}_n\Big]&={\tau^{-\Delta(n+m)}_1\big(\tau^{n}_1-\tau^{m}_1\big)\over \big(\tau_1-\tau_2\big)}[2]_{\mathcal{R}(p,q)}\mathcal{T}^{2\Delta}_{m+n}+h^{\Delta}_{\mathcal{R}(p,q)}(m,n)\nonumber\\
	&-\bigg(\frac{\tau^{\Delta(n+1)}_2\big(\tau^{n}_1-\tau^{m}_2\big)}{\tau^{\Delta\,n}_1\big(\tau_1-\tau_2\big)}+ \frac{\tau^{\Delta(m+1)}_2\big(\tau^{m}_1-\tau^{n}_2\big)}{\tau^{\Delta\,m}_1\big(\tau_1-\tau_2\big)}\bigg)\mathcal{T}^{\Delta}_{m+n},
	\end{align*}
	where
	\begin{align*}
	h^{\Delta(1,1)}_{\mathcal{R}(p,q)}(m,n)&=\tau^{2\Delta(m+n+1)}_2\bigg([-\Delta\,m]_{\mathcal{R}(p,q)}\,[n(1-\Delta)]_{\mathcal{R}(p,q)}\nonumber\\&-[-\Delta\,n]_{\mathcal{R}(p,q)}\,[m(1-\Delta)]_{\mathcal{R}(p,q)}\bigg).
	\end{align*}
\end{remark}
Note that for $\Delta=1,$ we recovered the $\mathcal{R}(p,q)$-deformed Witt algebra given in \cite{HMM}.
\begin{corollary}
	The $q$-conformal operators
	\begin{eqnarray*}\label{qopa}
		{\mathcal T}^{a\Delta}_m=[\Delta(m+1)]_{q^{a}}\,z^{m}.
	\end{eqnarray*}
	obeys  the commutation relation
	\begin{align*}
	\Big[\mathcal{T}^{a\Delta}_{m}, \mathcal{T}^{b\Delta}_n\Big]&={\big(q^{a+b}-1\big)\over \big(q^{a}-1\big)\big(q^{b}-1\big)}\bigg(\frac{q^{an(1-\Delta)}}{q^{b\Delta\,m}}-\frac{q^{bm(1-\Delta)}}{q^{a\Delta\,n}}\bigg)\mathcal{T}^{(a+b)\Delta}_{m+n}\nonumber\\ &-	\frac{1}{q^{a\Delta\,n}}\frac{\big(q^{an}-1\big)}{\big(q^{b}-1\big)}\mathcal{T}^{a\Delta}_{m+n}- \frac{1}{q^{b\Delta\,m}}\frac{\big(q^{bm}-1\big)}{\big(q^{a}-1\big)}\mathcal{T}^{b\Delta}_{m+n}\nonumber\\&+h^{\Delta(a,b)}_{q}(m,n),
	\end{align*}
	where
	\begin{eqnarray*}
		h^{\Delta(a,b)}_{q}(m,n)=\bigg([-\Delta\,m]_{q^{b}}[n(1-\Delta)]_{q^{a}}-[-\Delta\,n]_{q^{a}}[m(1-\Delta)]_{q^{b}}\bigg).
	\end{eqnarray*}
\end{corollary}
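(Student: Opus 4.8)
The plan is to obtain this Corollary as a direct specialization of the preceding Proposition, equation \eqref{crto}, to the single-parameter $q$-deformed regime; no independent computation is required.

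First I would fix the correct parametrization. By \eqref{rpqopa} the $\mathcal{R}(p,q)$-operators are $\mathcal{T}^{a\Delta}_m=[\Delta(m+1)]_{\mathcal{R}(p^a,q^a)}z^m$, and the standard $q$-numbers arise from \eqref{rpqn} upon setting $\tau_1=q$ and $\tau_2=1$, which gives $[n]_{\mathcal{R}(p,q)}=\frac{q^n-1}{q-1}=[n]_q$. Since the $a$-scaled number $[n]_{\mathcal{R}(p^a,q^a)}$ is obtained from the substitution $\tau_i\mapsto\tau_i^a$, it reads $\frac{q^{an}-1}{q^a-1}=[n]_{q^a}$, reproducing the stated $q$-conformal operators $\mathcal{T}^{a\Delta}_m=[\Delta(m+1)]_{q^a}z^m$. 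The degenerate choice $\tau_2=1$ stays compatible with \eqref{rpqn} because $\tau_1=q\neq 1$ for generic $q$.

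Next I would substitute $\tau_1=q$, $\tau_2=1$ into \eqref{crto} term by term. The leading coefficient $\frac{\tau_1^{a+b}-\tau_2^{a+b}}{(\tau_1^a-\tau_2^a)(\tau_1^b-\tau_2^b)}$ becomes $\frac{q^{a+b}-1}{(q^a-1)(q^b-1)}$, while the bracket multiplying $\mathcal{T}^{(a+b)\Delta}_{m+n}$ keeps its shape with every $\tau_1$ replaced by $q$. The essential simplification is that all powers of $\tau_2$ collapse to $1$: the prefactors $\tau_2^{b\Delta(n+1)}$ and $\tau_2^{a\Delta(m+1)}$ on the $\mathcal{T}^{a\Delta}_{m+n}$ and $\mathcal{T}^{b\Delta}_{m+n}$ terms become unity, and inside the differences $\tau_1^{an}-\tau_2^{bm}$ and $\tau_1^{bm}-\tau_2^{an}$ the subtracted terms reduce to $1$, producing $q^{an}-1$ and $q^{bm}-1$. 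Collecting the three operator-valued contributions then reproduces exactly the claimed coefficients of $\mathcal{T}^{(a+b)\Delta}_{m+n}$, $\mathcal{T}^{a\Delta}_{m+n}$ and $\mathcal{T}^{b\Delta}_{m+n}$.

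Finally I would specialize the inhomogeneous piece. From \eqref{crtoh} one has $h^{\Delta(a,b)}_{\mathcal{R}(p,q)}(m,n)=\tau_2^{(a+b)\Delta(m+n+1)}\big([-\Delta m]_{q^b}[n(1-\Delta)]_{q^a}-[-\Delta n]_{q^a}[m(1-\Delta)]_{q^b}\big)$; with $\tau_2=1$ the overall prefactor is $1$, so $h^{\Delta(a,b)}_{q}(m,n)$ takes precisely the stated form. The argument is pure bookkeeping; the only points meriting care are confirming that the $\tau_2$-powers vanish uniformly, so that no residual deformation parameter survives in the central piece, and that the three distinct operators are each matched with the correct prefactor. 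There is no genuine obstacle beyond this substitution.
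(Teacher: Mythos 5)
Your proposal is correct and coincides with the paper's own (implicit) argument: the paper obtains such $q$-corollaries precisely by setting $\tau_1=q$, $\tau_2=1$ in the $\mathcal{R}(p,q)$-relations (as stated explicitly in the proof of the later corollary on the $q$-deformed conformal Witt $n$-algebra), which is exactly your substitution into \eqref{crto} and \eqref{crtoh}. Your term-by-term bookkeeping, including the collapse of all $\tau_2$-powers to $1$ and the identification $[n]_{\mathcal{R}(p^a,q^a)}=[n]_{q^a}$, matches the intended derivation.
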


\begin{definition}
	The $n$-bracket is defined by:
	\begin{eqnarray*}
		\Big[{\mathcal T}^{a_1\Delta}_{m_1},\cdots,{\mathcal T}^{a_n\Delta}_{m_n}
		\Big]:=\epsilon^{i_1 \cdots i_n}_{1 \cdots n}\,{\mathcal T}^{a_{i_1}\Delta}_{m_{i_1}} \cdots {\mathcal T}^{a_{i_n}\Delta}_{m_{i_n}},
	\end{eqnarray*}
	where $\epsilon^{i_1 \cdots i_n}_{1 \cdots n}$ is the L\'evi-Civit\'a \eqref{LCsymbol}.
\end{definition}
Our interest is	focussed on the special  case with the same $a\Delta.$ Then, 
\begin{eqnarray*}
	\Big[{\mathcal T}^{a\Delta}_{m_1},\cdots,{\mathcal T}^{a\Delta}_{m_n}
	\Big]=\epsilon^{1\cdots n}_{1\cdots n}\,{\mathcal T}^{a\Delta}_{m_1} \cdots {\mathcal T}^{a\Delta}_{m_n}.
\end{eqnarray*}
Putting $a=b$ in the relation (\ref{crto}),  the following commutation relation holds:
\begin{eqnarray*}\label{crtoa}
	\Big[\mathcal{T}^{a\Delta}_{m}, \mathcal{T}^{a\Delta}_n\Big]&=&{\tau^{-a\Delta(n+m)}_1\big(\tau^{an}_1-\tau^{am}_1\big)\over \tau^{a}_1-\tau^{a}_2}[2]_{\mathcal{R}(p^a,q^a)}\mathcal{T}^{2a\Delta}_{m+n}\nonumber\\ &-&\frac{1}{\big(\tau_1-\tau_2\big)}\bigg(\frac{\tau^{a\Delta(n+1)}_2}{\tau^{a\Delta\,n}_1}\big(\tau^{an}_1-\tau^{am}_2\big)\nonumber\\&-& \frac{\tau^{a\Delta(m+1)}_2}{\tau^{a\Delta\,m}_1}\big(\tau^{am}_1-\tau^{an}_2\big)\bigg)\mathcal{T}^{a\Delta}_{m+n}+h^{\Delta(a)}_{\mathcal{R}(p,q)}(m,n),
\end{eqnarray*}
where \begin{align*}
h^{\Delta(a)}_{\mathcal{R}(p,q)}(m,n)&=\tau^{2a\Delta(m+n+1)}_2\bigg([-\Delta\,m]_{{\mathcal R}(p^{a},q^{a})}[n(1-\Delta)]_{{\mathcal R}(p^{a},q^{a})}\nonumber\\&-[-\Delta\,n]_{{\mathcal R}(p^{a},q^{a})}[m(1-\Delta)]_{{\mathcal R}(p^{a},q^{a})}\bigg).
\end{align*}
\begin{proposition}\label{propcWitt} The $\mathcal{R}(p,q)$-conformal Witt $n$-algebra is given by the following commutation relation:
	\begin{align}\label{crna}
	\Big[{\mathcal T}^{a\Delta}_{m_1},\cdots,{\mathcal T}^{a\Delta}_{m_n}
	\Big]&={(-1)^{n+1}\over \big(\tau^{a}_1-\tau^{a}_2\big)^{n-1}}\Big( V^n_{a\Delta}[n]_{{\mathcal R}(p^{a},q^{a})}{\mathcal T}^{n\,a\Delta}_{\bar{m}}\nonumber\\ &- [n-1]_{{\mathcal R}(p^{a},q^{a})}\big(D^n_{a\Delta}+ N^n_{a\Delta}\big){\mathcal T}^{a(n-1)\Delta}_{\bar{m}}\Big)\nonumber\\&+h^{\Delta}_{\mathcal{R}(p,q)}(m_1,\ldots,m_n),
	\end{align}
	where 
	\begin{align}\label{vn}
	V^n_{a\Delta}&= \tau^{a(n-1)(1-\Delta )\bar{m}}_1\Big(\big(\tau^{a}_1-\tau^{a}_2\big)^{n\choose 2}\prod_{1\leq j < k \leq n}\Big([m_j]_{{\mathcal R}(p^{a},q^{a})}\nonumber\\&-[m_k]_{{\mathcal R}(p^{a},q^{a})}\Big)+\prod_{1\leq j < k \leq n}\Big(\tau^{a\,m_j}_2-\tau^{a\,m_k}_2\Big)\Big),
	\end{align}
	\begin{align}\label{dn}
	D^n_{a\Delta}
	&=\Big(\big(\tau^{a}_1-\tau^{a}_2\big)^{n\choose 2}\prod_{1\leq j < k \leq n}\frac{\tau^{a\Delta(m_k+1)}_2}{\tau^{a\Delta\,m_k}_1}\Big([m_k]_{{\mathcal R}(p^{a},q^{a})}-[m_j]_{{\mathcal R}(p^{a},q^{a})}\nonumber\\&+\tau^{am_k}_2-\tau^{am_j}_1\Big),
	\end{align}
	\begin{align}\label{nn}
	N^n_{a\Delta}&=(-1)^{n+1}
	\Big(\big(\tau^{a}_1-\tau^{a}_2\big)^{n\choose 2}\prod_{1\leq j < k \leq n}\frac{\tau^{a\Delta(m_j+1)}_2}{\tau^{a\Delta\,m_j}_1}\Big([m_j]_{{\mathcal R}(p^{a},q^{a})}\nonumber\\&-[m_k]_{{\mathcal R}(p^{a},q^{a})}+\tau^{am_j}_2-\tau^{am_k}_1\Big),
	\end{align}
	and
	\begin{align}\label{hn}
	h^{\Delta(a)}_{\mathcal{R}(p,q)}(m_1,&\ldots,m_n)=\tau^{an\Delta(\bar{m}+1)}_2\prod_{1\leq j < k \leq n}\bigg([-\Delta\,m_j]_{{\mathcal R}(p^{a},q^{a})}\nonumber\\&\times[m_k(1-\Delta)]_{{\mathcal R}(p^{a},q^{a})}-[-\Delta\,m_k]_{{\mathcal R}(p^{a},q^{a})}[m_j(1-\Delta)]_{{\mathcal R}(p^{a},q^{a})}\bigg).
	\end{align}
\end{proposition}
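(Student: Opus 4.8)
The plan is to compute the totally antisymmetrized product defining the $n$-bracket directly from the two-operator product law already at our disposal. Recall from \eqref{rpqopa} that each generator reduces to ${\mathcal T}^{a\Delta}_m=[\Delta(m+1)]_{{\mathcal R}(p^a,q^a)}z^m$, and that the binary product of two such operators is governed by \eqref{pre}, whose specialization $a=b$ produces a leading channel carrying grading $2a$ (namely ${\mathcal T}^{2a\Delta}_{m+n}$), two subleading channels of grading $a$, and a scalar piece descending from $f^{\Delta(a,a)}$. Writing every $\mathcal{R}(p^a,q^a)$-number through the representation \eqref{rpqn}, $[k]_{\mathcal{R}(p^a,q^a)}=(\tau_1^{ak}-\tau_2^{ak})/(\tau_1^a-\tau_2^a)$, turns all coefficients appearing in \eqref{pre} into explicit rational expressions in $\tau_1^a,\tau_2^a$, which is exactly the form needed to recognize the Vandermonde structures of \eqref{vn}--\eqref{nn}.

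First I would iterate \eqref{pre}: multiplying an ordered string ${\mathcal T}^{a\Delta}_{m_{i_1}}\cdots{\mathcal T}^{a\Delta}_{m_{i_n}}$ factor by factor, each application raises the grading by $a$ in its leading channel and either preserves it or drops one unit in the subleading channels, so the fully reduced string becomes a finite sum of single generators ${\mathcal T}^{ka\Delta}_{\bar m}$ with $1\le k\le n$ (here $\bar m=\sum_l m_l$) together with scalar terms inherited from the $f^{\Delta}$-pieces of \eqref{crtof}--\eqref{crtoh}. Before antisymmetrization, the coefficient of each ${\mathcal T}^{ka\Delta}_{\bar m}$ is a sum of monomials in $\tau_1^{am_{i_j}}$ and $\tau_2^{am_{i_j}}$, indexed by which factors contributed their leading versus their subleading channel.

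The heart of the argument is the antisymmetrization $\epsilon^{i_1\cdots i_n}_{1\cdots n}(\,\cdot\,)$ built into the definition of the $n$-bracket. Because the position index $j$ enters the reduced coefficients only through powers attached to $m_{i_j}$, summing against the L\'evi-Civit\'a symbol \eqref{LCsymbol} converts each symmetric block into a Vandermonde determinant: the sums collapse onto the products $\prod_{1\le j<k\le n}\big([m_j]_{{\mathcal R}(p^a,q^a)}-[m_k]_{{\mathcal R}(p^a,q^a)}\big)$ and $\prod_{1\le j<k\le n}\big(\tau_2^{am_j}-\tau_2^{am_k}\big)$ that constitute $V^n_{a\Delta}$, $D^n_{a\Delta}$ and $N^n_{a\Delta}$. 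The key cancellation to establish is that every intermediate grading $1\le k\le n-2$ antisymmetrizes to zero, because its coefficient carries a repeated exponent pattern in two of the $m_{i_j}$ and hence a vanishing Vandermonde factor; only the top grading $k=n$, which yields $V^n_{a\Delta}[n]_{{\mathcal R}(p^a,q^a)}{\mathcal T}^{na\Delta}_{\bar m}$, and the next grading $k=n-1$, which yields the $D^n_{a\Delta}+N^n_{a\Delta}$ combination multiplying $[n-1]_{{\mathcal R}(p^a,q^a)}{\mathcal T}^{a(n-1)\Delta}_{\bar m}$, survive. The residual scalars assemble, by the same Vandermonde mechanism applied to \eqref{crtof}--\eqref{crtoh}, into $h^{\Delta(a)}_{\mathcal{R}(p,q)}(m_1,\ldots,m_n)$ of \eqref{hn}.

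I expect the bookkeeping behind the previous paragraph to be the main obstacle: tracking, through the nested iteration of \eqref{pre}, the $\tau_1,\tau_2$-weights carried by each channel and confirming that they reorganize into exactly the prefactor $(-1)^{n+1}(\tau_1^a-\tau_2^a)^{-(n-1)}$ together with the $(\tau_1^a-\tau_2^a)^{\binom{n}{2}}$ normalizations inside \eqref{vn}--\eqref{nn}. To keep this under control I would first verify the identity for $n=2$, where \eqref{crna} must reduce to the $a=b$ case of the commutator \eqref{crto}, then for $n=3$ to fix the sign and floor-function conventions, and finally run an induction on $n$: peel the last factor ${\mathcal T}^{a\Delta}_{m_{i_n}}$ off the string, apply \eqref{pre} to merge it with the reduced $(n-1)$-fold product supplied by the inductive hypothesis, and re-antisymmetrize, checking that each Vandermonde product gains precisely one new factor at every step. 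The skew-symmetry of the $n$-bracket is automatic from the L\'evi-Civit\'a definition \eqref{LCsymbol}, so no separate verification is needed.
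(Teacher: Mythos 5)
Your reading of what the paper's one-line proof (``by straigthfoward computation'') must amount to is essentially right: reduce each generator to $\mathcal{T}^{a\Delta}_m=[\Delta(m+1)]_{\mathcal{R}(p^a,q^a)}\,x^m$ as in \eqref{rpqopa}, expand all deformed numbers through \eqref{rpqn}, antisymmetrize against \eqref{LCsymbol}, and recognize the Vandermonde blocks \eqref{vn}--\eqref{nn} and the scalar \eqref{hn}. The genuine gap is in the engine you propose for the reduction. The product law \eqref{pre} is \emph{not} an operator identity: it holds only when the composition is evaluated on constants. Indeed, acting on $x^s$ the left-hand side of \eqref{pre} (take $a=b$) equals $[\Delta(n+1)+s]_{\mathcal{R}(p^a,q^a)}[\Delta(m+1)+n+s]_{\mathcal{R}(p^a,q^a)}\,x^{m+n+s}$, whose expansion contains a cross term proportional to $(\tau_1\tau_2)^{as}$, whereas every term on the right-hand side acting on $x^s$ produces only $\tau_1^{2as}$, $\tau_2^{2as}$, $\tau_1^{as}$, $\tau_2^{as}$ and $s$-independent contributions; these exponentials are independent as functions of $s$, so equality holds only at $s=0$. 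Consequently \eqref{pre} cannot be ``iterated factor by factor'' inside a longer string, and your induction step --- peel off the rightmost factor $\mathcal{T}^{a\Delta}_{m_{i_n}}$ and merge it with the reduced $(n-1)$-fold product via \eqref{pre} --- fails at exactly that point: in $\mathcal{T}^{a\Delta}_{m_{i_1}}\cdots\mathcal{T}^{a\Delta}_{m_{i_n}}$ the first $n-1$ factors act on $x^{m_{i_n}}$, not on constants, so neither \eqref{pre} nor the inductive hypothesis (itself a statement about brackets evaluated on constants) may be substituted there, and doing so produces wrong coefficients.

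The computation can be organized correctly in two ways. Either do it in one shot: acting on $1$, the ordered composition is $\mathcal{T}^{a\Delta}_{m_{i_1}}\cdots\mathcal{T}^{a\Delta}_{m_{i_n}}\cdot 1=\prod_{j=1}^{n}\big[\Delta(m_{i_j}+1)+\sum_{k>j}m_{i_k}\big]_{\mathcal{R}(p^a,q^a)}\,x^{\bar m}$; expanding each factor by \eqref{rpqn} and antisymmetrizing turns the bracket into an alternating sum of $2^n$ monomial determinants, and this is where your Vandermonde mechanism genuinely applies (there are then no ``intermediate gradings'' to cancel; the identification of the $\mathcal{T}^{na\Delta}_{\bar m}$, $\mathcal{T}^{a(n-1)\Delta}_{\bar m}$ and scalar parts is done once, at the end). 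Or, if you want an induction, expand the L\'evi-Civit\'a sum along the \emph{first} slot, $\big[\mathcal{T}^{a\Delta}_{m_1},\ldots,\mathcal{T}^{a\Delta}_{m_n}\big]=\sum_{j=1}^n(-1)^{j-1}\,\mathcal{T}^{a\Delta}_{m_j}\big[\mathcal{T}^{a\Delta}_{m_1},\ldots,\widehat{\mathcal{T}^{a\Delta}_{m_j}},\ldots,\mathcal{T}^{a\Delta}_{m_n}\big]$: now the inner $(n-1)$-bracket does act on constants, and the compositions $\mathcal{T}^{a\Delta}_{m_j}\mathcal{T}^{ka\Delta}_{M}$ evaluated on $1$ are legitimately governed by \eqref{pre} with parameters $(a,ka)$. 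Finally, be aware that your $n=2$ consistency check, carried out correctly, will not confirm the proposition as printed: the leading coefficient coming from \eqref{crna} with \eqref{vn} at $n=2$ carries $\tau_1^{a(1-\Delta)\bar m}$ where the $a=b$ specialization of \eqref{crto} (and the direct expansion) gives $\tau_1^{-a\Delta\bar m}$, an extra factor of $\tau_1^{a\bar m}$; so part of the ``straightforward computation'' here is repairing the normalization of $V^n_{a\Delta}$, not merely verifying it.
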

\begin{proof}
	By straigthfoward computation.
\end{proof}
The results concerning the $q$-deformed conformal Witt $n$-algebra are contained in the following corollary.
\begin{corollary}
	The $q$-deformed conformal Witt $n$-algebra is given by the following commutation relation:
	\begin{align}\label{qcrna}
	\Big[{\mathcal T}^{a\Delta}_{m_1},\cdots,{\mathcal T}^{a\Delta}_{m_n}
	\Big]&={(-1)^{n+1}\over \big(q^{a}-1\big)^{n-1}}\Big( V^n_{a\Delta}[n]_{q^{a}}{\mathcal T}^{n\,a\Delta}_{\bar{m}}\nonumber\\ &- [n-1]_{q^{a}}\big(D^n_{a\Delta}+ N^n_{a\Delta}\big){\mathcal T}^{a(n-1)\Delta}_{\bar{m}}\Big)\nonumber\\&+h^{\Delta}_{q}(m_1,\ldots,m_n),
	\end{align}
	where 
	\begin{eqnarray*}
		V^n_{a\Delta}= q^{a(n-1)(1-\Delta )\bar{m}}\Big(\big(q^{a}-1\big)^{n\choose 2}\prod_{1\leq j < k \leq n}\Big([m_j]_{q^{a}}-[m_k]_{q^{a}}\Big)\Big),
	\end{eqnarray*}
	\begin{eqnarray*}
		D^n_{a\Delta}
		=\Big(\big(q^{a}-1\big)^{n\choose 2}\prod_{1\leq j < k \leq n}\frac{1}{q^{a\Delta\,m_k}}\Big([m_k]_{q^{a}}-[m_j]_{q^{a}}+1-q^{am_j}\Big),
	\end{eqnarray*}
	\begin{eqnarray*}
		N^n_{a\Delta}=(-1)^{n+1}
		\Big(\big(q^{a}-1\big)^{n\choose 2}\prod_{1\leq j < k \leq n}\frac{1}{q^{a\Delta\,m_j}}\Big([m_j]_{q^{a}}-[m_k]_{q^{a}}+1-q^{am_k}\Big),
	\end{eqnarray*}
	and
	\begin{align*}
	h^{\Delta(a)}_{q}(m_1,\ldots,m_n)&=\prod_{1\leq j < k \leq n}\bigg([-\Delta\,m_j]_{q^{a}}[m_k(1-\Delta)]_{q^{a}}\nonumber\\&-[-\Delta\,m_k]_{q^{a}}[m_j(1-\Delta)]_{q^{a}}\bigg).
	\end{align*}
\end{corollary}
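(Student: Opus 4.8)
The plan is to obtain this statement as a direct specialization of Proposition~\ref{propcWitt}. Recall that the $\mathcal{R}(p,q)$-numbers entering \eqref{crna}--\eqref{hn} are written, via \eqref{rpqn}, in the factorized form $[n]_{\mathcal{R}(p^a,q^a)}=\bigl(\tau_1^{an}-\tau_2^{an}\bigr)/\bigl(\tau_1^a-\tau_2^a\bigr)$. The single-parameter $q$-deformation is recovered by setting $\tau_1=q$ and $\tau_2=1$, which turns every $\mathcal{R}(p^a,q^a)$-number into the standard $q$-number $[n]_{q^a}=\bigl(q^{an}-1\bigr)/\bigl(q^a-1\bigr)$ and reduces the operator \eqref{rpqopa} to ${\mathcal T}^{a\Delta}_m=[\Delta(m+1)]_{q^a}\,z^m$. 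First I would substitute $\tau_1=q$, $\tau_2=1$ into the prefactor $(-1)^{n+1}\bigl(\tau_1^a-\tau_2^a\bigr)^{-(n-1)}$ of \eqref{crna}, obtaining $(-1)^{n+1}(q^a-1)^{-(n-1)}$, and likewise replace the leading structure constants $[n]_{\mathcal{R}(p^a,q^a)}$ and $[n-1]_{\mathcal{R}(p^a,q^a)}$ by $[n]_{q^a}$ and $[n-1]_{q^a}$.

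Next I would carry the same substitution through the four building blocks. In $V^n_{a\Delta}$ of \eqref{vn} the prefactor $\tau_1^{a(n-1)(1-\Delta)\bar{m}}$ becomes $q^{a(n-1)(1-\Delta)\bar{m}}$ and the first product $\bigl(\tau_1^a-\tau_2^a\bigr)^{\binom{n}{2}}\prod_{1\le j<k\le n}\bigl([m_j]-[m_k]\bigr)$ becomes $(q^a-1)^{\binom{n}{2}}\prod_{1\le j<k\le n}\bigl([m_j]_{q^a}-[m_k]_{q^a}\bigr)$; crucially, the second product $\prod_{1\le j<k\le n}\bigl(\tau_2^{am_j}-\tau_2^{am_k}\bigr)$ collapses to $\prod_{1\le j<k\le n}(1-1)=0$, which is exactly why the $q$-deformed $V^n_{a\Delta}$ retains only a single product. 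In $D^n_{a\Delta}$ and $N^n_{a\Delta}$ from \eqref{dn}--\eqref{nn} the ratios $\tau_2^{a\Delta(m_k+1)}/\tau_1^{a\Delta m_k}$ and $\tau_2^{a\Delta(m_j+1)}/\tau_1^{a\Delta m_j}$ reduce to $q^{-a\Delta m_k}$ and $q^{-a\Delta m_j}$, while the additive corrections $\tau_2^{am_k}-\tau_1^{am_j}$ and $\tau_2^{am_j}-\tau_1^{am_k}$ become $1-q^{am_j}$ and $1-q^{am_k}$, respectively.

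Finally I would treat the anomaly $h^{\Delta(a)}_{\mathcal{R}(p,q)}$ of \eqref{hn}: the overall factor $\tau_2^{an\Delta(\bar{m}+1)}$ equals $1$, and the remaining product of terms $[-\Delta\,m_j]_{q^a}[m_k(1-\Delta)]_{q^a}-[-\Delta\,m_k]_{q^a}[m_j(1-\Delta)]_{q^a}$ survives unchanged, yielding $h^{\Delta(a)}_q$. Assembling these pieces reproduces \eqref{qcrna} together with the displayed expressions for $V^n_{a\Delta}$, $D^n_{a\Delta}$, $N^n_{a\Delta}$ and $h^{\Delta(a)}_q$, so the corollary follows. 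The only point needing a moment of care is the vanishing of the $\tau_2$-difference product in $V^n_{a\Delta}$; everything else is a term-by-term transcription under $\tau_1=q,\ \tau_2=1$, so no genuine obstacle arises beyond bookkeeping.
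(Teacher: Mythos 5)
Your proposal is correct and coincides with the paper's own proof, which reads simply ``By taking $\tau_1=q$ and $\tau_2=1$ in the proposition''; you carry out the same specialization of Proposition \eqref{propcWitt}, only in more explicit detail. Your observation that the product $\prod_{1\leq j<k\leq n}\bigl(\tau_2^{am_j}-\tau_2^{am_k}\bigr)$ vanishes at $\tau_2=1$, explaining why the $q$-deformed $V^n_{a\Delta}$ keeps only one product, is a correct and worthwhile point the paper leaves implicit.
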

\begin{proof}
	By taking $\tau_1=q$ and $\tau_2=1$ in the proposition \eqref{propcWitt}.
\end{proof}
\begin{remark}
	For $n=3$ in the relation (\ref{crna}), we obtain the ${\mathcal R}(p,q)$-conformal Witt $3$-algebra:
	\begin{align*}
	\Big[{\mathcal T}^{a\Delta}_{m_1},{\mathcal T}^{a\Delta}_{m_2},{\mathcal T}^{a\Delta}_{m_3}
	\Big]&={1\over \big(\tau^{a}_1-\tau^{a}_2\big)^{2}}\Big( V^3_{a\Delta}[3]_{{\mathcal R}(p^{a},q^{a})}{\mathcal T}^{3\,a\Delta}_{m_1+m_2+m_3}\nonumber\\ &- [2]_{{\mathcal R}(p^{a},q^{a})}\big(D^3_{a\Delta}+ N^3_{a\Delta}\big){\mathcal T}^{2a\Delta}_{m_1+m_2+m_3}\Big)\nonumber\\&+h^{\Delta}_{\mathcal{R}(p,q)}(m_1,m_2,m_3),
	\end{align*}
	where 
	\begin{align*}
	V^3_{a\Delta}&= \tau^{2a(1-\Delta )(m_1+m_2+m_3)}_1\Big(\big(\tau^{a}_1-\tau^{a}_2\big)^{3\choose 2}\prod_{1\leq j < k \leq 3}\Big([m_j]_{{\mathcal R}(p^{a},q^{a})}\nonumber\\&-[m_k]_{{\mathcal R}(p^{a},q^{a})}\Big)+\prod_{1\leq j < k \leq 3}\Big(\tau^{a\,m_j}_2-\tau^{a\,m_k}_2\Big)\Big),
	\end{align*}
	\begin{align*}
	D^3_{a\Delta}
	&=\Big(\big(\tau^{a}_1-\tau^{a}_2\big)^{3\choose 2}\prod_{1\leq j < k \leq 3}\frac{\tau^{a\Delta(m_k+1)}_2}{\tau^{a\Delta\,m_k}_1}\Big([m_k]_{{\mathcal R}(p^{a},q^{a})}-[m_j]_{{\mathcal R}(p^{a},q^{a})}\nonumber\\&+\tau^{am_k}_2-\tau^{am_j}_1\Big),
	\end{align*}
	\begin{align*}
	N^3_{a\Delta}&=
	\Big(\big(\tau^{a}_1-\tau^{a}_2\big)^{3\choose 2}\prod_{1\leq j < k \leq 3}\frac{\tau^{a\Delta(m_j+1)}_2}{\tau^{a\Delta\,m_j}_1}\Big([m_j]_{{\mathcal R}(p^{a},q^{a})}-[m_k]_{{\mathcal R}(p^{a},q^{a})}\nonumber\\&+\tau^{am_j}_2-\tau^{am_k}_1\Big),.
	\end{align*}
	and
	\begin{align*}
	h^{\Delta(a)}_{\mathcal{R}(p,q)}(m_1,&m_2,m_3)=\tau^{3a\Delta(m_1+m_2+m_3+1)}_2\prod_{1\leq j < k \leq 3}\bigg([-\Delta\,m_j]_{{\mathcal R}(p^{a},q^{a})}\nonumber\\&\times[m_k(1-\Delta)]_{{\mathcal R}(p^{a},q^{a})}-[-\Delta\,m_k]_{{\mathcal R}(p^{a},q^{a})}[m_j(1-\Delta)]_{{\mathcal R}(p^{a},q^{a})}\bigg).		
	\end{align*}
\end{remark}
\subsection{Conformal Witt $n$-algebra and quantum algebras}
In this section, the conformal Witt $n$-algebra corresponding to quantum algebra known in the literature are deduced.
\subsubsection{Conformal Witt $n$-algebra associated to the Biedenharn-Macfarlane algebra }
The conformal Witt $n$-algebra induced by the Biedenharn-Macfarlane algebra\cite{BC,M} is derived by:  
for $(\Delta \neq 0,1),$ the $q$-conformal operator defined by:
\begin{eqnarray}\label{qop1}
{\mathcal T}^{a\Delta}_m:=x^{(1-\Delta)(m+1)}{\mathcal D}_{q^{a}}\,x^{\Delta(m+1)},
\end{eqnarray}
where ${\mathcal D}_{q^{a}}$ is the $q$-derivative :
\begin{eqnarray}\label{qder}
\mathcal{D}_{q^a}\phi(x)= \frac{\phi(q^a\,x)-\phi(q^{-a}\,x)}{q^a-q^{-a}}.
\end{eqnarray}
From the  $q$-number
\begin{eqnarray}\label{qnumber}
[n]_{q^a}=\frac{q^{an}-q^{-an}}{q^a-q^{-a}},
\end{eqnarray} the relation \eqref{qop1} is reduced as:
\begin{eqnarray*}
	{\mathcal T}^{a\Delta}_m=[\Delta(m+1)]_{q^{a}}\,x^{m}.
\end{eqnarray*}
Thus, the $q$-conformal operators (\ref{qop1}) satisfy the product relation: \begin{align}
\mathcal{T}^{a\Delta}_{m}.\mathcal{T}^{b\Delta}_n&={\big(q^{a+b}-q^{-a-b}\big)q^{an(1-\Delta)}\over \big(q^{a}-q^{-a}\big)\big(q^{b}-q^{-b}\big)q^{b\Delta\,m}}\,\mathcal{T}^{(a+b)\Delta}_{m+n}+ f^{\Delta(a,b)}_{q}(m,n)\nonumber\\& - {q^{-a(\Delta(m+1)+n)}\over q^{b\Delta\,m}\big(q^{a}-q^{-a}\big)}\,\mathcal{T}^{b\Delta}_{m+n}- {q^{-b\Delta(n+1)}\over q^{an(\Delta-1)}\big(q^{b}-q^{-b}\big)}\, \mathcal{T}^{a\Delta}_{m+n},
\end{align}
with \begin{eqnarray*}
	f^{\Delta(a,b)}_{q}(m,n)=q^{-(a+b)\Delta(m+n+1)}\,[-\Delta\,m]_{q^{b}}\,[n(1-\Delta)]_{q^{a}},
\end{eqnarray*}	
and the commutation relation
\begin{align}\label{qcrtoa}
\Big[\mathcal{T}^{a\Delta}_{m}, \mathcal{T}^{b\Delta}_n\Big]&={\big(q^{a+b}-q^{-a-b}\big)\over \big(q^{a}-q^{-a}\big)\big(q^{b}-q^{-b}\big)}\bigg(\frac{q^{an(1-\Delta)}}{q^{b\Delta\,m}}-\frac{q^{bm(1-\Delta)}}{q^{a\Delta\,n}}\bigg)\mathcal{T}^{(a+b)\Delta}_{m+n}\nonumber\\ &-\frac{q^{-b\Delta(n+1)}}{q^{a\Delta\,n}}\frac{\big(q^{an}-q^{-bm}\big)}{\big(q^{b}-q^{-b}\big)}\mathcal{T}^{a\Delta}_{m+n}+h^{\Delta(a,b)}_{q}(m,n)\nonumber\\ &- \frac{q^{-a\Delta(m+1)}}{q^{b\Delta\,m}}\frac{\big(q^{bm}-q^{-an}\big)}{\big(q^{a}-q^{-a}\big)}\mathcal{T}^{b\Delta}_{m+n},
\end{align}
where 
\begin{align*}
h^{\Delta(a,b)}_{q}(m,n)&=q^{-(a+b)\Delta(m+n+1)}\bigg([-\Delta\,m]_{q^{b}}\,[n(1-\Delta)]_{q^{a}}\nonumber\\&-[-\Delta\,n]_{q^{a}}\,[m(1-\Delta)]_{q^{b}}\bigg).
\end{align*}
Putting $a=b=1,$ in the relation \eqref{qcrtoa},  the commutation relation:
\begin{align*}
\Big[\mathcal{T}^{\Delta}_{m}, \mathcal{T}^{\Delta}_n\Big]&={q^{-\Delta(n+m)}\big(q^{n}-q^{m}\big)\over \big(q-q^{-1}\big)}[2]_{q}\mathcal{T}^{2\Delta}_{m+n}+h^{\Delta}_{q}(m,n)\nonumber\\
&-\bigg(\frac{q^{-\Delta(n+1)}\big(q^{n}-q^{-m}\big)}{q^{\Delta\,n}\big(q-q^{-1}\big)}+ \frac{q^{-\Delta(m+1)}\big(q^{m}-q^{-n}\big)}{q^{\Delta\,m}\big(q-q^{-1}\big)}\bigg)\mathcal{T}^{\Delta}_{m+n},
\end{align*}
where
\begin{eqnarray*}
	h^{\Delta(1,1)}_{q}(m,n)=q^{-2\Delta(m+n+1)}\bigg([-\Delta\,m]_{q}\,[n(1-\Delta)]_{q}-[-\Delta\,n]_{q}\,[m(1-\Delta)]_{q}\bigg).
\end{eqnarray*}
Besides, the $n$-bracket is defined by:
\begin{eqnarray*}
	\Big[{\mathcal T}^{a_1\Delta}_{m_1},\cdots,{\mathcal T}^{a_n\Delta}_{m_n}
	\Big]:=\epsilon^{i_1 \cdots i_n}_{1 \cdots n}\,{\mathcal T}^{a_{i_1}\Delta}_{m_{i_1}} \cdots {\mathcal T}^{a_{i_n}\Delta}_{m_{i_n}}.
\end{eqnarray*}
Our interest is	focussed on the special  case with the same $a\Delta.$ Then, 
\begin{eqnarray*}
	\Big[{\mathcal T}^{a\Delta}_{m_1},\cdots,{\mathcal T}^{a\Delta}_{m_n}
	\Big]=\epsilon^{1\cdots n}_{1\cdots n}\,{\mathcal T}^{a\Delta}_{m_1} \cdots {\mathcal T}^{a\Delta}_{m_n}.
\end{eqnarray*}
Setting $a=b$ in the relation (\ref{qcrtoa}),  we obtain:
\begin{align*}
\Big[\mathcal{T}^{a\Delta}_{m}, \mathcal{T}^{a\Delta}_n\Big]&={q^{-a\Delta(n+m)}\big(q^{an}-q^{am}\big)\over q^{a}-q^{-a}}[2]_{q^a}\mathcal{T}^{2a\Delta}_{m+n}\nonumber\\ &-\frac{1}{\big(q-q^{-1}\big)}\bigg(\frac{q^{-a\Delta(n+1)}}{q^{a\Delta\,n}}\big(q^{an}-q^{-am}\big)\nonumber\\&- \frac{q^{-a\Delta(m+1)}}{q^{a\Delta\,m}}\big(q^{am}-q^{-an}\big)\bigg)\mathcal{T}^{a\Delta}_{m+n}+h^{\Delta(a)}_{q}(m,n),
\end{align*}
where \begin{align*}
h^{\Delta(a)}_{q}(m,n)&=q^{-2a\Delta(m+n+1)}\bigg([-\Delta\,m]_{q^{a}}[n(1-\Delta)]_{q^{a}}\nonumber\\&-[-\Delta\,n]_{q^{a}}[m(1-\Delta)]_{q^{a}}\bigg).
\end{align*}
The $q$-conformal Witt $n$-algebra is presented  by :
\begin{align*}
\Big[{\mathcal T}^{a\Delta}_{m_1},\cdots,{\mathcal T}^{a\Delta}_{m_n}
\Big]&={(-1)^{n+1}\over \big(q^{a}-q^{-a}\big)^{n-1}}\Big( V^n_{a\Delta}[n]_{q^{a}}{\mathcal T}^{n\,a\Delta}_{\bar{m}}\nonumber\\ &- [n-1]_{q^{a}}\big(D^n_{a\Delta}+ N^n_{a\Delta}\big){\mathcal T}^{a(n-1)\Delta}_{\bar{m}}\Big)\nonumber\\&+h^{\Delta}_{q}(m_1,\ldots,m_n),
\end{align*}
where 
\begin{align*}
V^n_{a\Delta}&= q^{a(n-1)(1-\Delta )\bar{m}}\Big(\big(q^{a}-q^{-a}\big)^{n\choose 2}\prod_{1\leq j < k \leq n}\Big([m_j]_{q^{a}}-[m_k]_{q^{a}}\Big)\nonumber\\&+\prod_{1\leq j < k \leq n}\Big(q^{-a\,m_j}-q^{-a\,m_k}\Big)\Big),
\end{align*}
\begin{eqnarray*}
	D^n_{a\Delta}
	=\Big(\big(q^{a}-q^{-a}\big)^{n\choose 2}\prod_{1\leq j < k \leq n}\frac{q^{-a\Delta(m_k+1)}}{q^{a\Delta\,m_k}}\Big([m_k]_{q^{a}}-[m_j]_{q^{a}}+q^{-am_k}-q^{am_j}\Big),
\end{eqnarray*}
\begin{align*}
N^n_{a\Delta}&=(-1)^{n+1}
\Big(\big(q^{a}-q^{-a}\big)^{n\choose 2}\prod_{1\leq j < k \leq n}\frac{q^{-a\Delta(m_j+1)}}{q^{a\Delta\,m_j}}\Big([m_j]_{q^{a}}-[m_k]_{q^{a}}\nonumber\\&+q^{-am_j}-q^{am_k}\Big),
\end{align*}
and
\begin{align*}
h^{\Delta(a)}_{q}(m_1,\ldots,m_n)&=q^{-an\Delta(\bar{m}+1)}\prod_{1\leq j < k \leq n}\bigg([-\Delta\,m_j]_{q^{a}}[m_k(1-\Delta)]_{q^{a}}\nonumber\\&-[-\Delta\,m_k]_{q^{a}}[m_j(1-\Delta)]_{q^{a}}\bigg).
\end{align*}
Putting $n=3$ in the relation (\ref{qcrna}), we obtain the $q$-conformal Witt $3$-algebra:
\begin{align*}
\Big[{\mathcal T}^{a\Delta}_{m_1},{\mathcal T}^{a\Delta}_{m_2},{\mathcal T}^{a\Delta}_{m_3}
\Big]&={1\over \big(q^{a}-q^{-a}\big)^{2}}\Big( V^3_{a\Delta}[3]_{q^{a}}{\mathcal T}^{3\,a\Delta}_{m_1+m_2+m_3}\nonumber\\ &- [2]_{q^{a}}\big(D^3_{a\Delta}+ N^3_{a\Delta}\big){\mathcal T}^{2a\Delta}_{m_1+m_2+m_3}\Big)\nonumber\\&+h^{\Delta}_{q}(m_1,m_2,m_3),
\end{align*}
where 
\begin{align*}
V^3_{a\Delta}&= q^{2a(1-\Delta )(m_1+m_2+m_3)}\Big(\big(q^{a}-q^{-a}\big)^{3\choose 2}\prod_{1\leq j < k \leq 3}\Big([m_j]_{q^{a}}-[m_k]_{q^{a}}\Big)\nonumber\\&+\prod_{1\leq j < k \leq 3}\Big(q^{-a\,m_j}-q^{-a\,m_k}\Big)\Big),
\end{align*}
\begin{eqnarray*}
	D^3_{a\Delta}
	=\Big(\big(q^{a}-q^{-a}\big)^{3\choose 2}\prod_{1\leq j < k \leq 3}\frac{q^{-a\Delta(m_k+1)}}{q^{a\Delta\,m_k}}\Big([m_k]_{q^{a}}-[m_j]_{q^{a}}+q^{-am_k}-q^{am_j}\Big),
\end{eqnarray*}
\begin{eqnarray*}
	N^3_{a\Delta}=
	\Big(\big(q^{a}-q^{-a}\big)^{3\choose 2}\prod_{1\leq j < k \leq 3}\frac{q^{-a\Delta(m_j+1)}}{q^{a\Delta\,m_j}}\Big([m_j]_{q^{a}}-[m_k]_{q^{a}}+q^{-am_j}-q^{am_k}\Big),
\end{eqnarray*}
and
\begin{align*}
h^{\Delta(a)}_{q}(m_1,m_2,m_3)&=q^{-3a\Delta(m_1+m_2+m_3+1)}\prod_{1\leq j < k \leq 3}\bigg([-\Delta\,m_j]_{q^{a}}[m_k(1-\Delta)]_{q^{a}}\nonumber\\&-[-\Delta\,m_k]_{q^{a}}[m_j(1-\Delta)]_{q^{a}}\bigg).		
\end{align*}
\subsubsection{Conformal Witt $n$-algebra corresponding to the Jagannathan-Srinivasa algebra } 
The conformal Witt $n$-algebra from the Jagannathan-Srinivasa algebra \cite{JS} is deduced as follows: We consider 
for $(\Delta \neq 0,1),$ the $(p,q)$-conformal operator defined by:
\begin{eqnarray}\label{pqop1}
{\mathcal T}^{a\Delta}_m:=x^{(1-\Delta)(m+1)}{\mathcal D}_{p^{a},q^{a}}\,x^{\Delta(m+1)},
\end{eqnarray}
where ${\mathcal D}_{p^{a},q^{a}}$ is the $(p,q)$-derivative :
\begin{eqnarray}\label{pqder}
\mathcal{D}_{p^a,q^a}\phi(x)= \frac{\phi(p^a\,x)-\phi(q^a\,x)}{p^a-q^a}.
\end{eqnarray}
Then, by using the  $(p,q)$-number
\begin{eqnarray}\label{pqnumber}
[n]_{p^a,q^a}=\frac{p^{an}-q^{an}}{p^a-q^a},
\end{eqnarray} the relation \eqref{pqop1} takes the form:
\begin{eqnarray}\label{pqopa}
{\mathcal T}^{a\Delta}_m=[\Delta(m+1)]_{p^{a},q^{a}}\,x^{m}.
\end{eqnarray}
Thus, the $(p,q)$-conformal operators (\ref{pqop1}) obeys the product relation: \begin{align*}
\mathcal{T}^{a\Delta}_{m}.\mathcal{T}^{b\Delta}_n&={\big(p^{a+b}-q^{a+b}\big)p^{an(1-\Delta)}\over \big(p^{a}-q^{a}\big)\big(p^{b}-q^{b}\big)p^{b\Delta\,m}}\,\mathcal{T}^{(a+b)\Delta}_{m+n}- {q^{b\Delta(n+1)}\over p^{an(\Delta-1)}\big(p^{b}-q^{b}\big)}\, \mathcal{T}^{a\Delta}_{m+n}\nonumber\\& - {q^{a(\Delta(m+1)+n)}\over p^{b\Delta\,m}\big(p^{a}-q^{a}\big)}\,\mathcal{T}^{b\Delta}_{m+n}+ f^{\Delta(a,b)}_{p,q}(m,n),
\end{align*}
with \begin{eqnarray*}
	f^{\Delta(a,b)}_{p,q}(m,n)=q^{(a+b)\Delta(m+n+1)}\,[-\Delta\,m]_{p^{b},q^{b}}\,[n(1-\Delta)]_{p^{a},q^{a}},
\end{eqnarray*}	
and the commutation relation
\begin{align}\label{pqcrto}
\Big[\mathcal{T}^{a\Delta}_{m}, \mathcal{T}^{b\Delta}_n\Big]&={\big(p^{a+b}-q^{a+b}\big)\over \big(p^{a}-q^{a}\big)\big(p^{b}-q^{b}\big)}\bigg(\frac{p^{an(1-\Delta)}}{p^{b\Delta\,m}}-\frac{p^{bm(1-\Delta)}}{p^{a\Delta\,n}}\bigg)\mathcal{T}^{(a+b)\Delta}_{m+n}\nonumber\\ &-\frac{q^{b\Delta(n+1)}}{p^{a\Delta\,n}}\frac{\big(p^{an}-q^{bm}\big)}{\big(p^{b}-q^{b}\big)}\mathcal{T}^{a\Delta}_{m+n}- \frac{q^{a\Delta(m+1)}}{p^{b\Delta\,m}}\frac{\big(p^{bm}-q^{an}\big)}{\big(p^{a}-q^{a}\big)}\mathcal{T}^{b\Delta}_{m+n}\nonumber\\ &+h^{\Delta(a,b)}_{p,q}(m,n),
\end{align}
where 
\begin{align*}
h^{\Delta(a,b)}_{p,q}(m,n)&=q^{(a+b)\Delta(m+n+1)}\bigg([-\Delta\,m]_{p^{b},q^{b}}\,[n(1-\Delta)]_{p^{a},q^{a}}\nonumber\\&-[-\Delta\,n]_{p^{a},q^{a}}\,[m(1-\Delta)]_{p^{b},q^{b}}\bigg).
\end{align*}
Putting $a=b=1,$ in the relation \eqref{pqcrto}, we obtain the commutation relation:
\begin{align*}
\Big[\mathcal{T}^{\Delta}_{m}, \mathcal{T}^{\Delta}_n\Big]&={p^{-\Delta(n+m)}\big(p^{n}-p^{m}\big)\over \big(p-q\big)}[2]_{p,q}\mathcal{T}^{2\Delta}_{m+n}+h^{\Delta}_{p,q}(m,n)\nonumber\\
&-\bigg(\frac{q^{\Delta(n+1)}\big(p^{n}-q^{m}\big)}{p^{\Delta\,n}\big(p-q\big)}+ \frac{q^{\Delta(m+1)}\big(p^{m}-q^{n}\big)}{p^{\Delta\,m}\big(p-q\big)}\bigg)\mathcal{T}^{\Delta}_{m+n},
\end{align*}
where
\begin{align*}
h^{\Delta(1,1)}_{p,q}(m,n)&=q^{2\Delta(m+n+1)}\bigg([-\Delta\,m]_{p,q}\,[n(1-\Delta)]_{p,q}\nonumber\\&-[-\Delta\,n]_{p,q}\,[m(1-\Delta)]_{p,q}\bigg).
\end{align*}
Moreover, the $n$-bracket is defined by:
\begin{eqnarray*}
	\Big[{\mathcal T}^{a_1\Delta}_{m_1},\cdots,{\mathcal T}^{a_n\Delta}_{m_n}
	\Big]:=\epsilon^{i_1 \cdots i_n}_{1 \cdots n}\,{\mathcal T}^{a_{i_1}\Delta}_{m_{i_1}} \cdots {\mathcal T}^{a_{i_n}\Delta}_{m_{i_n}}.
\end{eqnarray*}
Our interest is	focussed on the special  case with the same $a\Delta.$ Then, 
\begin{eqnarray*}
	\Big[{\mathcal T}^{a\Delta}_{m_1},\cdots,{\mathcal T}^{a\Delta}_{m_n}
	\Big]=\epsilon^{1\cdots n}_{1\cdots n}\,{\mathcal T}^{a\Delta}_{m_1} \cdots {\mathcal T}^{a\Delta}_{m_n}.
\end{eqnarray*}
Putting $a=b$ in the relation (\ref{pqcrto}),  the following commutation relation holds:
\begin{align*}
\Big[\mathcal{T}^{a\Delta}_{m}, \mathcal{T}^{a\Delta}_n\Big]&={p^{-a\Delta(n+m)}\big(p^{an}-p^{am}\big)\over p^{a}-q^{a}}[2]_{p^a,q^a}\mathcal{T}^{2a\Delta}_{m+n}\nonumber\\ &-\frac{1}{\big(\tau_1-\tau_2\big)}\bigg(\frac{q^{a\Delta(n+1)}}{p^{a\Delta\,n}}\big(p^{an}-q^{am}\big)\nonumber\\&- \frac{q^{a\Delta(m+1)}}{p^{a\Delta\,m}}\big(p^{am}-q^{an}\big)\bigg)\mathcal{T}^{a\Delta}_{m+n}+h^{\Delta(a)}_{p,q}(m,n),
\end{align*}
where \begin{align*}
h^{\Delta(a)}_{p,q}(m,n)&=q^{2a\Delta(m+n+1)}\bigg([-\Delta\,m]_{p^{a},q^{a}}[n(1-\Delta)]_{p^{a},q^{a}}\nonumber\\&-[-\Delta\,n]_{p^{a},q^{a}}[m(1-\Delta)]_{p^{a},q^{a}}\bigg)
\end{align*}
The $(p,q)$-conformal Witt $n$-algebra is presented  by :
\begin{align}\label{pqcrna}
\Big[{\mathcal T}^{a\Delta}_{m_1},\cdots,{\mathcal T}^{a\Delta}_{m_n}
\Big]&={(-1)^{n+1}\over \big(p^{a}-q^{a}\big)^{n-1}}\Big( V^n_{a\Delta}[n]_{p^{a},q^{a}}{\mathcal T}^{n\,a\Delta}_{\bar{m}}\nonumber\\ &- [n-1]_{p^{a},q^{a}}\big(D^n_{a\Delta}+ N^n_{a\Delta}\big){\mathcal T}^{a(n-1)\Delta}_{\bar{m}}\Big)\nonumber\\&+h^{\Delta}_{p,q}(m_1,\ldots,m_n),
\end{align}
where 
\begin{align*}
V^n_{a\Delta}&= p^{a(n-1)(1-\Delta )\bar{m}}\Big(\big(p^{a}-q^{a}\big)^{n\choose 2}\prod_{1\leq j < k \leq n}\Big([m_j]_{p^{a},q^{a}}-[m_k]_{p^{a},q^{a}}\Big)\nonumber\\&+\prod_{1\leq j < k \leq n}\Big(q^{a\,m_j}-q^{a\,m_k}\Big)\Big),
\end{align*}
\begin{eqnarray*}
	D^n_{a\Delta}
	=\Big(\big(p^{a}-q^{a}\big)^{n\choose 2}\prod_{1\leq j < k \leq n}\frac{q^{a\Delta(m_k+1)}}{p^{a\Delta\,m_k}}\Big([m_k]_{p^{a},q^{a}}-[m_j]_{p^{a},q^{a}}+q^{am_k}-p^{am_j}\Big),
\end{eqnarray*}
\begin{align*}
N^n_{a\Delta}&=(-1)^{n+1}
\Big(\big(p^{a}-q^{a}\big)^{n\choose 2}\prod_{1\leq j < k \leq n}\frac{q^{a\Delta(m_j+1)}}{p^{a\Delta\,m_j}}\Big([m_j]_{p^{a},q^{a}}-[m_k]_{p^{a},q^{a}}\nonumber\\&+q^{am_j}-p^{am_k}\Big),
\end{align*}
and
\begin{align*}
h^{\Delta(a)}_{p,q}(m_1,\ldots,m_n)&=q^{an\Delta(\bar{m}+1)}\prod_{1\leq j < k \leq n}\bigg([-\Delta\,m_j]_{p^{a},q^{a}}[m_k(1-\Delta)]_{p^{a},q^{a}}\nonumber\\&-[-\Delta\,m_k]_{p^{a},q^{a}}[m_j(1-\Delta)]_{p^{a},q^{a}}\bigg).
\end{align*}
Putting $n=3$ in the relation (\ref{pqcrna}), we obtain the $(p,q)$-conformal Witt $3$-algebra:
\begin{align*}
\Big[{\mathcal T}^{a\Delta}_{m_1},{\mathcal T}^{a\Delta}_{m_2},{\mathcal T}^{a\Delta}_{m_3}
\Big]&={1\over \big(p^{a}-q^{a}\big)^{2}}\Big( V^3_{a\Delta}[3]_{p^{a},q^{a}}{\mathcal T}^{3\,a\Delta}_{m_1+m_2+m_3}\nonumber\\ &- [2]_{p^{a},q^{a}}\big(D^3_{a\Delta}+ N^3_{a\Delta}\big){\mathcal T}^{2a\Delta}_{m_1+m_2+m_3}\Big)\nonumber\\&+h^{\Delta}_{p,q}(m_1,m_2,m_3),
\end{align*}
where 
\begin{align*}
V^3_{a\Delta}&= p^{2a(1-\Delta )(m_1+m_2+m_3)}\Big(\big(p^{a}-q^{a}\big)^{3\choose 2}\prod_{1\leq j < k \leq 3}\Big([m_j]_{p^{a},q^{a}}-[m_k]_{p^{a},q^{a}}\Big)\nonumber\\&+\prod_{1\leq j < k \leq 3}\Big(q^{a\,m_j}-q^{a\,m_k}\Big)\Big),
\end{align*}
\begin{eqnarray*}
	D^3_{a\Delta}
	=\Big(\big(p^{a}-q^{a}\big)^{3\choose 2}\prod_{1\leq j < k \leq 3}\frac{q^{a\Delta(m_k+1)}}{p^{a\Delta\,m_k}}\Big([m_k]_{p^{a},q^{a}}-[m_j]_{p^{a},q^{a}}+q^{am_k}-p^{am_j}\Big),
\end{eqnarray*}
\begin{eqnarray*}
	N^3_{a\Delta}=
	\Big(\big(p^{a}-q^{a}\big)^{3\choose 2}\prod_{1\leq j < k \leq 3}\frac{q^{a\Delta(m_j+1)}}{p^{a\Delta\,m_j}}\Big([m_j]_{p^{a},q^{a}}-[m_k]_{p^{a},q^{a}}+q^{am_j}-p^{am_k}\Big),
\end{eqnarray*}
and
\begin{align*}
h^{\Delta(a)}_{p,q}(m_1,m_2,m_3)&=q^{3a\Delta(m_1+m_2+m_3+1)}\prod_{1\leq j < k \leq 3}\bigg([-\Delta\,m_j]_{p^{a},q^{a}}[m_k(1-\Delta)]_{p^{a},q^{a}}\nonumber\\&-[-\Delta\,m_k]_{p^{a},q^{a}}[m_j(1-\Delta)]_{p^{a},q^{a}}\bigg).		
\end{align*}
\subsubsection{Conformal  Witt $n$-algebra associated to the Chakrabarty and Jagannathan algebra }
The conformal Witt $n$-algebra from the Chakrabarty and Jagannathan algebra \cite{CJ}  is deduced as follows: We consider 
for $(\Delta \neq 0,1),$ the $(p^{-1},q)$-conformal operator defined by:
\begin{eqnarray}\label{cjop1}
{\mathcal T}^{a\Delta}_m:=x^{(1-\Delta)(m+1)}{\mathcal D}_{p^{-a},q^{a}}\,x^{\Delta(m+1)},
\end{eqnarray}
where ${\mathcal D}_{p^{-a},q^{a}}$ is the $(p^{-1},q)$-derivative :
\begin{eqnarray}\label{cjder}
\mathcal{D}_{p^{-a},q^a}\phi(x)= \frac{\phi(p^{-a}\,x)-\phi(q^a\,x)}{(p^{-a}-q^a)x}.
\end{eqnarray}
Then, by using the  $(p^{-1},q)$-number
\begin{eqnarray}\label{cjnumber}
[n]_{p^{-a},q^a}=\frac{p^{-an}-q^{an}}{p^{-a}-q^a},
\end{eqnarray} the relation \eqref{cjop1} takes the form:
\begin{eqnarray}\label{cjopa}
{\mathcal T}^{a\Delta}_m=[\Delta(m+1)]_{p^{-a},q^{a}}\,x^{m}.
\end{eqnarray}
Thus, the $(p^{-1},q)$-conformal operators (\ref{cjop1}) obeys the product relation: \begin{align}
\mathcal{T}^{a\Delta}_{m}.\mathcal{T}^{b\Delta}_n&={\big(p^{-a-b}-q^{a+b}\big)p^{-an(1-\Delta)}\over \big(p^{-a}-q^{a}\big)\big(p^{-b}-q^{b}\big)p^{-b\Delta\,m}}\,\mathcal{T}^{(a+b)\Delta}_{m+n}+ f^{\Delta(a,b)}_{p^{-1},q}(m,n)\nonumber\\&-{q^{a(\Delta(m+1)+n)}\over p^{-b\Delta\,m}\big(p^{-a}-q^{a}\big)}\,\mathcal{T}^{b\Delta}_{m+n}  - {q^{b\Delta(n+1)}\over p^{-an(\Delta-1)}\big(p^{-b}-q^{b}\big)}\, \mathcal{T}^{a\Delta}_{m+n},
\end{align}
with \begin{eqnarray*}
	f^{\Delta(a,b)}_{p^{-1},q}(m,n)=q^{(a+b)\Delta(m+n+1)}\,[-\Delta\,m]_{p^{-b},q^{b}}\,[n(1-\Delta)]_{p^{-a},q^{a}},
\end{eqnarray*}	
and the commutation relation
\begin{align}\label{cjcrto}
\Big[\mathcal{T}^{a\Delta}_{m}, \mathcal{T}^{b\Delta}_n\Big]&={\big(p^{-a-b}-q^{a+b}\big)\over \big(p^{-a}-q^{a}\big)\big(p^{-b}-q^{b}\big)}\bigg(\frac{p^{-an(1-\Delta)}}{p^{-b\Delta\,m}}-\frac{p^{-bm(1-\Delta)}}{p^{-a\Delta\,n}}\bigg)\mathcal{T}^{(a+b)\Delta}_{m+n}\nonumber\\ &-\frac{q^{b\Delta(n+1)}}{p^{-a\Delta\,n}}\frac{\big(p^{-an}-q^{bm}\big)}{\big(p^{-b}-q^{b}\big)}\mathcal{T}^{a\Delta}_{m+n}+h^{\Delta(a,b)}_{p^{-1},q}(m,n)\nonumber\\ &- \frac{q^{a\Delta(m+1)}}{p^{-b\Delta\,m}}\frac{\big(p^{-bm}-q^{an}\big)}{\big(p^{-a}-q^{a}\big)}\mathcal{T}^{b\Delta}_{m+n},
\end{align}
where 
\begin{align*}
h^{\Delta(a,b)}_{p^{-1},q}(m,n)&=q^{(a+b)\Delta(m+n+1)}\bigg([-\Delta\,m]_{p^{-b},q^{b}}\,[n(1-\Delta)]_{p^{-a},q^{a}}\nonumber\\&-[-\Delta\,n]_{p^{-a},q^{a}}\,[m(1-\Delta)]_{p^{-b},q^{b}}\bigg).
\end{align*}
Putting $a=b=1,$ in the relation \eqref{cjcrto}, we obtain the commutation relation:
\begin{align*}
\Big[\mathcal{T}^{\Delta}_{m}, \mathcal{T}^{\Delta}_n\Big]&={p^{\Delta(n+m)}\big(p^{-n}-p^{m}\big)\over \big(p^{-1}-q\big)}[2]_{p^{-1},q}\mathcal{T}^{2\Delta}_{m+n}+h^{\Delta}_{p^{-1},q}(m,n)\nonumber\\
&-\bigg(\frac{q^{\Delta(n+1)}\big(p^{-n}-q^{m}\big)}{p^{-\Delta\,n}\big(p^{-1}-q\big)}+ \frac{q^{\Delta(m+1)}\big(p^{-m}-q^{n}\big)}{p^{-\Delta\,m}\big(p^{-1}-q\big)}\bigg)\mathcal{T}^{\Delta}_{m+n},
\end{align*}
where
\begin{align*}
h^{\Delta(1,1)}_{p^{-1},q}(m,n)&=q^{2\Delta(m+n+1)}\bigg([-\Delta\,m]_{p^{-1},q}\,[n(1-\Delta)]_{p^{-1},q}\nonumber\\&-[-\Delta\,n]_{p^{-1},q}\,[m(1-\Delta)]_{p^{-1},q}\bigg).
\end{align*}
Moreover, the $n$-bracket is defined by:
\begin{eqnarray*}
	\Big[{\mathcal T}^{a_1\Delta}_{m_1},\cdots,{\mathcal T}^{a_n\Delta}_{m_n}
	\Big]:=\epsilon^{i_1 \cdots i_n}_{1 \cdots n}\,{\mathcal T}^{a_{i_1}\Delta}_{m_{i_1}} \cdots {\mathcal T}^{a_{i_n}\Delta}_{m_{i_n}}.
\end{eqnarray*}
Our interest is	focussed on the special  case with the same $a\Delta.$ Then, 
\begin{eqnarray*}
	\Big[{\mathcal T}^{a\Delta}_{m_1},\cdots,{\mathcal T}^{a\Delta}_{m_n}
	\Big]=\epsilon^{1\cdots n}_{1\cdots n}\,{\mathcal T}^{a\Delta}_{m_1} \cdots {\mathcal T}^{a\Delta}_{m_n}.
\end{eqnarray*}
Putting $a=b$ in the relation (\ref{cjcrto}),  the following commutation relation holds:
\begin{align*}
\Big[\mathcal{T}^{a\Delta}_{m}, \mathcal{T}^{a\Delta}_n\Big]&={p^{a\Delta(n+m)}\big(p^{-an}-p^{-am}\big)\over p^{-a}-q^{a}}[2]_{p^{-a},q^a}\mathcal{T}^{2a\Delta}_{m+n}\nonumber\\ &-\frac{1}{\big(p^{-1}-q\big)}\bigg(\frac{q^{a\Delta(n+1)}}{p^{-a\Delta\,n}}\big(p^{-an}-q^{am}\big)\nonumber\\&- \frac{q^{a\Delta(m+1)}}{p^{-a\Delta\,m}}\big(p^{-am}-q^{an}\big)\bigg)\mathcal{T}^{a\Delta}_{m+n}+h^{\Delta(a)}_{p^{-1},q}(m,n),
\end{align*}
where \begin{align*}
h^{\Delta(a)}_{p,q}(m,n)&=q^{2a\Delta(m+n+1)}\bigg([-\Delta\,m]_{p^{-a},q^{a}}[n(1-\Delta)]_{p^{-a},q^{a}}\nonumber\\&-[-\Delta\,n]_{p^{-a},q^{a}}[m(1-\Delta)]_{p^{-a},q^{a}}\bigg)
\end{align*}
The $(p^{-1},q)$-conformal Witt $n$-algebra is presented  by :
\begin{align}\label{cjcrna}
\Big[{\mathcal T}^{a\Delta}_{m_1},\cdots,{\mathcal T}^{a\Delta}_{m_n}
\Big]&={(-1)^{n+1}\over \big(p^{-a}-q^{a}\big)^{n-1}}\Big( V^n_{a\Delta}[n]_{p^{-a},q^{a}}{\mathcal T}^{n\,a\Delta}_{\bar{m}}\nonumber\\ &- [n-1]_{p^{-a},q^{a}}\big(D^n_{a\Delta}+ N^n_{a\Delta}\big){\mathcal T}^{a(n-1)\Delta}_{\bar{m}}\Big)\nonumber\\&+h^{\Delta}_{p^{-1},q}(m_1,\ldots,m_n),
\end{align}
where 
\begin{align*}
V^n_{a\Delta}&= p^{-a(n-1)(1-\Delta )\bar{m}}\Big(\big(p^{-a}-q^{a}\big)^{n\choose 2}\prod_{1\leq j < k \leq n}\Big([m_j]_{p^{-a},q^{a}}-[m_k]_{p^{-a},q^{a}}\Big)\nonumber\\&+\prod_{1\leq j < k \leq n}\Big(q^{a\,m_j}-q^{a\,m_k}\Big)\Big),
\end{align*}
\begin{align*}
D^n_{a\Delta}
&=\Big(\big(p^{-a}-q^{a}\big)^{n\choose 2}\prod_{1\leq j < k \leq n}\frac{q^{a\Delta(m_k+1)}}{p^{-a\Delta\,m_k}}\Big([m_k]_{p^{-a},q^{a}}-[m_j]_{p^{-a},q^{a}}\\&+q^{am_k}-p^{-am_j}\Big),
\end{align*}
\begin{align*}
N^n_{a\Delta}&=(-1)^{n+1}
\Big(\big(p^{-a}-q^{a}\big)^{n\choose 2}\prod_{1\leq j < k \leq n}\frac{q^{a\Delta(m_j+1)}}{p^{-a\Delta\,m_j}}\Big([m_j]_{p^{-a},q^{a}}-[m_k]_{p^{-a},q^{a}}\nonumber\\&+q^{am_j}-p^{-am_k}\Big),
\end{align*}
and
\begin{align*}
h^{\Delta(a)}_{p^{-1},q}(m_1,\ldots,m_n)&=q^{an\Delta(\bar{m}+1)}\prod_{1\leq j < k \leq n}\bigg([-\Delta\,m_j]_{p^{-a},q^{a}}[m_k(1-\Delta)]_{p^{-a},q^{a}}\nonumber\\&-[-\Delta\,m_k]_{p^{-a},q^{a}}[m_j(1-\Delta)]_{p^{-a},q^{a}}\bigg).
\end{align*}
Putting $n=3$ in the relation (\ref{cjcrna}), we obtain the $(p^{-1},q)$-conformal Witt $3$-algebra:
\begin{align*}
\Big[{\mathcal T}^{a\Delta}_{m_1},{\mathcal T}^{a\Delta}_{m_2},{\mathcal T}^{a\Delta}_{m_3}
\Big]&={1\over \big(p^{-a}-q^{a}\big)^{2}}\Big( V^3_{a\Delta}[3]_{p^{-a},q^{a}}{\mathcal T}^{3\,a\Delta}_{m_1+m_2+m_3}\nonumber\\ &- [2]_{p^{-a},q^{a}}\big(D^3_{a\Delta}+ N^3_{a\Delta}\big){\mathcal T}^{2a\Delta}_{m_1+m_2+m_3}\Big)\nonumber\\&+h^{\Delta}_{p^{-1},q}(m_1,m_2,m_3),
\end{align*}
where 
\begin{align*}
V^3_{a\Delta}&= p^{-2a(1-\Delta )(m_1+m_2+m_3)}\Big(\big(p^{-a}-q^{a}\big)^{3\choose 2}\prod_{1\leq j < k \leq 3}\Big([m_j]_{p^{-a},q^{a}}-[m_k]_{p^{-a},q^{a}}\Big)\nonumber\\&+\prod_{1\leq j < k \leq 3}\Big(q^{a\,m_j}-q^{a\,m_k}\Big)\Big),
\end{align*}
\begin{eqnarray*}
	D^3_{a\Delta}
	=\Big(\big(p^{-a}-q^{a}\big)^{3\choose 2}\prod_{1\leq j < k \leq 3}\frac{q^{a\Delta(m_k+1)}}{p^{-a\Delta\,m_k}}\Big([m_k]_{p^{-a},q^{a}}-[m_j]_{p^{-a},q^{a}}+q^{am_k}-p^{-am_j}\Big),
\end{eqnarray*}
\begin{eqnarray*}
	N^3_{a\Delta}=
	\Big(\big(p^{-a}-q^{a}\big)^{3\choose 2}\prod_{1\leq j < k \leq 3}\frac{q^{a\Delta(m_j+1)}}{p^{-a\Delta\,m_j}}\Big([m_j]_{p^{-a},q^{a}}-[m_k]_{p^{-a},q^{a}}+q^{am_j}-p^{-am_k}\Big),
\end{eqnarray*}
and
\begin{align*}
h^{\Delta(a)}_{p^{-1},q}(m_1,m_2,m_3)&=q^{3a\Delta(m_1+m_2+m_3+1)}\prod_{1\leq j < k \leq 3}\bigg([-\Delta\,m_j]_{p^{-a},q^{a}}[m_k(1-\Delta)]_{p^{-a},q^{a}}\nonumber\\&-[-\Delta\,m_k]_{p^{-a},q^{a}}[m_j(1-\Delta)]_{p^{-a},q^{a}}\bigg).		
\end{align*}
\subsubsection{Conformal  Witt $n$-algebra induced by the Hounkonnou-Ngompe generalization of $q$-Quesne algebra  }
The conformal Witt $n$-algebra associated to the Hounkonnou-Ngompe generalization of $q$-Quesne algebra \cite{HN} is deduced as follows: We consider 
for $(\Delta \neq 0,1),$ the generalized $q$-Quesne conformal operator defined by:
\begin{eqnarray}\label{hnop1}
{\mathcal T}^{a\Delta}_m:=x^{(1-\Delta)(m+1)}{\mathcal D}^Q_{p^{a},q^{a}}\,x^{\Delta(m+1)},
\end{eqnarray}
where ${\mathcal D}^Q_{p^{a},q^{a}}$ is the generalized $q$-Quesne derivative :
\begin{eqnarray}\label{hnder}
\mathcal{D}^Q_{p^a,q^a}\phi(x)= \frac{\phi(p^a\,x)-\phi(q^{-a}\,x)}{(q^a-p^{-a})x}.
\end{eqnarray}
Then, by using the  generalized $q$-Quesne number
\begin{eqnarray}\label{hnnumber}
[n]^Q_{p^a,q^a}=\frac{p^{an}-q^{-an}}{-p^{-a}+q^a},
\end{eqnarray} the relation \eqref{hnop1} takes the form:
\begin{eqnarray}\label{hnopa}
{\mathcal T}^{a\Delta}_m=[\Delta(m+1)]^Q_{p^{a},q^{a}}\,x^{m}.
\end{eqnarray}
Thus, the generalized $q$-Quesne conformal operators (\ref{hnop1}) obeys the product relation: \begin{align*}
\mathcal{T}^{a\Delta}_{m}.\mathcal{T}^{b\Delta}_n&={\big(p^{a+b}-q^{-a-b}\big)p^{an(1-\Delta)}\over \big(-p^{-a}+q^{a}\big)\big(-p^{-b}+q^{b}\big)p^{b\Delta\,m}}\,\mathcal{T}^{(a+b)\Delta}_{m+n}+ f^{\Delta(a,b)}_{p,q}(m,n)\nonumber\\& - {q^{-a(\Delta(m+1)+n)}\over p^{b\Delta\,m}\big(-p^{-a}+q^{a}\big)}\,\mathcal{T}^{b\Delta}_{m+n}- {q^{-b\Delta(n+1)}\over p^{an(\Delta-1)}\big(-p^{-b}+q^{b}\big)}\, \mathcal{T}^{a\Delta}_{m+n},
\end{align*}
with \begin{eqnarray*}
	f^{\Delta(a,b)}_{p,q}(m,n)=q^{-(a+b)\Delta(m+n+1)}\,[-\Delta\,m]^Q_{p^{b},q^{b}}\,[n(1-\Delta)]^Q_{p^{a},q^{a}},
\end{eqnarray*}	
and the commutation relation
\begin{align}\label{hncrto}
\Big[\mathcal{T}^{a\Delta}_{m}, \mathcal{T}^{b\Delta}_n\Big]&={\big(p^{a+b}-q^{-a-b}\big)\over \big(-p^{-a}+q^{a}\big)\big(-p^{-b}+q^{b}\big)}\bigg(\frac{p^{an(1-\Delta)}}{p^{b\Delta\,m}}-\frac{p^{bm(1-\Delta)}}{p^{a\Delta\,n}}\bigg)\mathcal{T}^{(a+b)\Delta}_{m+n}\nonumber\\ &-\frac{q^{-b\Delta(n+1)}}{p^{a\Delta\,n}}\frac{\big(p^{an}-q^{-bm}\big)}{\big(-p^{-b}+q^{b}\big)}\mathcal{T}^{a\Delta}_{m+n}+h^{\Delta(a,b)}_{p,q}(m,n)\nonumber\\ &- \frac{q^{-a\Delta(m+1)}}{p^{b\Delta\,m}}\frac{\big(p^{bm}-q^{-an}\big)}{\big(-p^{-a}+q^{a}\big)}\mathcal{T}^{b\Delta}_{m+n},
\end{align}
where 
\begin{align*}
h^{\Delta(a,b)}_{p,q}(m,n)&=q^{-(a+b)\Delta(m+n+1)}\bigg([-\Delta\,m]^Q_{p^{b},q^{b}}\,[n(1-\Delta)]^Q_{p^{a},q^{a}}\nonumber\\&-[-\Delta\,n]^Q_{p^{a},q^{a}}\,[m(1-\Delta)]^Q_{p^{b},q^{b}}\bigg).
\end{align*}
Putting $a=b=1,$ in the relation \eqref{hncrto}, we obtain the commutation relation:
\begin{align*}
\Big[\mathcal{T}^{\Delta}_{m}, \mathcal{T}^{\Delta}_n\Big]&={p^{-\Delta(n+m)}\big(p^{n}-p^{m}\big)\over \big(-p^{-1}+q\big)}[2]^Q_{p,q}\mathcal{T}^{2\Delta}_{m+n}+h^{\Delta}_{p,q}(m,n)\nonumber\\
&-\bigg(\frac{q^{-\Delta(n+1)}\big(p^{n}-q^{-m}\big)}{p^{\Delta\,n}\big(-p^{-1}+q\big)}+ \frac{q^{-\Delta(m+1)}\big(p^{m}-q^{-n}\big)}{p^{\Delta\,m}\big(-p^{-1}+q\big)}\bigg)\mathcal{T}^{\Delta}_{m+n},
\end{align*}
where
\begin{eqnarray*}
	h^{\Delta(1,1)}_{p,q}(m,n)=q^{-2\Delta(m+n+1)}\bigg([-\Delta\,m]^Q_{p,q}\,[n(1-\Delta)]^Q_{p,q}-[-\Delta\,n]^Q_{p,q}\,[m(1-\Delta)]^Q_{p,q}\bigg).
\end{eqnarray*}
Besides, the $n$-bracket is defined by:
\begin{eqnarray*}
	\Big[{\mathcal T}^{a_1\Delta}_{m_1},\cdots,{\mathcal T}^{a_n\Delta}_{m_n}
	\Big]:=\epsilon^{i_1 \cdots i_n}_{1 \cdots n}\,{\mathcal T}^{a_{i_1}\Delta}_{m_{i_1}} \cdots {\mathcal T}^{a_{i_n}\Delta}_{m_{i_n}}.
\end{eqnarray*}
Our interest is	focussed on the special  case with the same $a\Delta.$ Then, 
\begin{eqnarray*}
	\Big[{\mathcal T}^{a\Delta}_{m_1},\cdots,{\mathcal T}^{a\Delta}_{m_n}
	\Big]=\epsilon^{1\cdots n}_{1\cdots n}\,{\mathcal T}^{a\Delta}_{m_1} \cdots {\mathcal T}^{a\Delta}_{m_n}.
\end{eqnarray*}
Putting $a=b$ in the relation (\ref{hncrto}),  the following commutation relation holds:
\begin{align*}
\Big[\mathcal{T}^{a\Delta}_{m}, \mathcal{T}^{a\Delta}_n\Big]&={p^{-a\Delta(n+m)}\big(p^{an}-p^{am}\big)\over -p^{-a}+q^{a}}[2]^Q_{p^a,q^a}\mathcal{T}^{2a\Delta}_{m+n}\nonumber\\ &-\frac{1}{\big(q-p^{-1}\big)}\bigg(\frac{q^{-a\Delta(n+1)}}{p^{a\Delta\,n}}\big(p^{an}-q^{-am}\big)\nonumber\\&- \frac{q^{-a\Delta(m+1)}}{p^{a\Delta\,m}}\big(p^{am}-q^{-an}\big)\bigg)\mathcal{T}^{a\Delta}_{m+n}+h^{\Delta(a)}_{p,q}(m,n),
\end{align*}
where \begin{align*}
h^{\Delta(a)}_{p,q}(m,n)&=q^{-2a\Delta(m+n+1)}\bigg([-\Delta\,m]^Q_{p^{a},q^{a}}[n(1-\Delta)]^Q_{p^{a},q^{a}}\nonumber\\&-[-\Delta\,n]^Q_{p^{a},q^{a}}[m(1-\Delta)]^Q_{p^{a},q^{a}}\bigg)
\end{align*}
The generalized $q$-Quesne conformal Witt $n$-algebra is given  by :
\begin{align}\label{hncrna}
\Big[{\mathcal T}^{a\Delta}_{m_1},\cdots,{\mathcal T}^{a\Delta}_{m_n}
\Big]&={(-1)^{n+1}\over \big(-p^{-a}+q^{a}\big)^{n-1}}\Big( V^n_{a\Delta}[n]^Q_{p^{a},q^{a}}{\mathcal T}^{n\,a\Delta}_{\bar{m}}\nonumber\\ &- [n-1]^Q_{p^{a},q^{a}}\big(D^n_{a\Delta}+ N^n_{a\Delta}\big){\mathcal T}^{a(n-1)\Delta}_{\bar{m}}\Big)\nonumber\\&+h^{\Delta}_{p,q}(m_1,\ldots,m_n),
\end{align}
where 
\begin{align*}
V^n_{a\Delta}&= p^{a(n-1)(1-\Delta )\bar{m}}\Big(\big(-p^{-a}+q^{a}\big)^{n\choose 2}\prod_{1\leq j < k \leq n}\Big([m_j]^Q_{p^{a},q^{a}}-[m_k]^Q_{p^{a},q^{a}}\Big)\nonumber\\&+\prod_{1\leq j < k \leq n}\Big(q^{-a\,m_j}-q^{-a\,m_k}\Big)\Big),
\end{align*}
\begin{eqnarray*}
	D^n_{a\Delta}
	&=&\Big(\big(-p^{-a}+q^{a}\big)^{n\choose 2}\prod_{1\leq j < k \leq n}\frac{q^{-a\Delta(m_k+1)}}{p^{a\Delta\,m_k}}\Big([m_k]^Q_{p^{a},q^{a}}-[m_j]^Q_{p^{a},q^{a}}\nonumber\\&+&q^{-am_k}-p^{am_j}\Big),
\end{eqnarray*}
\begin{align*}
N^n_{a\Delta}&=(-1)^{n+1}
\Big(\big(-p^{-a}+q^{a}\big)^{n\choose 2}\prod_{1\leq j < k \leq n}\frac{q^{-a\Delta(m_j+1)}}{p^{a\Delta\,m_j}}\Big([m_j]^Q_{p^{a},q^{a}}-[m_k]^Q_{p^{a},q^{a}}\nonumber\\&+q^{-am_j}-p^{am_k}\Big),
\end{align*}
and
\begin{align*}
h^{\Delta(a)}_{p,q}(m_1,\ldots,m_n)&=q^{-an\Delta(\bar{m}+1)}\prod_{1\leq j < k \leq n}\bigg([-\Delta\,m_j]^Q_{p^{a},q^{a}}[m_k(1-\Delta)]^Q_{p^{a},q^{a}}\nonumber\\&-[-\Delta\,m_k]^Q_{p^{a},q^{a}}[m_j(1-\Delta)]^Q_{p^{a},q^{a}}\bigg).
\end{align*}
Putting $n=3$ in the relation (\ref{hncrna}), we obtain the generalized  $q$-Quesne conformal Witt $3$-algebra:
\begin{align*}
\Big[{\mathcal T}^{a\Delta}_{m_1},{\mathcal T}^{a\Delta}_{m_2},{\mathcal T}^{a\Delta}_{m_3}
\Big]&={1\over \big(-p^{-a}+q^{a}\big)^{2}}\Big( V^3_{a\Delta}[3]^Q_{p^{a},q^{a}}{\mathcal T}^{3\,a\Delta}_{m_1+m_2+m_3}\nonumber\\ &- [2]^Q_{p^{a},q^{a}}\big(D^3_{a\Delta}+ N^3_{a\Delta}\big){\mathcal T}^{2a\Delta}_{m_1+m_2+m_3}\Big)\nonumber\\&+h^{\Delta}_{p,q}(m_1,m_2,m_3),
\end{align*}
where 
\begin{align*}
V^3_{a\Delta}&= p^{2a(1-\Delta )(m_1+m_2+m_3)}\Big(\big(-p^{-a}+q^{a}\big)^{3\choose 2}\prod_{1\leq j < k \leq 3}\Big([m_j]^Q_{p^{a},q^{a}}-[m_k]^Q_{p^{a},q^{a}}\Big)\nonumber\\&+\prod_{1\leq j < k \leq 3}\Big(q^{-a\,m_j}-q^{-a\,m_k}\Big)\Big),
\end{align*}
\begin{eqnarray*}
	D^3_{a\Delta}
	&=&\Big(\big(-p^{-a}+q^{a}\big)^{3\choose 2}\prod_{1\leq j < k \leq 3}\frac{q^{-a\Delta(m_k+1)}}{p^{a\Delta\,m_k}}\Big([m_k]^Q_{p^{a},q^{a}}-[m_j]^Q_{p^{a},q^{a}}\nonumber\\&+&q^{-am_k}-p^{am_j}\Big),
\end{eqnarray*}
\begin{eqnarray*}
	N^3_{a\Delta}&=&
	\Big(\big(-p^{a}+q^{a}\big)^{3\choose 2}\prod_{1\leq j < k \leq 3}\frac{q^{-a\Delta(m_j+1)}}{p^{a\Delta\,m_j}}\Big([m_j]^Q_{p^{a},q^{a}}-[m_k]^Q_{p^{a},q^{a}}\nonumber\\&+&q^{-am_j}-p^{am_k}\Big),
\end{eqnarray*}
and
\begin{align*}
h^{\Delta(a)}_{p,q}(m_1,m_2,m_3)&=q^{-3a\Delta(m_1+m_2+m_3+1)}\prod_{1\leq j < k \leq 3}\bigg([-\Delta\,m_j]^Q_{p^{a},q^{a}}[m_k(1-\Delta)]^Q_{p^{a},q^{a}}\nonumber\\&-[-\Delta\,m_k]^Q_{p^{a},q^{a}}[m_j(1-\Delta)]^Q_{p^{a},q^{a}}\bigg).		
\end{align*}
\subsection{A toy model for $\mathcal{R}(p,q)$-conformal Virasoro constraints}\label{sub4.2}
In this section, we study a toy model for the ${\mathcal R}(p,q)$-Virasoro constraints with conformal dimension $\Delta.$ We consider the generating function with infinitely many	parameters
introduced as follows \cite{NZ}: $$Z^{toy}(t)=\int \, \,x^{\gamma}\,\exp\left(\displaystyle\sum_{s=0}^{\infty}{t_s\over s!}x^s\right)\,dx,$$
which encodes many different integrals.
We assume that the following property holds for the ${\mathcal R}(p,q)$-derivative
\begin{eqnarray*}
	\int_{{\mathbb R}} x^{(m+1)(1-\Delta)}{\mathcal D}_{{\mathcal R}(p^{a},q^{a})}f(x)d\,x
	=0,
\end{eqnarray*} 
where 
$$h(p^{a},q^{a})={p^{a}-q^{a}\over p^{P^{a}}-q^{Q^{a}}}{\mathcal R}\big(p^{P^{a}},q^{Q^{a}}\big).$$
For $f(x)=x^{\Delta(m+1)+\gamma}\,\exp\left(\displaystyle\sum_{s=0}^{\infty}{t_s\over s!}x^s\right),$ we have
\begin{eqnarray*}
	\int_{-\infty}^{+\infty}x^{(m+1)(1-\Delta)}{\mathcal D}_{{\mathcal R}(p^{a},q^{a})}\left(x^{\Delta(m+1)+\gamma}\,\exp\left(\sum_{s=0}^{\infty}{t_s\over s!}x^s\right)\right)d\,x=0.
\end{eqnarray*}
We consider  the following relation:
\begin{eqnarray*}
	\exp\left(\displaystyle\sum_{s=0}^{\infty}{t_s\over s!}x^s\right)=\sum_{n=0}^{\infty}B_n(t_1,\cdots,t_n){x^n\over n!},
\end{eqnarray*}
where $B_n$ is the Bell polynomials. Then, from the $\mathcal{R}(p,q)$-Leibniz rule, we get:
\begin{align*}
{\mathcal D}_{{\mathcal R}(p^{a},q^{a})}\bigg(x^{\Delta(m+1)+\gamma}\exp\bigg(&\displaystyle\sum_{s=0}^{\infty}{t_s\over s!}x^s\bigg)\bigg)={\mathcal D}_{{\mathcal R}(p^{a},q^{a})}\big(x^{\Delta(m+1)+\gamma}\big)\nonumber\\&\times\exp\left(\displaystyle\sum_{s=0}^{\infty}{t_s\over s!}(x\tau_1)^{as}\right)+\frac{h(p^{a},q^{a})}{(\tau^{a}_1 - \tau^{a}_2)}\nonumber\\ &\times {\tau^{a(\Delta(m+1)+\gamma)}_2}\sum_{k=1}^{\infty}{B_k(t^{a}_1,\cdots,t^{a}_k)\over k!}\nonumber\\&\times x^{\Delta(k+m)+\gamma}\exp\bigg(\displaystyle\sum_{s=0}^{\infty}{t_s\over s!}x^s\bigg),
\end{align*}
where $t^{a}_k=(\tau^{a\,k}_1-\tau^{a\,k}_2)t_k.$ 
After computation, we have: 
\begin{align*}
x^{(m+1)(1-\Delta)}{\mathcal D}_{{\mathcal R}(p^{a},q^{a})}&\left(x^{\Delta(m+1)+\gamma}\,\exp\left(\displaystyle\sum_{s=0}^{\infty}{t_s\over s!}x^s\right)\right)\\&={x^{m+\gamma}\,[\Delta(m+1)+\gamma]_{{\mathcal R}(p^{a},q^{a})}\over \tau^{-a\,m}_1}\exp\left(\displaystyle\sum_{s=0}^{\infty}{t_s\over s!}x^s\right)\nonumber\\ &+ {h(p^{a},q^{a})\tau^{a(\Delta(m+1)+\gamma)}_2\over (\tau^{a}_1 - \tau^{a}_2)}\sum_{k=1}^{\infty}{B_k(t^{a}_1,\cdots,t^{a}_k)\over k!}\nonumber\\&\times x^{m+\gamma+k}\exp\left(\displaystyle\sum_{s=0}^{\infty}{t_s\over s!}x^s\right),
\end{align*}
Then,  from the constraints on the partition function,
$${\mathcal T}^{a\Delta}_m\,Z^{(toy)}(t)=0,\quad m\geq 0,$$
we obtain:
\begin{lemma} The $\mathcal{R}(p,q)$-conformal differential  operator is determined by:
	\begin{align}\label{Rpqop}
	{\mathcal T}^{a\Delta}_m&=[\Delta(m+1)+\gamma]_{{\mathcal R}(p^{a},q^{a})}\,m!\, \tau^{a\,m}_1\,{\partial\over \partial t_m}\nonumber\\ &+ h(p^{a},q^{a}){\tau^{a(\Delta(m+1)+\gamma)}_2\over \tau^{a}_1 - \tau^{a}_2}\sum_{k=1}^{\infty}{(k+m)!\over k!}B_k(t^{a}_1,\ldots,t^{a}_k){\partial\over \partial t_{k+m}}.
	\end{align}
\end{lemma}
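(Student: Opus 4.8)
The plan is to extract the operator ${\mathcal T}^{a\Delta}_m$ by turning the vanishing-integral constraint into a differential equation in the couplings $t_\bullet$. The crucial observation is that each coupling $t_n$ generates a power of $x$ under the integral sign, namely
\[
\frac{\partial}{\partial t_n}\exp\!\left(\sum_{s=0}^{\infty}\frac{t_s}{s!}\,x^s\right)=\frac{x^n}{n!}\,\exp\!\left(\sum_{s=0}^{\infty}\frac{t_s}{s!}\,x^s\right),
\]
so that, after differentiating $Z^{toy}(t)$ under the integral sign,
\[
\int x^{\gamma+n}\exp\!\left(\sum_{s=0}^{\infty}\frac{t_s}{s!}\,x^s\right)dx=n!\,\frac{\partial}{\partial t_n}Z^{toy}(t).
\]
This dictionary $x^{\gamma+n}\exp(\cdots)\leftrightarrow n!\,\partial_{t_n}$ is what converts the integrand into the sought differential operator.

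First I would insert the already-computed expansion of $x^{(m+1)(1-\Delta)}{\mathcal D}_{{\mathcal R}(p^{a},q^{a})}\big(x^{\Delta(m+1)+\gamma}\exp(\cdots)\big)$ into the constraint $\int_{\mathbb R}(\cdots)\,dx=0$ and integrate term by term. The first (local) term carries the monomial $x^{m+\gamma}$; applying the dictionary with $n=m$ replaces $\int x^{m+\gamma}\exp(\cdots)\,dx$ by $m!\,\partial_{t_m}Z^{toy}$, and together with the prefactor $\tau_1^{am}[\Delta(m+1)+\gamma]_{{\mathcal R}(p^a,q^a)}$ this reproduces the first summand of \eqref{Rpqop}. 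The second term is a sum over $k\ge 1$ of monomials $x^{m+\gamma+k}$ weighted by the Bell polynomials $B_k(t_1^a,\dots,t_k^a)/k!$; applying the dictionary with $n=m+k$ turns each $\int x^{m+\gamma+k}\exp(\cdots)\,dx$ into $(m+k)!\,\partial_{t_{m+k}}Z^{toy}$, and collecting the factor $(m+k)!/k!$ yields exactly the infinite-sum part of \eqref{Rpqop}, with the prefactor $h(p^a,q^a)\tau_2^{a(\Delta(m+1)+\gamma)}/(\tau_1^a-\tau_2^a)$. Since $Z^{toy}$ is a common right factor, the bracketed operator is precisely ${\mathcal T}^{a\Delta}_m$ and the constraint ${\mathcal T}^{a\Delta}_m Z^{toy}(t)=0$ holds.

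The main obstacle is analytic rather than algebraic: one must justify differentiating $Z^{toy}(t)$ under the integral sign and interchanging the infinite $k$-sum with the integration, so that each monomial $x^{\gamma+n}\exp(\cdots)$ may legitimately be replaced by $n!\,\partial_{t_n}Z^{toy}$; this is where the (formal) convergence of the generating series and the decay hypotheses implicit in the vanishing of the boundary term $\int_{\mathbb R} x^{(m+1)(1-\Delta)}{\mathcal D}_{{\mathcal R}(p^a,q^a)}f\,dx=0$ are used. The remaining work is purely bookkeeping: matching the index shift $k\mapsto k+m$ in the derivative slot, simplifying $1/\tau_1^{-am}=\tau_1^{am}$ in the local term, and recognizing $(m+k)!/k!$ as the combinatorial factor produced by the dictionary. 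No new identity beyond the $\mathcal{R}(p,q)$-Leibniz rule already invoked is required.
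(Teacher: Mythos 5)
Your proposal is correct and follows essentially the same route as the paper: the paper likewise inserts the $\mathcal{R}(p,q)$-Leibniz expansion of $x^{(m+1)(1-\Delta)}{\mathcal D}_{{\mathcal R}(p^{a},q^{a})}\big(x^{\Delta(m+1)+\gamma}e^{\sum_s t_s x^s/s!}\big)$ into the vanishing-integral constraint and converts each monomial $x^{\gamma+n}e^{\sum_s t_s x^s/s!}$ into $n!\,\partial/\partial t_n$ acting on $Z^{toy}$, which is exactly your dictionary with $n=m$ and $n=m+k$. Your added remark on justifying differentiation under the integral and the sum--integral interchange is a point the paper treats purely formally, but it does not change the argument.
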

Note that taking $\Delta=1,$ we obtained the result given in \cite{melong2022}.

Furthemore, there exist another way to determine the $\mathcal{R}(p,q)$-conformal differential operator. The result is given in the following lemma.
\begin{lemma}\label{lemdiffop}
	The $\mathcal{R}(p,q)$-conformal differential operator is given by:
	\begin{align}\label{rpqop2}
	{\mathcal T}^{a\Delta}_m&=[x\partial_x+\Delta(m+1)-m]_{{\mathcal R}(p^{a},q^{a})}\,m!\, \tau^{a\,m}_1\,{\partial\over \partial t_m}\nonumber\\ &+ h(p^{a},q^{a}){\tau^{a(\Delta(m+1)+\gamma)}_2\over \tau^{a}_1 - \tau^{a}_2}\sum_{k=1}^{\infty}{(m+1+\Delta(k-1))!\over k!}\nonumber\\ &\times B_k(t^{a}_1,\cdots,t^{a}_k){\partial\over \partial t_{m+1+\Delta(k-1)}}.
	\end{align}
\end{lemma}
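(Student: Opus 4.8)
The plan is to obtain \eqref{rpqop2} from the same vanishing condition that produced \eqref{Rpqop}, namely
\[
\int_{-\infty}^{+\infty}x^{(m+1)(1-\Delta)}{\mathcal D}_{{\mathcal R}(p^{a},q^{a})}\Big(x^{\Delta(m+1)+\gamma}\exp\big(\textstyle\sum_{s}t_s x^s/s!\big)\Big)\,dx=0,
\]
but reorganising the expansion so that the conformal number operator is kept intact rather than evaluated on the leading monomial. First I would record the operator identity
\[
{\mathcal T}^{a\Delta}_m=[x\partial_x+\Delta(m+1)-m]_{{\mathcal R}(p^{a},q^{a})}\,x^{m},
\]
which is the conformal realisation \eqref{rpqop} with $(p,q)$ replaced by $(p^{a},q^{a})$; it is consistent with \eqref{rpqopa} because on a monomial $x^{N}$ both sides return $[N+\Delta(m+1)]_{{\mathcal R}(p^{a},q^{a})}\,x^{N+m}$.

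Next I would apply the $\mathcal{R}(p,q)$-Leibniz rule to $x^{\Delta(m+1)+\gamma}\exp(\sum_s t_s x^s/s!)$ and split it with the additivity relation $[u+v]_{{\mathcal R}(p,q)}=\tau_1^{v}[u]_{{\mathcal R}(p,q)}+\tau_2^{u}[v]_{{\mathcal R}(p,q)}$, separating the action on the monomial $x^{\Delta(m+1)+\gamma}$ from the action on the exponential. The monomial part retains the conformal number operator $[x\partial_x+\Delta(m+1)-m]_{{\mathcal R}(p^{a},q^{a})}$ as a prefactor, together with the scaling $\tau_1^{am}$, rather than collapsing it to the scalar $[\Delta(m+1)+\gamma]_{{\mathcal R}(p^{a},q^{a})}$ as in the derivation of \eqref{Rpqop}. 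The exponential part carries the factor $h(p^{a},q^{a})\tau_2^{a(\Delta(m+1)+\gamma)}/(\tau_1^{a}-\tau_2^{a})$ and, crucially, here I would keep the $\Delta$-weighted monomials $x^{\Delta(m+k)+\gamma}$ of the Bell expansion; multiplying by the overall factor $x^{(1-\Delta)(m+1)}$ then yields the power $x^{m+1+\Delta(k-1)+\gamma}$ in place of the integer-shifted $x^{m+k+\gamma}$ used before.

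Finally, using $x^{j}\exp(\sum_s t_s x^s/s!)=j!\,\partial_{t_j}\exp(\sum_s t_s x^s/s!)$, with the inert factor $x^{\gamma}$ absorbed into $Z^{(toy)}$, the leading term becomes $[x\partial_x+\Delta(m+1)-m]_{{\mathcal R}(p^{a},q^{a})}\,m!\,\tau_1^{am}\,\partial_{t_m}$ and the $k$-th correction becomes $\frac{(m+1+\Delta(k-1))!}{k!}B_k(t_1^{a},\dots,t_k^{a})\,\partial_{t_{m+1+\Delta(k-1)}}$, so that imposing ${\mathcal T}^{a\Delta}_m Z^{(toy)}=0$ reads off \eqref{rpqop2}. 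I expect the main obstacle to be precisely this exponent bookkeeping: verifying that the $\Delta$-weighted powers collapse to the index $m+1+\Delta(k-1)$ with the factorial ratio $(m+1+\Delta(k-1))!/k!$, and that the leading factor organises into the conformal number operator. A useful consistency check is that at $\Delta=1$ the index reduces to $m+k$ and \eqref{rpqop2} collapses to the operator of \cite{melong2022}.
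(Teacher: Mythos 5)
Your proposal is correct and follows essentially the same route as the paper's own proof: rewrite ${\mathcal T}^{a\Delta}_m$ in number-operator form, ${\mathcal T}^{a\Delta}_m=[x\partial_x+\Delta(m+1)-m]_{{\mathcal R}(p^{a},q^{a})}\,x^{m}$, using ${\mathcal D}_{{\mathcal R}(p^{a},q^{a})}=\tfrac{1}{x}[x\partial_x]_{{\mathcal R}(p^{a},q^{a})}$ (the paper's relation \eqref{rpqopb}), and then repeat step by step the toy-model construction of \eqref{Rpqop} (vanishing integral, ${\mathcal R}(p,q)$-Leibniz rule with the additivity of deformed numbers, Bell-polynomial expansion, conversion of powers into $t$-derivatives). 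The paper states exactly this two-step reduction without details, and your exponent bookkeeping, $(1-\Delta)(m+1)+\Delta(m+k)=m+1+\Delta(k-1)$, together with the $\Delta=1$ check, correctly fills in what the paper leaves implicit.
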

\begin{proof}
	By using the definition of the $\mathcal{R}(p,q)$-derivative 
	\begin{eqnarray*}
		{\mathcal D}_{{\mathcal R}(p^{a},q^{a})}= \frac{1}{x}[x\partial_x]_{{\mathcal R}(p^{a},q^{a})},
	\end{eqnarray*}
	the conformal operators \eqref{rpqop1} may be rewritten in the form:
	\begin{eqnarray}\label{rpqopb}
	{\mathcal T}^{a\Delta}_m=[x\partial_x+\Delta(m+1)-m]_{{\mathcal R}(p^{a},q^{a})}\,x^{m}.
	\end{eqnarray} 
	Then, using step by step the same technique to construct the operator \eqref{Rpqop}, the result follows. 
\end{proof}	
\begin{corollary}
	The $q$-conformal differential operator is given by:
	\begin{eqnarray*}\label{qop2}
		{\mathcal T}^{a\Delta}_m&=&[x\partial_x+\Delta(m+1)-m]_{q^{a}}\,m!\, q^{a\,m}\,{\partial\over \partial t_m}\nonumber\\ &+& {h(q^{a})\over q^{a} - 1}\sum_{k=1}^{\infty}{(m+1+\Delta(k-1))!\over k!} B_k(t^{a}_1,\cdots,t^{a}_k){\partial\over \partial t_{m+1+\Delta(k-1)}}.
	\end{eqnarray*}
\end{corollary}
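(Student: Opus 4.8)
The plan is to obtain the assertion as the single-parameter reduction of Lemma~\ref{lemdiffop}, in complete parallel with the way the $q$-deformed Witt $n$-algebra was deduced from Proposition~\ref{propcWitt}. Recall that throughout this subsection every $\mathcal{R}(p,q)$-number is written in the two-parameter form $[n]_{\mathcal{R}(p,q)}=\frac{\tau_1^n-\tau_2^n}{\tau_1-\tau_2}$ of \eqref{rpqn}, where $\tau_1,\tau_2$ are auxiliary functions of $p$ and $q$. The Jackson-type $q$-deformation underlying the $q$-conformal operator is recovered by the specialisation $\tau_1=q$, $\tau_2=1$, and I would simply propagate this choice through the closed expression \eqref{rpqop2} provided by Lemma~\ref{lemdiffop}.

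Carrying out the substitution term by term is then mechanical. The number operator $[x\partial_x+\Delta(m+1)-m]_{\mathcal{R}(p^a,q^a)}$ becomes $[x\partial_x+\Delta(m+1)-m]_{q^a}$, and the monomial weight $\tau_1^{am}$ becomes $q^{am}$. In the scalar prefactor $h(p^a,q^a)\,\frac{\tau_2^{a(\Delta(m+1)+\gamma)}}{\tau_1^a-\tau_2^a}$ the choice $\tau_2=1$ forces the numerator $\tau_2^{a(\Delta(m+1)+\gamma)}$ to equal $1$ while $\tau_1^a-\tau_2^a=q^a-1$, so this factor collapses to $\frac{h(q^a)}{q^a-1}$. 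The Bell-polynomial entries $t_k^a=(\tau_1^{ak}-\tau_2^{ak})t_k$ reduce to $(q^{ak}-1)t_k$ but keep their abbreviation $t_k^a$, so that the shifted index $m+1+\Delta(k-1)$ and the combinatorial weights $\frac{(m+1+\Delta(k-1))!}{k!}$ survive verbatim. Assembling the two summands reproduces the displayed formula for ${\mathcal T}^{a\Delta}_m$.

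Since Lemma~\ref{lemdiffop} is already proved, no fresh computation is required and there is no genuine obstacle; the only point meriting a line of verification is the compatibility of the reduction $h(p^a,q^a)\mapsto h(q^a)$ with its defining expression $\frac{p^a-q^a}{p^{P^a}-q^{Q^a}}\mathcal{R}(p^{P^a},q^{Q^a})$ once the meromorphic function $\mathcal{R}$ specialising to the $q$-case is fixed. Concretely, I would check that under $\tau_1=q$, $\tau_2=1$ the identity $\mathcal{D}_{\mathcal{R}(q^a)}=\frac{1}{x}[x\partial_x]_{q^a}$ --- the relation invoked in the proof of Lemma~\ref{lemdiffop} --- continues to hold, so that the operator rewriting \eqref{rpqopb} and the Leibniz expansion behind \eqref{Rpqop} descend to the $q$-setting unchanged. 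Granting this, the corollary follows termwise.
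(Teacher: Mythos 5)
Your proposal is correct and coincides with the paper's own proof, which deduces the corollary from Lemma~\ref{lemdiffop} precisely by the specialisation $\tau_1=q$, $\tau_2=1$ in \eqref{rpqop2}. Your term-by-term verification (in particular that $\tau_2^{a(\Delta(m+1)+\gamma)}\to 1$ and $\tau_1^a-\tau_2^a\to q^a-1$, collapsing the prefactor to $h(q^a)/(q^a-1)$) simply spells out the details the paper leaves implicit.
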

\begin{proof}
	The result is obtained by taking $\tau_1=q$ and $\tau_2=1.$
\end{proof}
\begin{remark}
	For $\Delta=0$ and $\Delta=1,$ we obtain, respectively, the following $\mathcal{R}(p,q)$-differential operators:
	\begin{align}\label{Rpqop0}
	{\mathcal T}_m&=[x\partial_x-m]_{{\mathcal R}(p^{a},q^{a})}\,m!\, \tau^{a\,m}_1\,{\partial\over \partial t_m}\nonumber\\ &+ h(p^{a},q^{a}){\tau^{a\,\gamma}_2\over \tau^{a}_1 - \tau^{a}_2}\sum_{k=1}^{\infty}{(m+1)!\over k!} B_k(t^{a}_1,\cdots,t^{a}_k){\partial\over \partial t_{m+1}}
	\end{align}
	and
	\begin{align}\label{Rpqop1}
	{\mathcal T}^{a}_m&=[x\partial_x+1]_{{\mathcal R}(p^{a},q^{a})}\,m!\, \tau^{a\,m}_1\,{\partial\over \partial t_m}\nonumber\\&+ h(p^{a},q^{a}){\tau^{a(m+1+\gamma)}_2\over \tau^{a}_1 - \tau^{a}_2}\sum_{k=1}^{\infty}{(m+k)!\over k!} B_k(t^{a}_1,\cdots,t^{a}_k){\partial\over \partial t_{m+k}}.
	\end{align}
\end{remark}
\begin{remark}
	The conformal differential operators induced by quantum algebras in the literature are deduced as follows:
	\begin{enumerate}
		\item[(a)] The $q$-conformal differential operators are given by:
		\begin{eqnarray}\label{qop}
		{\mathcal T}^{a\Delta}_m&=&[\Delta(m+1)+\gamma]_{q^{a}}\,m!\, q^{-a\,m}\,{\partial\over \partial t_m}\nonumber\\ &+& {q^{-a(\Delta(m+1)+\gamma)}\over q^{a} - q^{-a}}\sum_{k=1}^{\infty}{(m+k)!\over k!} B_k(t^{a}_1,\cdots,t^{a}_k){\partial\over \partial t_{m+k}}
		\end{eqnarray}
		and
		\begin{align}\label{qopp}
		{\mathcal T}^{a\Delta}_m&=[x\partial_x+\Delta(m+1)-m]_{q^{a}}\,m!\, \tau^{a\,m}_1\,{\partial\over \partial t_m}\nonumber\\ &+ h(q^{a}){q^{-a(\Delta(m+1)+\gamma)}\over q^{a} - q^{-a}}\sum_{k=1}^{\infty}{(m+1+\Delta(k-1))!\over k!}\nonumber\\ &\times B_k(t^{a}_1,\cdots,t^{a}_k){\partial\over \partial t_{m+1+\Delta(k-1)}}.
		\end{align}
		\item[(b)]The $(p,q)$-conformal differential operators are determined as follows: \begin{eqnarray}\label{pqop}
		{\mathcal T}^{a\Delta}_m&=&[\Delta(m+1)+\gamma]_{p^{a},q^{a}}\,m!\, p^{-a\,m}\,{\partial\over \partial t_m}\nonumber\\ &+& {q^{a(\Delta(m+1)+\gamma)}\over p^{a} - q^{a}}\sum_{k=1}^{\infty}{(m+k)!\over k!} B_k(t^{a}_1,\cdots,t^{a}_k){\partial\over \partial t_{m+k}}
		\end{eqnarray}
		and \begin{align}\label{pqop2}
		{\mathcal T}^{a\Delta}_m&=[x\partial_x+\Delta(m+1)-m]_{p^{a},q^{a}}\,m!\, p^{a\,m}\,{\partial\over \partial t_m}\nonumber\\ &+ h(p^{a},q^{a}){q^{a(\Delta(m+1)+\gamma)}\over p^{a} - q^{a}}\sum_{k=1}^{\infty}{(m+1+\Delta(k-1))!\over k!}\nonumber\\ &\times B_k(t^{a}_1,\cdots,t^{a}_k){\partial\over \partial t_{m+1+\Delta(k-1)}}.
		\end{align}
		\item[(b)]The $(p^{-1},q)$-conformal differential operators are presented by: \begin{eqnarray}\label{cjop}
		{\mathcal T}^{a\Delta}_m&=&[\Delta(m+1)+\gamma]_{p^{-a},q^{a}}\,m!\, p^{a\,m}\,{\partial\over \partial t_m}\nonumber\\ &+& {q^{a(\Delta(m+1)+\gamma)}\over p^{-a} - q^{a}}\sum_{k=1}^{\infty}{(m+k)!\over k!} B_k(t^{a}_1,\cdots,t^{a}_k){\partial\over \partial t_{m+k}}
		\end{eqnarray}
		and \begin{align}\label{cjop2}
		{\mathcal T}^{a\Delta}_m&=[x\partial_x+\Delta(m+1)-m]_{p^{-a},q^{a}}\,m!\, p^{-a\,m}\,{\partial\over \partial t_m}\nonumber\\ &+ h(p^{-a},q^{a}){q^{a(\Delta(m+1)+\gamma)}\over p^{-a} - q^{a}}\sum_{k=1}^{\infty}{(m+1+\Delta(k-1))!\over k!}\nonumber\\ &\times B_k(t^{a}_1,\cdots,t^{a}_k){\partial\over \partial t_{m+1+\Delta(k-1)}}.
		\end{align}
		\item[(d)]The generalized $q$-Quesne conformal differential operators are derived as:\begin{eqnarray}\label{qQop}
		{\mathcal T}^{a\Delta}_m&=&[\Delta(m+1)+\gamma]^Q_{p^{a},q^{a}}\,m!\, p^{-a\,m}\,{\partial\over \partial t_m}\nonumber\\ &+& {q^{-a(\Delta(m+1)+\gamma)}\over p^{a} - q^{-a}}\sum_{k=1}^{\infty}{(m+k)!\over k!} B_k(t^{a}_1,\cdots,t^{a}_k){\partial\over \partial t_{m+k}}
		\end{eqnarray}
		and
		\begin{align}\label{qQop2}
		{\mathcal T}^{a\Delta}_m&=[x\partial_x+\Delta(m+1)-m]^Q_{p^{a},q^{a}}\,m!\, p^{a\,m}\,{\partial\over \partial t_m}\nonumber\\ &+ h(p^{a},q^{a}){q^{-a(\Delta(m+1)+\gamma)}\over p^{a} - q^{-a}}\sum_{k=1}^{\infty}{(m+1+\Delta(k-1))!\over k!}\nonumber\\ &\times B_k(t^{a}_1,\cdots,t^{a}_k){\partial\over \partial t_{m+1+\Delta(k-1)}}.
		\end{align}
	\end{enumerate}
\end{remark}
\section{ $\mathcal{R}(p,q)$-super Virasoro $n$-algebra with conformal dimension $(\Delta\neq 0,1)$} This section is reserved to the characterization of the super Virasoro algebra with conformal dimension $(\Delta \neq 0,1).$ Related particular cases are deduced.
Using step by step  the procedure presented to construct the ${\mathcal R}(p,q)$-conformal Virasoro $2n$-algebra (\ref{nalg}), we deduce  the  $\mathcal{R}(p,q)$-conformal super Virasoro $2n$-algebra. It's generated by the  bosonic and fermionic operators $\mathcal{L}^{\Delta}_m$ of parity $0$ and $\mathcal{G}^{\Delta}_m$ of parity $1$ obeying the following commutation relations: 
\begin{eqnarray}
\big[\mathcal{L}^{\Delta}_{m_1},\ldots, \mathcal{L}^{\Delta}_{m_{2n}}\big]_{\mathcal{R}(p,q)}=f^{\Delta}_{\mathcal{R}(p,q)}(m_1,\ldots,m_{2n})\mathcal{L}^{\Delta}_{\tilde{m}}+ \tilde{C}^{\Delta}_{\mathcal{R}(p,q)}(m_1,\ldots,m_{2n}),
\end{eqnarray}
\begin{eqnarray}\label{sV2na}
\big[\mathcal{L}^{\Delta}_{m_1},\cdots, \mathcal{G}^{\mathcal{R}(p,q)}_{m_{2n}}\big]_{{\mathcal R}(p,q)}= g^{\Delta}_{\mathcal{R}(p,q)}(m_1,\cdots m_{2n})+ {\mathcal CS}^{\Delta}_{{\mathcal R}(p,q)}(m_1,\cdots m_{2n}),
\end{eqnarray}
where $f^{\Delta}_{\mathcal{R}(p,q)}(m_1,\cdots,m_{2n})$ and $C^{\Delta}_{\mathcal{R}(p,q)}(m_1,\cdots,m_{2n})$ are given by the relations (\ref{cnalg1}), (\ref{cv}), 
\begin{align*}
g^{\Delta}_{\mathcal{R}(p,q)}(m_1,m_2,\cdots m_{2n})&={\big(q-p\big)^{{2n-1\choose 2}}\over \big(\tau_1\tau_2\big)^{-(n-1) \tilde{m}+1}}\bigg({[-2\tilde{m}-1]_{{\mathcal R}(p,q)}\over 2[\tilde{m}-1]_{{\mathcal R}(p,q)}}\bigg)\nonumber\\&\times\prod_{1\leq i<j\leq 2n-1} \big([m_i]_{{\mathcal R}(p,q)}- [m_j]_{{\mathcal R}(p,q)}\big)\nonumber\\&\times
\prod_{i=1}^{2n-1} \big([m_i]_{{\mathcal R}(p,q)}- [m_{2n}+1]_{{\mathcal R}(p,q)}\big)\mathcal{G}^{\Delta}_{\tilde{m}},
\end{align*}
\begin{align*}
{\mathcal CS}^{\Delta}_{\mathcal{R}(p,q)}(m_1,m_2,\cdots m_{2n})&=\sum_{k=1}^{2n-1}{(-1)^{k+1}c(p,q)(\tau_1\tau_2)^{-m_k}\over 6\times 2^{n-1}(n-1)!}{[m_{k}]_{\mathcal{R}(p,q)}\over [2m_{k}]_{{\mathcal R}(p,q)}}\nonumber\\&\times[m_k+1]_{{\mathcal R}(p,q)}[m_k]_{{\mathcal R}(p,q)}[m_k-1]_{{\mathcal R}(p,q)}\delta_{m_k+m_{2n}+1,0}\nonumber\\&\times\epsilon^{i_1\cdots i_{2n-2}}_{j_1\cdots j_{2n-2}}\prod_{s=1}^{n-1}{(\tau_1\tau_2)^{-i_{2s-1}}[i_{2s-1}]_{{\mathcal R}(p,q)}\over [2\,i_{2s-1}]_{{\mathcal R}(p,q)}}\nonumber\\&\times[i_{2s-1}+1]_{{\mathcal R}(p,q)}\,[i_{2s-1}]_{{\mathcal R}(p,q)}\,[i_{2s-1}-1]_{{\mathcal R}(p,q)}\delta_{i_{2s-1}+i_{2s},0},
\end{align*}
with $\{j_1,\cdots, j_{2n-2}\}=\{1,\cdots,\hat{k},\cdots,2n-1\}$ and other anti-commutators are zeros.
\subsection{Another $\mathcal{R}(p,q)$-super Witt $n$-algebra with conformal dimension $(\Delta\neq 0,1)$}In this section, we construct another   super Witt $n$-algebra with conformal dimension $(\Delta\neq 0,1)$ induced by the $\mathcal{R}(p,q)$-deformed quantum algebra \cite{HB}.
\begin{definition}
	For $(\Delta \neq 0,1),$ the $\mathcal{R}(p,q)$-super conformal operators are defined by:
	\begin{align}
	{\mathcal T}^{a\Delta}_m&:=x^{(1-\Delta)(m+1)}\bar{\Delta}\,x^{\Delta(m+1)}\label{rpqsop1},\\
	{\mathbb T}^{a\Delta}_m&:=\theta\,x^{(1-\Delta)(m+1)}\bar{\Delta}\,x^{\Delta(m+1)}\label{rpqsop2},
	\end{align}
	where $\bar{\Delta}$ is given in the relation \eqref{deltaxyrpq}.
\end{definition}
The conformal super operators \eqref{rpqsop1} and \eqref{rpqsop2} can be rewritten as:
\begin{align}
{\mathcal T}^{a\Delta}_m&=[\Delta(m+1)]_{{\mathcal R}(p^{a},q^{a})}\,x^{m}\label{rpqsopa}\\
\mathbb{T}^{a\Delta}_m&=\theta\,[\Delta(m+1)]_{{\mathcal R}(p^{a},q^{a})}\,x^{m}\label{rpqsopb}.
\end{align}
\begin{proposition}
	The conformal super operators \eqref{rpqsop1} and \eqref{rpqsop2} satisfy the following
	product relation: \begin{align}\label{spre}
	\mathcal{T}^{a\Delta}_{m}.\mathbb{T}^{b\Delta}_n&={\big(\tau^{a+b}_1-\tau^{a+b}_2\big)\tau^{a(n(1-\Delta)+1)}_1\over \big(\tau^{a}_1-\tau^{a}_2\big)\big(\tau^{b}_1-\tau^{b}_2\big)\tau^{b\Delta\,m}_1}\,\mathbb{T}^{(a+b)\Delta}_{m+n}+ \mathbb{F}^{\Delta(a,b)}_{\mathcal{R}(p,q)}(m,n)\nonumber\\&- {\tau^{b\Delta(n+1)}_2\over \tau^{a(n(\Delta-1)-1)}_1\big(\tau^{b}_1-\tau^{b}_2\big)}\, \mathbb{T}^{a\Delta}_{m+n}- {\tau^{a(\Delta(m+1)+n+1)}_2\over \tau^{b\Delta\,m}_1\big(\tau^{a}_1-\tau^{a}_2\big)}\,\mathbb{T}^{b\Delta}_{m+n},
	\end{align}
	with 
	\begin{eqnarray*}
		\mathbb{F}^{\Delta(a,b)}_{\mathcal{R}(p,q)}(m,n)=\tau^{(a+b)\Delta(m+n+1)}_2\,[n(1-\Delta)+1]_{\mathcal{R}(p^{a},q^{a})}\,[-\Delta\,m]_{\mathcal{R}(p^{b},q^{b})}
	\end{eqnarray*}
	and the commutation relation
	\begin{align}\label{scrto}
	\Big[\mathcal{T}^{a\Delta}_{m}, \mathbb{T}^{b\Delta}_n\Big]&={\big(\tau^{a+b}_1-\tau^{a+b}_2\big)\over \big(\tau^{a}_1-\tau^{a}_2\big)\big(\tau^{b}_1-\tau^{b}_2\big)}\bigg(\frac{\tau^{a(n(1-\Delta)+1)}_1}{\tau^{b\Delta\,m}_1}-\frac{\tau^{b(m(1-\Delta)+1)}_1}{\tau^{a\Delta\,n}_1}\bigg)\mathbb{T}^{(a+b)\Delta}_{m+n}\nonumber\\ &-\frac{\tau^{b\Delta(n+1)}_2}{\tau^{a\Delta\,n}_1}\frac{\big(\tau^{a(n+1)}_1-\tau^{b(m+1)}_2\big)}{\big(\tau^{b}_1-\tau^{b}_2\big)}\mathbb{T}^{a\Delta}_{m+n}+\mathbb{H}^{\Delta(a,b)}_{\mathcal{R}(p,q)}(m,n)\nonumber\\&- \frac{\tau^{a\Delta(m+1)}_2}{\tau^{b\Delta\,m}_1}\frac{\big(\tau^{b(m+1)}_1-\tau^{a(n+1)}_2\big)}{\big(\tau^{a}_1-\tau^{a}_2\big)}\mathbb{T}^{b\Delta}_{m+n},
	\end{align}
	where 
	\begin{align*}
	\mathbb{H}^{\Delta(a,b)}_{\mathcal{R}(p,q)}(m,n)&=\tau^{(a+b)\Delta(m+n+1)}_2\bigg([-\Delta\,m]_{\mathcal{R}(p^{b},q^{b})}\,[n(1-\Delta)+1]_{\mathcal{R}(p^{a},q^{a})}\nonumber\\&-[-\Delta\,n]_{\mathcal{R}(p^{a},q^{a})}\,[m(1-\Delta)+1]_{\mathcal{R}(p^{b},q^{b})}\bigg)
	\end{align*}
\end{proposition}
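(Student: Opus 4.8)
The plan is to mirror the computation that established the bosonic product relation \eqref{pre} and commutation relation \eqref{crto}, keeping track of the extra contribution produced by the fermionic grading. First I would replace the two super operators by their reduced monomial forms \eqref{rpqsopa} and \eqref{rpqsopb}, so that $\mathcal{T}^{a\Delta}_m$ acts by $[\Delta(m+1)]_{\mathcal{R}(p^a,q^a)}x^m$ and $\mathbb{T}^{b\Delta}_n$ carries the Grassmann factor $\theta$. Composing the operators as in the definitions \eqref{rpqsop1}--\eqref{rpqsop2} then amounts to applying $\mathcal{T}^{a\Delta}_m$ to $\theta x^{n}$, that is, to evaluating $x^{(1-\Delta)(m+1)}\bar\Delta\big(\theta x^{\Delta(m+1)+n}\big)$. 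Here the even $\sigma$-derivation rule \eqref{deltaxyrpq} is essential: on the odd sector $\bar\Delta(\theta t^k)=\big([k]_{\mathcal{R}(p,q)}+\tau_2^{k}\big)\theta t^k$, so the composition reduces, after $\bar\Delta$ acts, to a single term proportional to $\theta x^{m+n}$ with coefficient $[\Delta(n+1)]_{\mathcal{R}(p^b,q^b)}\big([\Delta(m+1)+n]_{\mathcal{R}(p^a,q^a)}+\tau_2^{a(\Delta(m+1)+n)}\big)$. It is exactly this additional $\tau_2$-term, absent in the bosonic case, that accounts for the uniform shift $[n(1-\Delta)]\mapsto[n(1-\Delta)+1]$ and the raised exponents ($\tau_1^{an(1-\Delta)}\mapsto\tau_1^{a(n(1-\Delta)+1)}$, and so on) seen on comparing \eqref{spre} with \eqref{pre}.

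The second step is purely an identity among $\mathcal{R}(p,q)$-numbers. Using the closed form \eqref{rpqn}, $[k]_{\mathcal{R}(p,q)}=(\tau_1^{k}-\tau_2^{k})/(\tau_1-\tau_2)$, together with the addition law $[u+v]_{\mathcal{R}(p,q)}=\tau_1^{v}[u]_{\mathcal{R}(p,q)}+\tau_2^{u}[v]_{\mathcal{R}(p,q)}$, I would expand the scalar coefficient obtained above and split it into the three combinations $[\Delta(m+n+1)]_{\mathcal{R}(p^{a+b},q^{a+b})}$, $[\Delta(m+n+1)]_{\mathcal{R}(p^{a},q^{a})}$ and $[\Delta(m+n+1)]_{\mathcal{R}(p^{b},q^{b})}$, which by \eqref{rpqsopb} are precisely $\mathbb{T}^{(a+b)\Delta}_{m+n}$, $\mathbb{T}^{a\Delta}_{m+n}$ and $\mathbb{T}^{b\Delta}_{m+n}$, plus a leftover scalar equal to $\mathbb{F}^{\Delta(a,b)}_{\mathcal{R}(p,q)}(m,n)$. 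Collecting these coefficients and matching the powers of $\tau_1$ and $\tau_2$ yields \eqref{spre}.

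For the commutation relation \eqref{scrto} I would repeat the composition in the opposite order, computing $\mathbb{T}^{b\Delta}_n.\mathcal{T}^{a\Delta}_m$ (since $\mathcal{T}$ has parity $0$ and $\mathbb{T}$ parity $1$, the bracket is the ordinary commutator) and subtract. The three operator-valued terms recombine with the prefactors displayed in \eqref{scrto}, while the two scalar contributions assemble into $\mathbb{H}^{\Delta(a,b)}_{\mathcal{R}(p,q)}(m,n)=\mathbb{F}^{\Delta(a,b)}_{\mathcal{R}(p,q)}(m,n)-\mathbb{F}^{\Delta(b,a)}_{\mathcal{R}(p,q)}(n,m)$, exactly as in the bosonic antisymmetrization \eqref{crtoh}.

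I expect the main obstacle to be the bookkeeping rather than any conceptual difficulty: one must carry the two independent bases $p^a,q^a$ and $p^b,q^b$ through the $\sigma$-derivation rule and track the interplay between the grading shift $+1$ and the factor $\tau_2^{a(\Delta(m+1)+n)}$, then verify that the decomposition into $\mathbb{T}^{(a+b)\Delta}$, $\mathbb{T}^{a\Delta}$ and $\mathbb{T}^{b\Delta}$ closes with the precise exponents appearing in \eqref{spre} and \eqref{scrto}. The algebraic identity underpinning that closure is the addition law for $\mathcal{R}(p,q)$-numbers, which guarantees that no monomial other than $\theta x^{m+n}$ survives.
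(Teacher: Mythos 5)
Your overall strategy coincides with the paper's: reduce both operators to their monomial forms \eqref{rpqsopa}--\eqref{rpqsopb}, compute the composition, and then decompose the resulting product of deformed numbers using the closed form \eqref{rpqn}. The gap is in the key intermediate step. You claim that the $\sigma$-derivation rule on the odd sector yields the coefficient
\[
[\Delta(n+1)]_{\mathcal{R}(p^{b},q^{b})}\Big([\Delta(m+1)+n]_{\mathcal{R}(p^{a},q^{a})}+\tau_2^{a(\Delta(m+1)+n)}\Big),
\]
and that this extra $\tau_2$-term ``accounts for'' the $+1$ shifts appearing in \eqref{spre}. That identification is false. If one expands the right-hand side of \eqref{spre} with \eqref{rpqn}, all cross terms cancel and the sum collapses exactly to
\[
[\Delta(m+1)+n+1]_{\mathcal{R}(p^{a},q^{a})}\,[\Delta(n+1)]_{\mathcal{R}(p^{b},q^{b})},
\]
so the proposition is equivalent to the product being precisely this quantity --- and this is what the paper's proof asserts before invoking \eqref{rpqn}. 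But by the addition law, $[k+1]_{\mathcal{R}(p^{a},q^{a})}=\tau_1^{a}[k]_{\mathcal{R}(p^{a},q^{a})}+\tau_2^{ak}$, so your expression $[k]_{\mathcal{R}(p^{a},q^{a})}+\tau_2^{ak}$ (with $k=\Delta(m+1)+n$) differs from the required $[k+1]_{\mathcal{R}(p^{a},q^{a})}$ by the factor $\tau_1^{a}$ multiplying the $[k]$ part. Starting from your coefficient, the decomposition closes onto the bosonic relation \eqref{pre} (dressed with $\theta$) plus the extra scalar $[\Delta(n+1)]_{\mathcal{R}(p^{b},q^{b})}\tau_2^{a(\Delta(m+1)+n)}$, not onto \eqref{spre}; the same defect then propagates to your derivation of \eqref{scrto}.

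To repair the argument you must justify that the odd-sector action produces the shifted number, i.e.\ that $\bar{\Delta}$ acts on $\theta x^{k}$ through $\tau_1^{a}[k]_{\mathcal{R}(p^{a},q^{a})}+\tau_2^{ak}$ rather than through $[k]_{\mathcal{R}(p^{a},q^{a})}+\tau_2^{ak}$. (Taking the paper's relation \eqref{deltaxyrpq} literally does give your version, so there is a genuine tension between that definition and the proposition; but as a proof of the stated relations \eqref{spre} and \eqref{scrto}, your chain of equalities does not go through, whereas the paper's proof simply posits the product $[\Delta(n+1)]_{\mathcal{R}(p^{b},q^{b})}[\Delta(m+1)+n+1]_{\mathcal{R}(p^{a},q^{a})}\,x^{m+n}$ and decomposes it.) The remainder of your plan --- the three-term splitting into $\mathbb{T}^{(a+b)\Delta}_{m+n}$, $\mathbb{T}^{a\Delta}_{m+n}$, $\mathbb{T}^{b\Delta}_{m+n}$, identifying $\mathbb{F}$ as the leftover scalar, and obtaining $\mathbb{H}^{\Delta(a,b)}_{\mathcal{R}(p,q)}(m,n)=\mathbb{F}^{\Delta(a,b)}_{\mathcal{R}(p,q)}(m,n)-\mathbb{F}^{\Delta(b,a)}_{\mathcal{R}(p,q)}(n,m)$ by antisymmetrization --- is sound once the correct intermediate coefficient is in place.
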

\begin{proof}
	By using  the relations \eqref{rpqsopa}, \eqref{rpqsopb} and after computation,  we get
	\begin{eqnarray*}
		\mathcal{T}^{a\Delta}_{m}.\mathbb{T}^{b\Delta}_n=[\Delta(n+1)]_{{\mathcal R}(p^{b},q^{b})}[\Delta(m+1)+n+1]_{{\mathbb R}(p^{a},q^{a})}\,x^{m+n}.
	\end{eqnarray*}
	From  the relation \eqref{rpqn}, the results follows. 
\end{proof}
\begin{remark} 
	Setting $a=b=1,$ in the relation \eqref{scrto}, we obtain the following commutation relation:
	\begin{align*}
	\Big[\mathcal{T}^{\Delta}_{m}, \mathbb{T}^{\Delta}_n\Big]&={\tau^{-\Delta(n+m)}_1\big(\tau^{n+1}_1-\tau^{m+1}_1\big)\over \big(\tau_1-\tau_2\big)}[2]_{\mathcal{R}(p,q)}\mathbb{T}^{2\Delta}_{m+n}+\mathbb{H}^{\Delta(1,1)}_{\mathcal{R}(p,q)}(m,n)\nonumber\\
	&-\bigg(\frac{\tau^{\Delta(n+1)}_2\big(\tau^{n+1}_1-\tau^{m+1}_2\big)}{\tau^{\Delta\,n}_1\big(\tau_1-\tau_2\big)}+ \frac{\tau^{\Delta(m+1)}_2\big(\tau^{m+1}_1-\tau^{n+1}_2\big)}{\tau^{\Delta\,m}_1\big(\tau_1-\tau_2\big)}\bigg)\mathbb{T}^{\Delta}_{m+n},
	\end{align*}
	where
	\begin{align*}
	\mathbb{H}^{\Delta(1,1)}_{\mathcal{R}(p,q)}(m,n)&=\tau^{2\Delta(m+n+1)}_2\bigg([-\Delta\,m]_{\mathcal{R}(p,q)}\,[n(1-\Delta)+1]_{\mathcal{R}(p,q)}\nonumber\\&-[-\Delta\,n]_{\mathcal{R}(p,q)}\,[m(1-\Delta)+1]_{\mathcal{R}(p,q)}\bigg).
	\end{align*}
\end{remark}
\begin{corollary}The $q$-conformal super operators:
	\begin{align*}
	{\mathcal T}^{a\Delta}_m&=[\Delta(m+1)]_{q^{a}}\,z^{m}\\
	\mathbb{T}^{a\Delta}_m&=\theta\,[\Delta(m+1)]_{q^{a}}\,z^{m}
	\end{align*}
	satisfy the commutation relation
	\begin{align*}
	\Big[\mathcal{T}^{a\Delta}_{m}, \mathbb{T}^{b\Delta}_n\Big]&={\big(q^{a+b}-1\big)\over \big(q^{a}-1\big)\big(q^{b}-1\big)}\bigg(\frac{q^{a(n(1-\Delta)+1)}}{q^{b\Delta\,m}}-\frac{q^{b(m(1-\Delta)+1)}}{q^{a\Delta\,n}}\bigg)\mathbb{T}^{(a+b)\Delta}_{m+n}\nonumber\\ &-\frac{1}{q^{a\Delta\,n}}\frac{\big(q^{a(n+1)}-1\big)}{\big(q^{b}-1\big)}\mathbb{T}^{a\Delta}_{m+n}+\mathbb{H}^{\Delta(a,b)}_{q}(m,n)\nonumber\\&- \frac{1}{q^{b\Delta\,m}}\frac{\big(q^{b(m+1)}-1\big)}{\big(q^{a}-1\big)}\mathbb{T}^{b\Delta}_{m+n},
	\end{align*}
	where 
	\begin{eqnarray*}
		\mathbb{H}^{\Delta(a,b)}_{q}(m,n)=\bigg([-\Delta\,m]_{q^{b}}\,[n(1-\Delta)+1]_{q^{a}}-[-\Delta\,n]_{q^{a}}\,[m(1-\Delta)+1]_{q^{b}}\bigg).
	\end{eqnarray*}
\end{corollary}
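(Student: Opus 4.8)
The plan is to obtain this corollary as a direct specialization of the preceding proposition on the $\mathcal{R}(p,q)$-super conformal operators, exactly as was done for the $q$-conformal (non-super) operators. First I would observe that the $q$-conformal super operators displayed in the statement are precisely the operators $\mathcal{T}^{a\Delta}_m$ and $\mathbb{T}^{a\Delta}_m$ of \eqref{rpqsopa}--\eqref{rpqsopb}, once the generalized $\mathcal{R}(p,q)$-number is reduced to the ordinary $q$-number. Using the representation \eqref{rpqn}, namely $[n]_{\mathcal{R}(p,q)}=(\tau_1^n-\tau_2^n)/(\tau_1-\tau_2)$, I would set $\tau_1=q$ and $\tau_2=1$; then $[n]_{\mathcal{R}(p^a,q^a)}=(q^{an}-1)/(q^a-1)=[n]_{q^a}$, which identifies the two families of operators.

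Next I would carry out the substitution $\tau_1=q,\ \tau_2=1$ termwise in the commutation relation \eqref{scrto}. Each factor of $\tau_2$ collapses to $1$, so that $\tau_2^{b\Delta(n+1)}\to 1$, $\tau_2^{b(m+1)}\to 1$, $\tau_2^{a(n+1)}\to 1$ and $\tau_2^{a\Delta(m+1)}\to 1$, while every $\tau_1^{\bullet}$ becomes the corresponding power of $q$. The leading coefficient $(\tau_1^{a+b}-\tau_2^{a+b})/((\tau_1^a-\tau_2^a)(\tau_1^b-\tau_2^b))$ becomes $(q^{a+b}-1)/((q^a-1)(q^b-1))$, and the three structure coefficients reduce to exactly the expressions multiplying $\mathbb{T}^{(a+b)\Delta}_{m+n}$, $\mathbb{T}^{a\Delta}_{m+n}$ and $\mathbb{T}^{b\Delta}_{m+n}$ in the statement, matching one by one.

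Finally I would treat the inhomogeneous piece $\mathbb{H}^{\Delta(a,b)}_{\mathcal{R}(p,q)}(m,n)$ in the same way. Its overall prefactor $\tau_2^{(a+b)\Delta(m+n+1)}$ becomes $1$, and each $\mathcal{R}(p^a,q^a)$- and $\mathcal{R}(p^b,q^b)$-number passes to the associated $q^a$- and $q^b$-number, yielding $\mathbb{H}^{\Delta(a,b)}_{q}(m,n)=[-\Delta m]_{q^b}[n(1-\Delta)+1]_{q^a}-[-\Delta n]_{q^a}[m(1-\Delta)+1]_{q^b}$, as claimed. Since the proposition has already been established through the product relation \eqref{spre} and the $\mathcal{R}(p,q)$-number identity \eqref{rpqn}, no fresh computation is required; the only point demanding care is to perform the specialization consistently across all four coefficients and the prefactor of $\mathbb{H}$, which is routine bookkeeping rather than a genuine obstacle.
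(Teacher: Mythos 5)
Your proposal is correct and coincides with the paper's own (largely implicit) argument: the corollary is obtained by setting $\tau_1=q$, $\tau_2=1$ in the representation $[n]_{\mathcal{R}(p,q)}=(\tau_1^{n}-\tau_2^{n})/(\tau_1-\tau_2)$ and specializing the commutation relation \eqref{scrto} term by term, exactly as the paper does for the parallel non-super corollary (``By taking $\tau_1=q$ and $\tau_2=1$ in the proposition''). Your bookkeeping of the collapsed $\tau_2$-factors, the structure coefficients, and the prefactor of $\mathbb{H}^{\Delta(a,b)}_{\mathcal{R}(p,q)}(m,n)$ reproduces the stated formulas without gaps.
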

From the super multibracket of order $n$, we have:
\begin{definition}
	The $\mathcal{R}(p,q)$-deformed super $n$-bracket is defined as follows:
	\begin{align*}
	\big[\mathcal{T}^{a\Delta}_{m_1},\mathcal{T}^{a\Delta}_{m_2},\cdots, \mathbb{T}^{a\Delta}_{m_n}\big]&:=\sum_{j=0}^{n-1}(-1)^{n-1+j}\epsilon^{i_1\ldots i_{n-1}}_{12\cdots n-1}\mathcal{T}^{a\Delta}_{m_{i_1}}\cdots \mathcal{T}^{a\Delta}_{m_{i_j}}\nonumber\\&\times
	\mathbb{T}^{a\Delta}_{m_{n}}\mathcal{T}^{a\Delta}_{m_{i_{j+1}}}\cdots \mathcal{T}^{a\Delta}_{m_{i_{n-1}}}.
	\end{align*}
\end{definition}

By using the relation (\ref{scrto}) with $a=b,$ we get:
\begin{align*}
\Big[\mathcal{T}^{a\Delta}_{m}, \mathbb{T}^{a\Delta}_n\Big]&={\tau^{-a\Delta(n+m)}_1\big(\tau^{a(n+1)}_1-\tau^{a(m+1)}_1\big)\over \big(\tau^{a}_1-\tau^{a}_2\big)}[2]_{\mathcal{R}(p^{a},q^{a})}\mathbb{T}^{2\Delta}_{m+n}\nonumber\\
&+\mathbb{H}^{\Delta(a)}_{\mathcal{R}(p,q)}(m,n)-\bigg(\frac{\tau^{a\Delta(n+1)}_2\big(\tau^{a(n+1)}_1-\tau^{a(m+1)}_2\big)}{\tau^{a\Delta\,n}_1\big(\tau^{a}_1-\tau^{a}_2\big)}\nonumber\\&+ \frac{\tau^{a\Delta(m+1)}_2\big(\tau^{a(m+1)}_1-\tau^{a(n+1)}_2\big)}{\tau^{a\Delta\,m}_1\big(\tau^{a}_1-\tau^{a}_2\big)}\bigg)\mathbb{T}^{a\Delta}_{m+n},
\end{align*}
where
\begin{align*}
\mathbb{H}^{\Delta(a)}_{\mathcal{R}(p,q)}(m,n)&=\tau^{2a\Delta(m+n+1)}_2\bigg([-\Delta\,m]_{\mathcal{R}(p^{a},q^{a})}\,[n(1-\Delta)+1]_{\mathcal{R}(p^{a},q^{a})}\nonumber\\&-[-\Delta\,n]_{\mathcal{R}(p^{a},q^{a})}\,[m(1-\Delta)+1]_{\mathcal{R}(p^{a},q^{a})}\bigg).
\end{align*}
\begin{proposition}
	The $\mathcal{R}(p,q)$-conformal super Witt $n$ algebra is generated by relation \eqref{crna} and the following commutation relation:
	\begin{align*}
	\Big[\mathcal{T}^{a\Delta}_{m_1},\cdots, \mathbb{T}^{a\Delta}_{m_n}\Big]&={(-1)^{n+1}\over \big(\tau^{a}_1-\tau^{a}_2\big)^{n-1}}\Big( \mathbb{V}^n_{a\Delta}[n]_{{\mathcal R}(p^{a},q^{a})}\mathbb{T}^{an\Delta}_{\bar{m}}\nonumber\\ &- [n-1]_{{\mathcal R}(p^{a},q^{a})}\big(\mathbb{D}^n_{a\Delta}+ \mathbb{N}^n_{a\Delta}\big){\mathbb T}^{a(n-1)\Delta}_{\bar{m}}\Big)\nonumber\\&+ \mathbb{H}^{\Delta(a)}_{\mathcal{R}(p,q)}\big(m_{1},\cdots, m_{n}\big),
	\end{align*}
	where 
	\begin{align*}
	\mathbb{V}^n_{a\Delta}&= \tau^{a(n-1)(1-\Delta )\bar{m}}_1\Big(\big(\tau^{a}_1-\tau^{a}_2\big)^{n\choose 2}\prod_{1\leq j < k \leq n}\Big([m_j+1]_{{\mathcal R}(p^{a},q^{a})}\nonumber\\&-[m_k+1]_{{\mathcal R}(p^{a},q^{a})}\Big)+\prod_{1\leq j < k \leq n}\Big(\tau^{a(m_j+1)}_2-\tau^{a(m_k+1)}_2\Big)\Big),
	\end{align*}
	\begin{align*}
	\mathbb{D}^n_{a\Delta}
	&=\Big(\big(\tau^{a}_1-\tau^{a}_2\big)^{n\choose 2}\prod_{1\leq j < k \leq n}\frac{\tau^{a\Delta(m_k+1)}_2}{\tau^{a\Delta\,m_k}_1}\Big([m_k+1]_{{\mathcal R}(p^{a},q^{a})}\nonumber\\&-[m_j+1]_{{\mathcal R}(p^{a},q^{a})}+\tau^{a(m_k+1)}_2-\tau^{a(m_j+1)}_1\Big),
	\end{align*}
	\begin{align*}
	\mathbb{N}^n_{a\Delta}&=(-1)^{n+1}
	\Big(\big(\tau^{a}_1-\tau^{a}_2\big)^{n\choose 2}\prod_{1\leq j < k \leq n}\frac{\tau^{a\Delta(m_j+1)}_2}{\tau^{a\Delta\,m_j}_1}\Big([m_j+1]_{{\mathcal R}(p^{a},q^{a})}\nonumber\\&-[m_k+1]_{{\mathcal R}(p^{a},q^{a})}+\tau^{a(m_j+1)}_2-\tau^{a(m_k+1)}_1\Big),
	\end{align*}
	and
	\begin{align*}
	\mathbb{H}^{\Delta(a)}_{\mathcal{R}(p,q)}(m_1,\ldots,m_n)&=\prod_{1\leq j < k \leq n}\bigg([-\Delta\,m_j]_{{\mathcal R}(p^{a},q^{a})}[m_k(1-\Delta)+1]_{{\mathcal R}(p^{a},q^{a})}\nonumber\\&-[-\Delta\,m_k]_{{\mathcal R}(p^{a},q^{a})}[m_j(1-\Delta)+1]_{{\mathcal R}(p^{a},q^{a})}\bigg)\tau^{an\Delta(\bar{m}+1)}_2.
	\end{align*}
\end{proposition}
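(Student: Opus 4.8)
The plan is to mirror the derivation of the bosonic conformal Witt $n$-algebra \eqref{crna} in Proposition \ref{propcWitt}, exploiting the fact that the super $n$-bracket contains exactly one fermionic slot. Since $\theta^2=0$, every monomial in the expansion of $\big[\mathcal{T}^{a\Delta}_{m_1},\cdots,\mathbb{T}^{a\Delta}_{m_n}\big]$ that would pair two factors of $\theta$ vanishes; this is already enforced by the definition of the bracket, so a single $\theta$ may be factored to the front and the problem reduces to a computation with the scalar parts of \eqref{rpqsopa} and \eqref{rpqsopb}, the only difference from the purely bosonic case being that the fermionic factor shifts the relevant conformal weight by one unit.

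First I would record the composition rule that drives everything: acting on a monomial $x^s$, the reduced operator \eqref{rpqsopa} gives $\mathcal{T}^{a\Delta}_m\,x^s=[\Delta(m+1)+s]_{\mathcal{R}(p^a,q^a)}\,x^{m+s}$, and likewise for \eqref{rpqsopb} with an extra $\theta$ and the shifted weight $\Delta(m+1)+s+1$, exactly as already used in the product relation \eqref{spre}. Iterating this rule along a word $\mathcal{T}^{a\Delta}_{m_{i_1}}\cdots\mathbb{T}^{a\Delta}_{m_n}\cdots$ produces a single $\theta$, a power $x^{\bar{m}}$ with $\bar{m}=\sum_l m_l$, and a string of deformed numbers whose arguments are the partial sums of the exponents. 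Rewriting each deformed number through the identity \eqref{rpqn}, $[k]_{\mathcal{R}(p^a,q^a)}=(\tau_1^{ak}-\tau_2^{ak})/(\tau_1^a-\tau_2^a)$, turns these strings into polynomials in $\tau_1^a,\tau_2^a$, which is the form in which the Levi-Civit\`a antisymmetrization can be resolved.

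Next I would perform the Levi-Civit\`a summation. Antisymmetrizing the $\tau$-monomials over the bosonic indices collapses the sum into Vandermonde-type determinants: the differences of $\tau_1^a$-powers yield $\prod_{1\leq j<k\leq n}\big([m_j+1]_{\mathcal{R}(p^a,q^a)}-[m_k+1]_{\mathcal{R}(p^a,q^a)}\big)$ together with $\prod_{1\leq j<k\leq n}\big(\tau_2^{a(m_j+1)}-\tau_2^{a(m_k+1)}\big)$, assembled into $\mathbb{V}^n_{a\Delta}$, the fermionic $+1$ shift being precisely what replaces $m_j$ by $m_j+1$ relative to the bosonic coefficient $V^n_{a\Delta}$ of \eqref{vn}. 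Grouping terms by the total conformal level of the surviving operator then separates the leading contribution proportional to $\mathbb{T}^{an\Delta}_{\bar{m}}$, which carries $[n]_{\mathcal{R}(p^a,q^a)}$, from the subleading ones proportional to $\mathbb{T}^{a(n-1)\Delta}_{\bar{m}}$, which split into $\mathbb{D}^n_{a\Delta}$ and $\mathbb{N}^n_{a\Delta}$ according to whether the extra factor is absorbed by the $k$-th or the $j$-th operator; the term-by-term base case is exactly the two-element relation \eqref{scrto} specialized to $a=b$. The residual, index-diagonal part that does not reduce to a single operator assembles into the inhomogeneous piece $\mathbb{H}^{\Delta(a)}_{\mathcal{R}(p,q)}$.

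The main obstacle I anticipate is the combinatorial bookkeeping rather than any conceptual difficulty: one must track the signs coming jointly from $(-1)^{n-1+j}$ in the bracket definition and from the Levi-Civit\`a symbol, and verify that after antisymmetrization the $(n-1)!$ reorderings of the bosonic factors condense into the clean products $\prod_{1\leq j<k\leq n}$ appearing in $\mathbb{V}^n_{a\Delta},\mathbb{D}^n_{a\Delta},\mathbb{N}^n_{a\Delta}$. The delicate point is confirming that the fermionic weight shift threads consistently through every partial-sum argument, so that the coefficients carry $[m_j+1]_{\mathcal{R}(p^a,q^a)}$ and $[m_j(1-\Delta)+1]_{\mathcal{R}(p^a,q^a)}$ uniformly, and that the remaining homogeneous part reproduces \eqref{crna} verbatim, leaving only the $\theta$-linear corrections recorded above.
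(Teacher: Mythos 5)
Your proposal is correct and follows essentially the route the paper intends: the paper states this proposition without an explicit proof (its bosonic analogue, Proposition \ref{propcWitt}, is likewise dismissed as straightforward computation), and the implicit argument is exactly yours --- expand the super $n$-bracket term by term using the composition rule underlying \eqref{spre} and \eqref{scrto} with $a=b$, rewrite the deformed numbers via \eqref{rpqn}, and resolve the Levi-Civit\`a antisymmetrization into Vandermonde-type products that assemble into $\mathbb{V}^n_{a\Delta}$, $\mathbb{D}^n_{a\Delta}$, $\mathbb{N}^n_{a\Delta}$ and the inhomogeneous piece $\mathbb{H}^{\Delta(a)}_{\mathcal{R}(p,q)}$. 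Your key observation, that the single fermionic slot shifts every relevant weight by one so that $[m_j]_{\mathcal{R}(p^a,q^a)}$ in \eqref{vn}--\eqref{nn} becomes $[m_j+1]_{\mathcal{R}(p^a,q^a)}$, is precisely the only modification needed relative to the bosonic algebra \eqref{crna}.
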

\subsection{Conformal super Witt $n$-algebra and quantum algebras}
In this section, the conformal super Witt $n$-algebra corresponding to quantum algebra existing in the literature are derived.
\subsubsection{Conformal super Witt $n$-algebra associated to the Biedenharn-Macfarlane algebra }
The conformal Witt $n$-algebra induced by the Biedenharn-Macfarlane algebra \cite{BC,M} is derived by:  
for $(\Delta \neq 0,1),$ the $q$-conformal super operators are defined by:
\begin{eqnarray}
{\mathcal T}^{a\Delta}_m&:=&x^{(1-\Delta)(m+1)}\bar{\Delta}\,x^{\Delta(m+1)}\label{qsop1}\\{\mathbb T}^{a\Delta}_m&:=&\theta x^{(1-\Delta)(m+1)}\bar{\Delta}\,x^{\Delta(m+1)}\label{qsop2},
\end{eqnarray}
where ${\mathcal D}_{q^{a}}$ is the $q$-derivative \eqref{qder}.
From the  $q$-number\eqref{qnumber},
the relations \eqref{qsop1} and \eqref{qsop2} are reduced as:
\begin{eqnarray*}
	{\mathcal T}^{a\Delta}_m&=&[\Delta(m+1)]_{q^{a}}\,x^{m}\\{\mathbb T}^{a\Delta}_m&=&\theta[\Delta(m+1)]_{q^{a}}\,x^{m}.
\end{eqnarray*}
Thus, the $q$-conformal super operators (\ref{qsop1}) and \eqref{qsop2} satisfies the product relation: \begin{align*}
\mathcal{T}^{a\Delta}_{m}.\mathbb{T}^{b\Delta}_n&={\big(q^{a+b}-q^{-a-b}\big)q^{a(n(1-\Delta)+1)}\over \big(q^{a}-q^{-a}\big)\big(q^{b}-q^{-b}\big)q^{b\Delta\,m}}\,\mathbb{T}^{(a+b)\Delta}_{m+n}- {q^{-b\Delta(n+1)}\over q^{a(n(\Delta-1)-1)}\big(q^{b}-q^{-b}\big)}\, \mathbb{T}^{a\Delta}_{m+n}\nonumber\\& - {q^{-a(\Delta(m+1)+n+1)}\over q^{b\Delta\,m}\big(q^{a}-q^{-a}\big)}\,\mathbb{T}^{b\Delta}_{m+n}+ \mathbb{F}^{\Delta(a,b)}_{q}(m,n),
\end{align*}
with \begin{eqnarray*}
	\mathbb{F}^{\Delta(a,b)}_{q}(m,n)=q^{-(a+b)\Delta(m+n+1)}\,[-\Delta\,m]_{q^{b}}\,[n(1-\Delta)+1]_{q^{a}},
\end{eqnarray*}	
and the commutation relation
\begin{align}\label{qscrto}
\Big[\mathcal{T}^{a\Delta}_{m}, \mathbb{T}^{b\Delta}_n\Big]&={\big(q^{a+b}-q^{-a-b}\big)\over \big(q^{a}-q^{-a}\big)\big(q^{b}-q^{-b}\big)}\bigg(\frac{q^{a(n(1-\Delta)+1)}}{q^{b\Delta\,m}}-\frac{q^{bm(1-\Delta)}}{q^{a\Delta\,n}}\bigg)\mathbb{T}^{(a+b)\Delta}_{m+n}\nonumber\\ &-\frac{q^{-b\Delta(n+1)}}{q^{a\Delta\,n}}\frac{\big(q^{a(n+1)}-q^{-b(m+1)}\big)}{\big(q^{b}-q^{-b}\big)}\mathbb{T}^{a\Delta}_{m+n}+\mathbb{H}^{\Delta(a,b)}_{q}(m,n) \nonumber\\ &-\frac{q^{-a\Delta(m+1)}}{q^{b\Delta\,m}}\frac{\big(q^{b(m+1)}-q^{-a(n+1)}\big)}{\big(q^{a}-q^{-a}\big)}\mathbb{T}^{b\Delta}_{m+n},
\end{align}
where 
\begin{align*}
\mathbb{H}^{\Delta(a,b)}_{q}(m,n)&=q^{-(a+b)\Delta(m+n+1)}\bigg([-\Delta\,m]_{q^{b}}\,[n(1-\Delta)+1]_{q^{a}}\nonumber\\&-[-\Delta\,n]_{q^{a}}\,[m(1-\Delta)+1]_{q^{b}}\bigg).
\end{align*}
Putting $a=b=1,$ in the relation \eqref{qscrto},  the commutation relation:
\begin{align*}
\Big[\mathcal{T}^{\Delta}_{m}, \mathbb{T}^{\Delta}_n\Big]&={q^{-\Delta(n+m)}\big(q^{n+1}-q^{m+1}\big)\over \big(q-q^{-1}\big)}[2]_{q}\mathbb{T}^{2\Delta}_{m+n}+\mathbb{H}^{\Delta}_{q}(m,n)\nonumber\\
&-\bigg(\frac{q^{-\Delta(n+1)}\big(q^{n+1}-q^{-m-1}\big)}{q^{\Delta\,n}\big(q-q^{-1}\big)}+ \frac{q^{-\Delta(m+1)}\big(q^{m+1}-q^{-n-1}\big)}{q^{\Delta\,m}\big(q-q^{-1}\big)}\bigg)\mathbb{T}^{\Delta}_{m+n},
\end{align*}
where
\begin{align*}
\mathbb{H}^{\Delta(1,1)}_{q}(m,n)&=q^{-2\Delta(m+n+1)}\bigg([-\Delta\,m]_{q}[n(1-\Delta)+1]_{q}\\&-[-\Delta\,n]_{q}[m(1-\Delta)+1]_{q}\bigg).
\end{align*}
Besides,
the super $n$-bracket is defined by:
\begin{align*}
\big[\mathcal{T}^{a\Delta}_{m_1},\mathcal{T}^{a\Delta}_{m_2},\cdots, \mathbb{T}^{a\Delta}_{m_n}\big]&:=\sum_{j=0}^{n-1}(-1)^{n-1+j}\epsilon^{i_1\ldots i_{n-1}}_{12\cdots n-1}\mathcal{T}^{a\Delta}_{m_{i_1}}\cdots \mathcal{T}^{a\Delta}_{m_{i_j}}\nonumber\\&\times
\mathbb{T}^{a\Delta}_{m_{n}}\mathcal{T}^{a\Delta}_{m_{i_{j+1}}}\cdots \mathcal{T}^{a\Delta}_{m_{i_{n-1}}}.
\end{align*}
Setting $a=b$ in the relation (\ref{qscrto}),  we obtain:
\begin{align*}
\Big[\mathcal{T}^{a\Delta}_{m}, \mathbb{T}^{a\Delta}_n\Big]&={q^{-a\Delta(n+m)}\big(q^{a(n+1)}-q^{a(m+1)}\big)\over q^{a}-q^{-a}}[2]_{q^a}\mathbb{T}^{2a\Delta}_{m+n}\nonumber\\ &-\frac{1}{\big(q-q^{-1}\big)}\bigg(\frac{q^{-a\Delta(n+1)}}{q^{a\Delta\,n}}\big(q^{a(n+1)}-q^{-a(m+1)}\big)\nonumber\\&- \frac{q^{-a\Delta(m+1)}}{q^{a\Delta\,m}}\big(q^{a(m+1)}-q^{-a(n+1)}\big)\bigg)\mathbb{T}^{a\Delta}_{m+n}+\mathbb{H}^{\Delta(a)}_{q}(m,n),
\end{align*}
where \begin{align*}
\mathbb{H}^{\Delta(a)}_{q}(m,n)&=q^{-2a\Delta(m+n+1)}\bigg([-\Delta\,m]_{q^{a}}[n(1-\Delta)+1]_{q^{a}}\\&-[-\Delta\,n]_{q^{a}}[m(1-\Delta)+1]_{q^{a}}\bigg).
\end{align*}
The $q$-conformal super Witt $n$-algebra is determined  by :
\begin{align*}
\Big[{\mathcal T}^{a\Delta}_{m_1},\cdots,{\mathbb T}^{a\Delta}_{m_n}
\Big]&={(-1)^{n+1}\over \big(q^{a}-q^{-a}\big)^{n-1}}\Big( \mathbb{V}^n_{a\Delta}[n]_{q^{a}}{\mathbb T}^{n\,a\Delta}_{\bar{m}}\nonumber\\ &- [n-1]_{q^{a}}\big(\mathbb{D}^n_{a\Delta}+ \mathbb{N}^n_{a\Delta}\big){\mathbb T}^{a(n-1)\Delta}_{\bar{m}}\Big)\nonumber\\&+\mathbb{H}^{\Delta}_{q}(m_1,\ldots,m_n),
\end{align*}
where 
\begin{align*}
\mathbb{V}^n_{a\Delta}&= q^{a(n-1)(1-\Delta )\bar{m}}\Big(\big(q^{a}-q^{-a}\big)^{n\choose 2}\prod_{1\leq j < k \leq n}\Big([m_j+1]_{q^{a}}-[m_k+1]_{q^{a}}\Big)\nonumber\\&+\prod_{1\leq j < k \leq n}\Big(q^{-a(m_j+1)}-q^{-a(m_k+1)}\Big)\Big),
\end{align*}
\begin{align*}
\mathbb{D}^n_{a\Delta}
&=\Big(\big(q^{a}-q^{-a}\big)^{n\choose 2}\prod_{1\leq j < k \leq n}\frac{q^{-a\Delta(m_k+1)}}{q^{a\Delta\,m_k}}\Big([m_k+1]_{q^{a}}-[m_j+1]_{q^{a}}\\&+q^{-a(m_k+1)}-q^{a(m_j+1)}\Big),
\end{align*}
\begin{align*}
\mathbb{D}^n_{a\Delta}&=(-1)^{n+1}
\Big(\big(q^{a}-q^{-a}\big)^{n\choose 2}\prod_{1\leq j < k \leq n}\frac{q^{-a\Delta(m_j+1)}}{q^{a\Delta\,m_j}}\Big([m_j+1]_{q^{a}}-[m_k+1]_{q^{a}}\nonumber\\&+q^{-a(m_j+1)}-q^{a(m_k+1)}\Big),
\end{align*}
and
\begin{align*}
\mathbb{H}^{\Delta(a)}_{q}(m_1,\ldots,m_n)&=q^{-an\Delta(\bar{m}+1)}\prod_{1\leq j < k \leq n}\bigg([-\Delta\,m_j]_{q^{a}}[m_k(1-\Delta)+1]_{q^{a}}\nonumber\\&-[-\Delta\,m_k]_{q^{a}}[m_j(1-\Delta)+1]_{q^{a}}\bigg).
\end{align*}
\subsubsection{Conformal super Witt $n$-algebra corresponding to the Jagannathan-Srinivasa algebra } 
The conformal super Witt $n$-algebra from the Jagannathan-Srinivasa algebra\cite{JS} is derived as: We consider 
for $(\Delta \neq 0,1),$ the $(p,q)$-conformal super operators defined by:
\begin{eqnarray}
\,{\mathcal T}^{a\Delta}_m&:=&x^{(1-\Delta)(m+1)}\bar{\Delta}\,x^{\Delta(m+1)}\label{pqsop1}\\
\,{\mathbb T}^{a\Delta}_m&:=&\theta\,x^{(1-\Delta)(m+1)}\bar{\Delta}\,x^{\Delta(m+1)}\label{pqsop2},
\end{eqnarray}
where ${\mathcal D}_{p^{a},q^{a}}$ is the $(p,q)$-derivative \eqref{pqder}.
By using the  $(p,q)$-number \eqref{pqnumber},  the relation \eqref{pqsop1} and \eqref{pqsop2} take the form:
\begin{eqnarray*}
	{\mathcal T}^{a\Delta}_m&=&[\Delta(m+1)]_{p^{a},q^{a}}\,x^{m}\\
	{\mathbb T}^{a\Delta}_m&=&\theta[\Delta(m+1)]_{p^{a},q^{a}}\,x^{m}.
\end{eqnarray*}
Then, the $(p,q)$-conformal super operators (\ref{pqsop1}) and \eqref{pqsop2} satisfy the product relation: \begin{align*}
\mathcal{T}^{a\Delta}_{m}.\mathbb{T}^{b\Delta}_n&={\big(p^{a+b}-q^{a+b}\big)p^{a(n(1-\Delta)+1)}\over \big(p^{a}-q^{a}\big)\big(p^{b}-q^{b}\big)p^{b\Delta\,m}}\,\mathbb{T}^{(a+b)\Delta}_{m+n}- {q^{b\Delta(n+1)}\over p^{a(n(\Delta-1)-1)}\big(p^{b}-q^{b}\big)}\, \mathbb{T}^{a\Delta}_{m+n}\nonumber\\& - {q^{a(\Delta(m+1)+n+1)}\over p^{b\Delta\,m}\big(p^{a}-q^{a}\big)}\,\mathbb{T}^{b\Delta}_{m+n}+ \mathbb{F}^{\Delta(a,b)}_{p,q}(m,n),
\end{align*}
with \begin{eqnarray*}
	\mathbb{F}^{\Delta(a,b)}_{p,q}(m,n)=q^{(a+b)\Delta(m+n+1)}\,[-\Delta\,m]_{p^{b},q^{b}}\,[n(1-\Delta)+1]_{p^{a},q^{a}},
\end{eqnarray*}	
and the commutation relation
\begin{align}\label{pqscrto}
\Big[\mathcal{T}^{a\Delta}_{m}, \mathbb{T}^{b\Delta}_n\Big]&={\big(p^{a+b}-q^{a+b}\big)\over \big(p^{a}-q^{a}\big)\big(p^{b}-q^{b}\big)}\bigg(\frac{p^{a(n(1-\Delta)+1)}}{p^{b\Delta\,m}}-\frac{p^{b(m(1-\Delta)+1)}}{p^{a\Delta\,n}}\bigg)\mathbb{T}^{(a+b)\Delta}_{m+n}\nonumber\\ &-\frac{q^{b\Delta(n+1)}}{p^{a\Delta\,n}}\frac{\big(p^{a(n+1)}-q^{b(m+1)}\big)}{\big(p^{b}-q^{b}\big)}\mathbb{T}^{a\Delta}_{m+n}+\mathbb{H}^{\Delta(a,b)}_{p,q}(m,n) \nonumber\\ &-\frac{q^{a\Delta(m+1)}}{p^{b\Delta\,m}}\frac{\big(p^{b(m+1)}-q^{a(n+1)}\big)}{\big(p^{a}-q^{a}\big)}\mathbb{T}^{b\Delta}_{m+n},
\end{align}
with 
\begin{eqnarray*}
	\mathbb{H}^{\Delta(a,b)}_{p,q}(m,n)&=&q^{(a+b)\Delta(m+n+1)}\bigg([-\Delta\,m]_{p^{b},q^{b}}\,[n(1-\Delta)+1]_{p^{a},q^{a}}\nonumber\\&-&[-\Delta\,n]_{p^{a},q^{a}}\,[m(1-\Delta)+1]_{p^{b},q^{b}}\bigg).
\end{eqnarray*}
Putting $a=b=1,$ in the relation \eqref{pqscrto}, we obtain the commutation relation:
\begin{align*}
\Big[\mathcal{T}^{\Delta}_{m}, \mathbb{T}^{\Delta}_n\Big]&={p^{-\Delta(n+m)}\big(p^{n+1}-p^{m+1}\big)\over \big(p-q\big)}[2]_{p,q}\mathbb{T}^{2\Delta}_{m+n}+\mathbb{H}^{\Delta}_{p,q}(m,n)\nonumber\\
&-\bigg(\frac{q^{\Delta(n+1)}\big(p^{n+1}-q^{m+1}\big)}{p^{\Delta\,n}\big(p-q\big)}+ \frac{q^{\Delta(m+1)}\big(p^{m+1}-q^{n+1}\big)}{p^{\Delta\,m}\big(p-q\big)}\bigg)\mathbb{T}^{\Delta}_{m+n},
\end{align*}
where
\begin{align*}
\mathbb{H}^{\Delta(1,1)}_{p,q}(m,n)&=q^{2\Delta(m+n+1)}\bigg([-\Delta\,m]_{p,q}\,[n(1-\Delta)+1]_{p,q}\nonumber\\&-[-\Delta\,n]_{p,q}\,[m(1-\Delta)+1]_{p,q}\bigg).
\end{align*}
Furthermore, the super $n$-bracket is defined by:
\begin{align*}
\big[\mathcal{T}^{a\Delta}_{m_1},\mathcal{T}^{a\Delta}_{m_2},\cdots, \mathbb{T}^{a\Delta}_{m_n}\big]&:=\sum_{j=0}^{n-1}(-1)^{n-1+j}\epsilon^{i_1\ldots i_{n-1}}_{12\cdots n-1}\mathcal{T}^{a\Delta}_{m_{i_1}}\cdots \mathcal{T}^{a\Delta}_{m_{i_j}}\nonumber\\&\times
\mathbb{T}^{a\Delta}_{m_{n}}\mathcal{T}^{a\Delta}_{m_{i_{j+1}}}\cdots \mathcal{T}^{a\Delta}_{m_{i_{n-1}}}.
\end{align*}
Putting $a=b$ in the relation (\ref{pqscrto}),  we obtain:
\begin{align*}
\Big[\mathcal{T}^{a\Delta}_{m}, \mathbb{T}^{a\Delta}_n\Big]&={p^{-a\Delta(n+m)}\big(p^{a(n+1)}-p^{a(m+1)}\big)\over p^{a}-q^{a}}[2]_{p^a,q^a}\mathbb{T}^{2a\Delta}_{m+n}\nonumber\\ &-\frac{1}{\big(p-q\big)}\bigg(\frac{q^{a\Delta(n+1)}}{p^{a\Delta\,n}}\big(p^{a(n+1)}-q^{a(m+1)}\big)\nonumber\\&- \frac{q^{a\Delta(m+1)}}{p^{a\Delta\,m}}\big(p^{a(m+1)}-q^{a(n+1)}\big)\bigg)\mathbb{T}^{a\Delta}_{m+n}+\mathbb{H}^{\Delta(a)}_{p,q}(m,n),
\end{align*}
where \begin{align*}
\mathbb{H}^{\Delta(a)}_{p,q}(m,n)&=q^{2a\Delta(m+n+1)}\bigg([-\Delta\,m]_{p^{a},q^{a}}[n(1-\Delta)+1]_{p^{a},q^{a}}\nonumber\\&-[-\Delta\,n]_{p^{a},q^{a}}[m(1-\Delta)+1]_{p^{a},q^{a}}\bigg)
\end{align*}
The $(p,q)$-conformal super Witt $n$-algebra is given  by :
\begin{align*}
\Big[{\mathcal T}^{a\Delta}_{m_1},\cdots,{\mathbb T}^{a\Delta}_{m_n}
\Big]&={(-1)^{n+1}\over \big(p^{a}-q^{a}\big)^{n-1}}\Big( \mathbb {V}^n_{a\Delta}[n]_{p^{a},q^{a}}{\mathbb T}^{n\,a\Delta}_{\bar{m}}\nonumber\\ &- [n-1]_{p^{a},q^{a}}\big(\mathbb{D}^n_{a\Delta}+ \mathbb{N}^n_{a\Delta}\big){\mathbb T}^{a(n-1)\Delta}_{\bar{m}}\Big)\nonumber\\&+\mathbb{H}^{\Delta}_{p,q}(m_1,\ldots,m_n),
\end{align*}
where 
\begin{align*}
\mathbb{V}^n_{a\Delta}&= p^{a(n-1)(1-\Delta )\bar{m}}\Big(\big(p^{a}-q^{a}\big)^{n\choose 2}\prod_{1\leq j < k \leq n}\Big([m_j+1]_{p^{a},q^{a}}-[m_k+1]_{p^{a},q^{a}}\Big)\nonumber\\&+\prod_{1\leq j < k \leq n}\Big(q^{a(m_j+1)}-q^{a(m_k+1)}\Big)\Big),
\end{align*}
\begin{eqnarray*}
	\mathbb{D}^n_{a\Delta}
	&=&\Big(\big(p^{a}-q^{a}\big)^{n\choose 2}\prod_{1\leq j < k \leq n}\frac{q^{a\Delta(m_k+1)}}{p^{a\Delta\,m_k}}\Big([m_k+1]_{p^{a},q^{a}}-[m_j+1]_{p^{a},q^{a}}\\&+&q^{a(m_k+1)}-p^{a(m_j+1)}\Big),
\end{eqnarray*}
\begin{eqnarray*}
	\mathbb{N}^n_{a\Delta}&=&(-1)^{n+1}
	\Big(\big(p^{a}-q^{a}\big)^{n\choose 2}\prod_{1\leq j < k \leq n}\frac{q^{a\Delta(m_j+1)}}{p^{a\Delta\,m_j}}\Big([m_j+1]_{p^{a},q^{a}}-[m_k+1]_{p^{a},q^{a}}\nonumber\\&+&q^{a(m_j+1)}-p^{a(m_k+1)}\Big),
\end{eqnarray*}
and
\begin{align*}
\mathbb{H}^{\Delta(a)}_{p,q}(m_1,\ldots,m_n)&=q^{an\Delta(\bar{m}+1)}\prod_{1\leq j < k \leq n}\bigg([-\Delta\,m_j]_{p^{a},q^{a}}[m_k(1-\Delta)+1]_{p^{a},q^{a}}\nonumber\\&-[-\Delta\,m_k]_{p^{a},q^{a}}[m_j(1-\Delta)+1]_{p^{a},q^{a}}\bigg).
\end{align*}
\subsubsection{Conformal super Witt $n$-algebra associated to the Chakrabarty and Jagannathan algebra }
The conformal super Witt $n$-algebra from the Chakrabarty and Jagannathan algebra \cite{CJ}  is derived as: We consider 
for $(\Delta \neq 0,1),$ the $(p^{-1},q)$-conformal super operators defined by:
\begin{align}
\,{\mathcal T}^{a\Delta}_m&:=x^{(1-\Delta)(m+1)}\bar{\Delta}\,x^{\Delta(m+1)}\label{cjsop1}\\
\,{\mathbb T}^{a\Delta}_m&:=\theta\,x^{(1-\Delta)(m+1)}\bar{\Delta}\,x^{\Delta(m+1)}\label{cjsop2},
\end{align}
where ${\mathcal D}_{p^{-a},q^{a}}$ is the $(p^{-1},q)$-derivative \eqref{cjder}.
By using the  $(p^{-1},q)$-number \eqref{cjnumber},  the relation \eqref{cjsop1} and \eqref{cjsop2} take the form:
\begin{align*}
{\mathcal T}^{a\Delta}_m&=[\Delta(m+1)]_{p^{-a},q^{a}}\,x^{m}\\
{\mathbb T}^{a\Delta}_m&=\theta[\Delta(m+1)]_{p^{-a},q^{a}}\,x^{m}.
\end{align*}
Then, the $(p^{-1},q)$-conformal super operators (\ref{cjsop1}) and \eqref{cjsop2} satisfy the product relation: \begin{align*}
\mathcal{T}^{a\Delta}_{m}.\mathbb{T}^{b\Delta}_n&={\big(p^{-a-b}-q^{a+b}\big)p^{-a(n(1-\Delta)+1)}\over \big(p^{-a}-q^{a}\big)\big(p^{-b}-q^{b}\big)p^{-b\Delta\,m}}\,\mathbb{T}^{(a+b)\Delta}_{m+n}+ \mathbb{F}^{\Delta(a,b)}_{p^{-1},q}(m,n)\nonumber\\& - {q^{a(\Delta(m+1)+n+1)}\over p^{-b\Delta\,m}\big(p^{-a}-q^{a}\big)}\,\mathbb{T}^{b\Delta}_{m+n}- {q^{b\Delta(n+1)}\over p^{-a(n(\Delta-1)-1)}\big(p^{-b}-q^{b}\big)}\, \mathbb{T}^{a\Delta}_{m+n},
\end{align*}
with \begin{eqnarray*}
	\mathbb{F}^{\Delta(a,b)}_{p^{-1},q}(m,n)=q^{(a+b)\Delta(m+n+1)}\,[-\Delta\,m]_{p^{-b},q^{b}}\,[n(1-\Delta)+1]_{p^{-a},q^{a}},
\end{eqnarray*}	
and the commutation relation
\begin{align}\label{cjscrto}
\Big[\mathcal{T}^{a\Delta}_{m}, \mathbb{T}^{b\Delta}_n\Big]&={\big(p^{-a-b}-q^{a+b}\big)\over \big(p^{-a}-q^{a}\big)\big(p^{-b}-q^{b}\big)}\bigg(\frac{p^{-a(n(1-\Delta)+1)}}{p^{-b\Delta\,m}}-\frac{p^{-b(m(1-\Delta)+1)}}{p^{-a\Delta\,n}}\bigg)\mathbb{T}^{(a+b)\Delta}_{m+n}\nonumber\\ &-\frac{q^{b\Delta(n+1)}}{p^{-a\Delta\,n}}\frac{\big(p^{-a(n+1)}-q^{b(m+1)}\big)}{\big(p^{-b}-q^{b}\big)}\mathbb{T}^{a\Delta}_{m+n}+\mathbb{H}^{\Delta(a,b)}_{p^{-1},q}(m,n) \nonumber\\ &-\frac{q^{a\Delta(m+1)}}{p^{-b\Delta\,m}}\frac{\big(p^{-b(m+1)}-q^{a(n+1)}\big)}{\big(p^{-a}-q^{a}\big)}\mathbb{T}^{b\Delta}_{m+n},
\end{align}
with 
\begin{align*}
\mathbb{H}^{\Delta(a,b)}_{p^{-1},q}(m,n)&=q^{(a+b)\Delta(m+n+1)}\bigg([-\Delta\,m]_{p^{-b},q^{b}}\,[n(1-\Delta)+1]_{p^{-a},q^{a}}\nonumber\\&-[-\Delta\,n]_{p^{-a},q^{a}}\,[m(1-\Delta)+1]_{p^{-b},q^{b}}\bigg).
\end{align*}
Putting $a=b=1,$ in the relation \eqref{cjscrto}, we obtain the commutation relation:
\begin{align*}
\Big[\mathcal{T}^{\Delta}_{m}, \mathbb{T}^{\Delta}_n\Big]&={p^{\Delta(n+m)}\big(p^{-n-1}-p^{-m-1}\big)\over \big(p^{-1}-q\big)}[2]_{p^{-1},q}\mathbb{T}^{2\Delta}_{m+n}+\mathbb{H}^{\Delta}_{p^{-1},q}(m,n)\nonumber\\
&-\bigg(\frac{q^{\Delta(n+1)}\big(p^{-n-1}-q^{m+1}\big)}{p^{-\Delta\,n}\big(p^{-1}-q\big)}+ \frac{q^{\Delta(m+1)}\big(p^{-m-1}-q^{n+1}\big)}{p^{-\Delta\,m}\big(p^{-1}-q\big)}\bigg)\mathbb{T}^{\Delta}_{m+n},
\end{align*}
where
\begin{align*}
\mathbb{H}^{\Delta(1,1)}_{p^{-1},q}(m,n)&=q^{2\Delta(m+n+1)}\bigg([-\Delta\,m]_{p^{-1},q}\,[n(1-\Delta)+1]_{p^{-1},q}\nonumber\\&-[-\Delta\,n]_{p^{-1},q}\,[m(1-\Delta)+1]_{p^{-1},q}\bigg).
\end{align*}
Furthermore, the super $n$-bracket is defined by:
\begin{align*}
\big[\mathcal{T}^{a\Delta}_{m_1},\mathcal{T}^{a\Delta}_{m_2},\cdots, \mathbb{T}^{a\Delta}_{m_n}\big]&:=\sum_{j=0}^{n-1}(-1)^{n-1+j}\epsilon^{i_1\ldots i_{n-1}}_{12\cdots n-1}\mathcal{T}^{a\Delta}_{m_{i_1}}\cdots \mathcal{T}^{a\Delta}_{m_{i_j}}\nonumber\\&\times
\mathbb{T}^{a\Delta}_{m_{n}}\mathcal{T}^{a\Delta}_{m_{i_{j+1}}}\cdots \mathcal{T}^{a\Delta}_{m_{i_{n-1}}}.
\end{align*}
Putting $a=b$ in the relation (\ref{cjscrto}),  we have:
\begin{align*}
\Big[\mathcal{T}^{a\Delta}_{m}, \mathbb{T}^{a\Delta}_n\Big]&={p^{a\Delta(n+m)}\big(p^{-a(n+1)}-p^{-a(m+1)}\big)\over p^{-a}-q^{a}}[2]_{p^{-a},q^a}\mathbb{T}^{2a\Delta}_{m+n}\nonumber\\ &-\frac{1}{\big(p^{-1}-q\big)}\bigg(\frac{q^{a\Delta(n+1)}}{p^{-a\Delta\,n}}\big(p^{-a(n+1)}-q^{a(m+1)}\big)\nonumber\\&- \frac{q^{a\Delta(m+1)}}{p^{-a\Delta\,m}}\big(p^{-a(m+1)}-q^{a(n+1)}\big)\bigg)\mathbb{T}^{a\Delta}_{m+n}+\mathbb{H}^{\Delta(a)}_{p^{-1},q}(m,n),
\end{align*}
where \begin{align*}
\mathbb{H}^{\Delta(a)}_{p^{-1},q}(m,n)&=q^{2a\Delta(m+n+1)}\bigg([-\Delta\,m]_{p^{-a},q^{a}}[n(1-\Delta)+1]_{p^{-a},q^{a}}\nonumber\\&-[-\Delta\,n]_{p^{-a},q^{a}}[m(1-\Delta)+1]_{p^{-a},q^{a}}\bigg)
\end{align*}
The $(p^{-1},q)$-conformal super Witt $n$-algebra is given  by :
\begin{align*}
\Big[{\mathcal T}^{a\Delta}_{m_1},\cdots,{\mathbb T}^{a\Delta}_{m_n}
\Big]&={(-1)^{n+1}\over \big(p^{-a}-q^{a}\big)^{n-1}}\Big( \mathbb {V}^n_{a\Delta}[n]_{p^{-a},q^{a}}{\mathbb T}^{n\,a\Delta}_{\bar{m}}\nonumber\\ &- [n-1]_{p^{-a},q^{a}}\big(\mathbb{D}^n_{a\Delta}+ \mathbb{N}^n_{a\Delta}\big){\mathbb T}^{a(n-1)\Delta}_{\bar{m}}\Big)\nonumber\\&+\mathbb{H}^{\Delta}_{p^{-1},q}(m_1,\ldots,m_n),
\end{align*}
where 
\begin{align*}
\mathbb{V}^n_{a\Delta}&= p^{-a(n-1)(1-\Delta )\bar{m}}\Big(\big(p^{-a}-q^{a}\big)^{n\choose 2}\prod_{1\leq j < k \leq n}\Big([m_j+1]_{p^{-a},q^{a}}\nonumber\\&-[m_k+1]_{p^{-a},q^{a}}\Big)+\prod_{1\leq j < k \leq n}\Big(q^{a(m_j+1)}-q^{a(m_k+1)}\Big)\Big),
\end{align*}
\begin{align*}
\mathbb{D}^n_{a\Delta}
&=\Big(\big(p^{-a}-q^{a}\big)^{n\choose 2}\prod_{1\leq j < k \leq n}\frac{q^{a\Delta(m_k+1)}}{p^{-a\Delta\,m_k}}\Big([m_k+1]_{p^{-a},q^{a}}-[m_j+1]_{p^{-a},q^{a}}\\&+q^{a(m_k+1)}-p^{-a(m_j+1)}\Big),
\end{align*}
\begin{align*}
\mathbb{N}^n_{a\Delta}&=(-1)^{n+1}
\Big(\big(p^{-a}-q^{a}\big)^{n\choose 2}\prod_{1\leq j < k \leq n}\frac{q^{a\Delta(m_j+1)}}{p^{-a\Delta\,m_j}}\Big([m_j+1]_{p^{-a},q^{a}}\nonumber\\&-[m_k+1]_{p^{-a},q^{a}}+q^{a(m_j+1)}-p^{-a(m_k+1)}\Big),
\end{align*}
and
\begin{align*}
\mathbb{H}^{\Delta(a)}_{p^{-1},q}(m_1,\ldots,m_n)&=\prod_{1\leq j < k \leq n}\bigg([-\Delta\,m_j]_{p^{-a},q^{a}}[m_k(1-\Delta)+1]_{p^{-a},q^{a}}\nonumber\\&-[-\Delta\,m_k]_{p^{-a},q^{a}}[m_j(1-\Delta)+1]_{p^{-a},q^{a}}\bigg)q^{an\Delta(\bar{m}+1)}.
\end{align*}
\subsubsection{Conformal super Witt $n$-algebra induced by the Hounkonnou-Ngompe generalization of $q$-Quesne algebra  }
The conformal super Witt $n$-algebra from the Hounkonnou-Ngompe generalization of $q$-Quesne algebra \cite{HNN} is derived as: We consider 
for $(\Delta \neq 0,1),$ the generalized $q$-Quesne conformal super operators defined by:
\begin{align}
\,{\mathcal T}^{a\Delta}_m&:=x^{(1-\Delta)(m+1)}\bar{\Delta}\,x^{\Delta(m+1)}\label{hnsop1}\\
\,{\mathbb T}^{a\Delta}_m&:=\theta\,x^{(1-\Delta)(m+1)}\bar{\Delta}\,x^{\Delta(m+1)}\label{hnsop2},
\end{align}
where ${\mathcal D}^Q_{p^{a},q^{a}}$ is the generalized $q$-Quesne derivative \eqref{hnder}.
From the  generalized $q$-Quesne number \eqref{hnnumber},  the relations \eqref{hnsop1} and \eqref{hnsop2} take the form:
\begin{align*}
{\mathcal T}^{a\Delta}_m&=[\Delta(m+1)]^Q_{p^{a},q^{a}}\,x^{m}\\
{\mathbb T}^{a\Delta}_m&=\theta[\Delta(m+1)]^Q_{p^{a},q^{a}}\,x^{m}.
\end{align*}
Then, the generalized $q$-Quesne conformal super operators (\ref{hnsop1}) and \eqref{hnsop2} satisfy the product relation: \begin{align*}
\mathcal{T}^{a\Delta}_{m}.\mathbb{T}^{b\Delta}_n&={\big(p^{a+b}-q^{-a-b}\big)p^{a(n(1-\Delta)+1)}\over \big(-p^{-a}+q^{a}\big)\big(-p^{-b}+q^{b}\big)p^{b\Delta\,m}}\,\mathbb{T}^{(a+b)\Delta}_{m+n}\\&- {q^{-b\Delta(n+1)}\over p^{a(n(\Delta-1)-1)}\big(-p^{-b}+q^{b}\big)}\, \mathbb{T}^{a\Delta}_{m+n}\nonumber\\& - {q^{-a(\Delta(m+1)+n+1)}\over p^{b\Delta\,m}\big(-p^{-a}+q^{a}\big)}\,\mathbb{T}^{b\Delta}_{m+n}+ \mathbb{F}^{\Delta(a,b)}_{p,q}(m,n),
\end{align*}
with \begin{eqnarray*}
	\mathbb{F}^{\Delta(a,b)}_{p,q}(m,n)=q^{-(a+b)\Delta(m+n+1)}\,[-\Delta\,m]^Q_{p^{b},q^{b}}\,[n(1-\Delta)+1]^Q_{p^{a},q^{a}},
\end{eqnarray*}	
and the commutation relation
\begin{align}\label{hnscrto}
\Big[\mathcal{T}^{a\Delta}_{m}, \mathbb{T}^{b\Delta}_n\Big]&={\big(p^{a+b}-q^{-a-b}\big)\over \big(-p^{-a}+q^{a}\big)\big(-p^{-b}+q^{b}\big)}\bigg(\frac{p^{a(n(1-\Delta)+1)}}{p^{b\Delta\,m}}-\frac{p^{b(m(1-\Delta)+1)}}{p^{a\Delta\,n}}\bigg)\mathbb{T}^{(a+b)\Delta}_{m+n}\nonumber\\ &-\frac{q^{-b\Delta(n+1)}}{p^{a\Delta\,n}}\frac{\big(p^{a(n+1)}-q^{-b(m+1)}\big)}{\big(-p^{-b}+q^{b}\big)}\mathbb{T}^{a\Delta}_{m+n}+\mathbb{H}^{\Delta(a,b)}_{p,q}(m,n) \nonumber\\ &-\frac{q^{-a\Delta(m+1)}}{p^{b\Delta\,m}}\frac{\big(p^{b(m+1)}-q^{-a(n+1)}\big)}{\big(-p^{-a}+q^{a}\big)}\mathbb{T}^{b\Delta}_{m+n},
\end{align}
with 
\begin{align*}
\mathbb{H}^{\Delta(a,b)}_{p,q}(m,n)&=q^{-(a+b)\Delta(m+n+1)}\bigg([-\Delta\,m]^Q_{p^{b},q^{b}}\,[n(1-\Delta)+1]^Q_{p^{a},q^{a}}\nonumber\\&-[-\Delta\,n]^Q_{p^{a},q^{a}}\,[m(1-\Delta)+1]^Q_{p^{b},q^{b}}\bigg).
\end{align*}
Putting $a=b=1,$ in the relation \eqref{hnscrto}, we obtain the commutation relation:
\begin{align*}
\Big[\mathcal{T}^{\Delta}_{m}, \mathbb{T}^{\Delta}_n\Big]&={p^{-\Delta(n+m)}\big(p^{n+1}-p^{m+1}\big)\over \big(-p^{-1}+q\big)}[2]^Q_{p,q}\mathbb{T}^{2\Delta}_{m+n}+\mathbb{H}^{\Delta}_{p,q}(m,n)\nonumber\\
&-\bigg(\frac{q^{-\Delta(n+1)}\big(p^{n+1}-q^{-m-1}\big)}{p^{\Delta\,n}\big(-p^{-1}+q\big)}+ \frac{q^{-\Delta(m+1)}\big(p^{m+1}-q^{-n-1}\big)}{p^{\Delta\,m}\big(-p^{-1}+q\big)}\bigg)\mathbb{T}^{\Delta}_{m+n},
\end{align*}
where
\begin{align*}
\mathbb{H}^{\Delta(1,1)}_{p,q}(m,n)&=q^{-2\Delta(m+n+1)}\bigg([-\Delta\,m]^Q_{p,q}\,[n(1-\Delta)+1]^Q_{p,q}\nonumber\\&-[-\Delta\,n]^Q_{p,q}\,[m(1-\Delta)+1]^Q_{p,q}\bigg).
\end{align*}
Furthermore, the super $n$-bracket is defined by:
\begin{align*}
\big[\mathcal{T}^{a\Delta}_{m_1},\mathcal{T}^{a\Delta}_{m_2},\cdots, \mathbb{T}^{a\Delta}_{m_n}\big]&:=\sum_{j=0}^{n-1}(-1)^{n-1+j}\epsilon^{i_1\ldots i_{n-1}}_{12\cdots n-1}\mathcal{T}^{a\Delta}_{m_{i_1}}\cdots \mathcal{T}^{a\Delta}_{m_{i_j}}\nonumber\\&\times
\mathbb{T}^{a\Delta}_{m_{n}}\mathcal{T}^{a\Delta}_{m_{i_{j+1}}}\cdots \mathcal{T}^{a\Delta}_{m_{i_{n-1}}}.
\end{align*}
Putting $a=b$ in the relation (\ref{hnscrto}),  we obtain:
\begin{align*}
\Big[\mathcal{T}^{a\Delta}_{m}, \mathbb{T}^{a\Delta}_n\Big]&={p^{-a\Delta(n+m)}\big(p^{a(n+1)}-p^{-a(m+1)}\big)\over -p^{-a}+q^{a}}[2]^Q_{p^a,q^a}\mathbb{T}^{2a\Delta}_{m+n}\nonumber\\ &-\frac{1}{\big(-p^{-1}+q\big)}\bigg(\frac{q^{-a\Delta(n+1)}}{p^{a\Delta\,n}}\big(p^{a(n+1)}-q^{-a(m+1)}\big)\nonumber\\&- \frac{q^{-a\Delta(m+1)}}{p^{a\Delta\,m}}\big(p^{a(m+1)}-q^{-a(n+1)}\big)\bigg)\mathbb{T}^{a\Delta}_{m+n}+\mathbb{H}^{\Delta(a)}_{p,q}(m,n),
\end{align*}
where \begin{align*}
\mathbb{H}^{\Delta(a)}_{p,q}(m,n)&=q^{-2a\Delta(m+n+1)}\bigg([-\Delta\,m]^Q_{p^{a},q^{a}}[n(1-\Delta)+1]^Q_{p^{a},q^{a}}\nonumber\\&-[-\Delta\,n]^Q_{p^{a},q^{a}}[m(1-\Delta)+1]^Q_{p^{a},q^{a}}\bigg)
\end{align*}
The generalized $q$-Quesne conformal super Witt $n$-algebra is given  by :
\begin{align*}
\Big[{\mathcal T}^{a\Delta}_{m_1},\cdots,{\mathbb T}^{a\Delta}_{m_n}
\Big]&={(-1)^{n+1}\over \big(-p^{-a}+q^{a}\big)^{n-1}}\Big( \mathbb {V}^n_{a\Delta}[n]^Q_{p^{a},q^{a}}{\mathbb T}^{n\,a\Delta}_{\bar{m}}\nonumber\\ &- [n-1]^Q_{p^{a},q^{a}}\big(\mathbb{D}^n_{a\Delta}+ \mathbb{N}^n_{a\Delta}\big){\mathbb T}^{a(n-1)\Delta}_{\bar{m}}\Big)\nonumber\\&+\mathbb{H}^{\Delta}_{p,q}(m_1,\ldots,m_n),
\end{align*}
where 
\begin{align*}
\mathbb{V}^n_{a\Delta}&= p^{a(n-1)(1-\Delta )\bar{m}}\Big(\big(-p^{-a}+q^{a}\big)^{n\choose 2}\prod_{1\leq j < k \leq n}\Big([m_j+1]^Q_{p^{a},q^{a}}\nonumber\\&-[m_k+1]^Q_{p^{a},q^{a}}\Big)+\prod_{1\leq j < k \leq n}\Big(q^{-a(m_j+1)}-q^{-a(m_k+1)}\Big)\Big),
\end{align*}
\begin{align*}
\mathbb{D}^n_{a\Delta}
&=\Big(\big(-p^{-a}+q^{a}\big)^{n\choose 2}\prod_{1\leq j < k \leq n}\frac{q^{-a\Delta(m_k+1)}}{p^{a\Delta\,m_k}}\Big([m_k+1]^Q_{p^{a},q^{a}}-[m_j+1]^Q_{p^{a},q^{a}}\\&+q^{-a(m_k+1)}-p^{a(m_j+1)}\Big),
\end{align*}
\begin{align*}
\mathbb{N}^n_{a\Delta}&=(-1)^{n+1}
\Big(\big(-p^{-a}+q^{a}\big)^{n\choose 2}\prod_{1\leq j < k \leq n}\frac{q^{-a\Delta(m_j+1)}}{p^{a\Delta\,m_j}}\Big([m_j+1]^Q_{p^{a},q^{a}}\nonumber\\&-[m_k+1]^Q_{p^{a},q^{a}}+q^{-a(m_j+1)}-p^{a(m_k+1)}\Big),
\end{align*}
and
\begin{align*}
\mathbb{H}^{\Delta(a)}_{p,q}(m_1,\ldots,m_n)&=q^{-an\Delta(\bar{m}+1)}\prod_{1\leq j < k \leq n}\bigg([-\Delta\,m_j]^Q_{p^{a},q^{a}}[m_k(1-\Delta)+1]^Q_{p^{a},q^{a}}\nonumber\\&-[-\Delta\,m_k]^Q_{p^{a},q^{a}}[m_j(1-\Delta)+1]^Q_{p^{a},q^{a}}\bigg).
\end{align*}
\subsection{A toy model for $\mathcal{R}(p,q)$-super Virasoro constraints with conformal dimension $(\Delta\neq 0,1)$} We study a toy model for the $\mathcal{R}(p,q)$-super Virasoro constraints with conformal dimension $(\Delta\neq 0,1).$ We use step by step  the same procedure investigated in the section\eqref{sub4.2}.
Then,  from the constraints on the partition function,
\begin{align}
\,{\mathcal T}^{a\Delta}_m\,Z^{(toy)}(t)&=0,\quad m\geq 0,\\
\,\mathbb{T}^{a\Delta}_m\,Z^{(toy)}(t)&=0,\quad m\geq 0,
\end{align}
we obtain the following lemma:
\begin{lemma}
	The $\mathcal{R}(p,q)$-super conformal  differential operators given by the relation \eqref{Rpqop} and 
	\begin{align*}
	{\mathbb T}^{a\Delta}_m&=\theta\bigg([\Delta(m+1)+\gamma]_{{\mathcal R}(p^{a},q^{a})}\,m!\, \tau^{a\,m}_1\,{\partial\over \partial t_m}\nonumber\\ &+ h(p^{a},q^{a}){\tau^{a(\Delta(m+1)+\gamma)}_2\over \tau^{a}_1 - \tau^{a}_2}\sum_{k=1}^{\infty}{(m+k)!\over k!}\\&\times B_k(t^{a}_1,\cdots,t^{a}_k){\partial\over \partial t_{m+k}}\bigg).
	\end{align*}
\end{lemma}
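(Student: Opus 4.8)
The plan is to mirror, in the $\mathbb{Z}_2$-graded setting, the derivation of the bosonic operator \eqref{Rpqop} carried out in Section~\ref{sub4.2}; since that operator is already established, only the fermionic companion needs to be produced. The starting observation is that, by the definitions \eqref{rpqsop1}--\eqref{rpqsop2}, the fermionic generator factorises as $\mathbb{T}^{a\Delta}_m=\theta\,{\mathcal T}^{a\Delta}_m$, where $\theta$ is the Grassmann variable with $\theta^2=0$ and ${\mathcal T}^{a\Delta}_m=x^{(1-\Delta)(m+1)}\bar{\Delta}\,x^{\Delta(m+1)}$ reduces to \eqref{rpqsopa}. Because a single factor of $\theta$ enters only as an overall multiplicative constant that commutes through both the $x$-integration and the $\bar{\Delta}$-action, I expect the whole computation to collapse to the bosonic one with $\theta$ carried along as a spectator.

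First I would impose the fermionic constraint $\mathbb{T}^{a\Delta}_m Z^{(toy)}(t)=0$, which follows from the assumed vanishing of $\int_{\mathbb R} x^{(m+1)(1-\Delta)}{\mathcal D}_{{\mathcal R}(p^a,q^a)}f(x)\,dx$ applied to $f(x)=x^{\Delta(m+1)+\gamma}\exp(\sum_s t_s x^s/s!)$, using that $\bar{\Delta}$ acts on the bosonic factor as $\tfrac{1}{x}[x\partial_x]_{{\mathcal R}(p^a,q^a)}={\mathcal D}_{{\mathcal R}(p^a,q^a)}$ by \eqref{eq7}. I would then expand $\bar{\Delta} f$ via the $\mathcal{R}(p,q)$-Leibniz rule exactly as in the bosonic lemma: one term arises from $\bar{\Delta}$ hitting the monomial $x^{\Delta(m+1)+\gamma}$, producing the factor $[\Delta(m+1)+\gamma]_{{\mathcal R}(p^a,q^a)}$, and the other from $\bar{\Delta}$ hitting the exponential, producing the $\sigma$-shifted argument $t^a_k=(\tau_1^{ak}-\tau_2^{ak})t_k$ together with the prefactor $h(p^a,q^a)\tau_2^{a(\Delta(m+1)+\gamma)}/(\tau_1^a-\tau_2^a)$.

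Next I would insert the Bell-polynomial expansion $\exp(\sum_s t_s x^s/s!)=\sum_n B_n(t_1,\dots,t_n)x^n/n!$ to rewrite the integrand as a sum of monomials $x^{m+\gamma+k}\exp(\cdots)$, and translate each power of $x$ into a derivative in the time variables through $x^{m+k}\exp(\cdots)\leftrightarrow \tfrac{(m+k)!}{k!}\,\partial/\partial t_{m+k}$, read off from the generating function $Z^{(toy)}$. Collecting the two families of terms reproduces the bracketed expression of \eqref{Rpqop}, and restoring the overall Grassmann factor $\theta$ yields precisely the stated $\mathbb{T}^{a\Delta}_m$.

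The only step requiring genuine care is the factorial bookkeeping in the passage $x^{m+\gamma+k}\exp(\cdots)\mapsto$ time-derivatives, namely verifying that the coefficient $(m+k)!/k!$ is paired with the correct index $t_{m+k}$; everything else is inherited verbatim from Section~\ref{sub4.2}, since the lone $\theta$ never couples to the $x$-dependence and $\theta^2=0$ is never invoked because only a single $\theta$ occurs. Thus the graded case introduces no new computation beyond the scalar Grassmann prefactor $\theta$.
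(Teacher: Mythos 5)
Your proposal is correct and takes essentially the same approach as the paper: the paper's own justification is precisely to repeat, step by step, the bosonic derivation of \eqref{Rpqop} from Section \ref{sub4.2}, with the Grassmann factor $\theta$ carried through as an inert overall prefactor, since $\mathbb{T}^{a\Delta}_m=\theta\,\mathcal{T}^{a\Delta}_m$. Your factorial bookkeeping (the coefficient $(m+k)!/k!$ paired with $\partial/\partial t_{m+k}$, and $m!\,\tau_1^{a m}$ with $\partial/\partial t_m$) agrees with the paper's computation.
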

Considering the relation\eqref{eq7},\eqref{rpqsop1} and \eqref{rpqsop2}, the $\mathcal{R}(p,q)$-super Witt algebra is generated by the operators \eqref{rpqopb} and
\begin{eqnarray}\label{rpqopc}
{\mathbb T}^{a\Delta}_m=\theta\bigg([x\partial_x+\Delta(m+1)-m]_{{\mathcal R}(p^{a},q^{a})}\bigg)\,x^{m}.
\end{eqnarray} 
Then, the following lemma holds.
\begin{lemma}
	The $\mathcal{R}(p,q)$-super conformal  differential operators given by the relation \eqref{rpqop2} and 
	\begin{align*}
	{\mathbb T}^{a\Delta}_m&=\theta\bigg([x\partial_x+\Delta(m+1)-m]_{{\mathcal R}(p^{a},q^{a})}\,m!\, \tau^{a\,m}_1\,{\partial\over \partial t_m}\nonumber\\ &+ h(p^{a},q^{a}){\tau^{a(\Delta(m+1)+\gamma)}_2\over \tau^{a}_1 - \tau^{a}_2}\sum_{k=1}^{\infty}{(m+1+\Delta(k-1))!\over k!}\\&\times B_k(t^{a}_1,\cdots,t^{a}_k){\partial\over \partial t_{m+1+\Delta(k-1)}}\bigg).
	\end{align*}
\end{lemma}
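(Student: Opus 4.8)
The plan is to reduce everything to the bosonic case already established in Lemma~\ref{lemdiffop}. The key structural fact is that, in the toy-model realization, the fermionic super operator is merely the bosonic conformal operator dressed with the Grassmann variable: by the definition \eqref{rpqopc} one has $\mathbb{T}^{a\Delta}_m=\theta\,\mathcal{T}^{a\Delta}_m$, where $\mathcal{T}^{a\Delta}_m=[x\partial_x+\Delta(m+1)-m]_{\mathcal{R}(p^a,q^a)}\,x^m$ is exactly the operator \eqref{rpqopb} treated in Lemma~\ref{lemdiffop}. Since the generating function $Z^{(toy)}(t)$ depends only on the bosonic times $t_s$ and carries no Grassmann dependence, the factor $\theta$ simply rides along as an overall multiplier.

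First I would rewrite the fermionic constraint $\mathbb{T}^{a\Delta}_m\,Z^{(toy)}(t)=0$, $m\ge 0$, as $\theta\,\mathcal{T}^{a\Delta}_m\,Z^{(toy)}(t)=0$, noting that $\theta^2=0$ forbids any fermionic self-contraction, so no anomalous term is produced. Because $\theta$ is a constant with respect to the $t_s$-derivatives, the entire differential-operator computation performed in the proof of Lemma~\ref{lemdiffop} --- substituting $\mathcal{D}_{\mathcal{R}(p^a,q^a)}=\tfrac{1}{x}[x\partial_x]_{\mathcal{R}(p^a,q^a)}$, expanding $\exp\!\big(\sum_s t_s x^s/s!\big)$ in the Bell polynomials $B_k$, and applying the $\mathcal{R}(p,q)$-Leibniz rule --- transfers verbatim to the fermionic sector.

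Then I would read off the result of \eqref{rpqop2} and multiply the whole expression by $\theta$: the derivative term weighted by $[\Delta(m+1)+\gamma]_{\mathcal{R}(p^a,q^a)}\,m!\,\tau_1^{am}\,\partial/\partial t_m$ together with the Bell-polynomial tower shifting the index to $t_{m+1+\Delta(k-1)}$ reproduces precisely the claimed formula for $\mathbb{T}^{a\Delta}_m$. The one point requiring care, which I would verify explicitly, is that the Grassmann dressing introduces no extra contribution: in the definition \eqref{rpqsop2} the $\sigma$-derivation $\bar{\Delta}$ acts on the purely bosonic factor $x^{\Delta(m+1)}$ with $\theta$ held outside, so the additional $\tau_2^n$ term appearing in the fermionic rule \eqref{deltaxyrpq} never enters, and $\mathbb{T}^{a\Delta}_m$ is genuinely $\theta$ times the bosonic operator. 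The hard part will therefore not be any new calculation but confirming this consistency of the Grassmann sector; once it is settled, the statement follows immediately from Lemma~\ref{lemdiffop}.
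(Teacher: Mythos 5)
Your proposal is correct and takes essentially the same route as the paper: the paper likewise reduces the fermionic case to the bosonic one by observing via \eqref{rpqopc} that $\mathbb{T}^{a\Delta}_m=\theta\,\mathcal{T}^{a\Delta}_m$ with $\mathcal{T}^{a\Delta}_m$ the operator \eqref{rpqopb}, and then repeats the computation of Lemma~\ref{lemdiffop} so that the Grassmann factor simply multiplies the bosonic differential operator \eqref{rpqop2}. Your explicit check that $\bar{\Delta}$ produces no extra $\tau_2$-contribution (because it acts on a purely bosonic argument, with $\theta$ held outside) is precisely the consistency point the paper leaves implicit.
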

\begin{remark} For some particular cases of super conformal dimension $\Delta,$ we get:
	\begin{enumerate}
		\item[(a)]	For $\Delta=0,$ we obtain the operators \eqref{Rpqop0} and \begin{align*}
		{\mathbb T}_m&=\theta\bigg([x\partial_x-m]_{{\mathcal R}(p^{a},q^{a})}\,m!\, \tau^{-a\,m}_1\,{\partial\over \partial t_m}\nonumber\\ &+ h(p^{a},q^{a}){\tau^{a(\gamma)}_2\over \tau^{a}_1 - \tau^{a}_2}\sum_{k=1}^{\infty}{(m+1)!\over k!} B_k(t^{a}_1,\cdots,t^{a}_k){\partial\over \partial t_{m+1}}\bigg).
		\end{align*}
		\item[(b)]  $\Delta=1,$ we obtain, the  $\mathcal{R}(p,q)$-super conformal differential operators \eqref{Rpqop1}
		and
		\begin{align*}
		{\mathbb T}^{a}_m&=\theta\bigg([x\partial_x+1]_{{\mathcal R}(p^{a},q^{a})}\,m!\, \tau^{-a\,m}_1\,{\partial\over \partial t_m}\\&+ h(p^{a},q^{a}){\tau^{a(m+1+\gamma)}_2\over \tau^{a}_1 - \tau^{a}_2}\sum_{k=1}^{\infty}{(m+k)!\over k!} B_k(t^{a}_1,\cdots,t^{a}_k){\partial\over \partial t_{m+k}}\bigg).
		\end{align*}
	\end{enumerate}
\end{remark}
\begin{remark}
	The super conformal differential operators corresponding to the quantum algebras known in the literature are derived as:
	\begin{enumerate}
		\item[(a)] The $q$-super conformal differential operator is provided by the relation \eqref{qop} and 
		\begin{align*}
		{\mathbb T}^{a\Delta}_m&=\theta\bigg([\Delta(m+1)+\gamma]_{q^{a}}\,m!\, q^{-a\,m}\,{\partial\over \partial t_m}\nonumber\\ &+ {q^{-a(\Delta(m+1)+\gamma)}\over q^{a} - q^{-a}}\sum_{k=1}^{\infty}{(m+k)!\over k!} B_k(t^{a}_1,\cdots,t^{a}_k){\partial\over \partial t_{m+k}}\bigg).
		\end{align*}
		\item[(b)]The $(p,q)$-super conformal differential operator is given by the relation \eqref{pqop} and  \begin{align*}
		{\mathbb T}^{a\Delta}_m&=\theta\bigg([\Delta(m+1)+\gamma]_{p^{a},q^{a}}\,m!\, p^{-a\,m}\,{\partial\over \partial t_m}\nonumber\\ &+ {q^{a(\Delta(m+1)+\gamma)}\over p^{a} - q^{a}}\sum_{k=1}^{\infty}{(m+k)!\over k!} B_k(t^{a}_1,\cdots,t^{a}_k){\partial\over \partial t_{m+k}}\bigg).
		\end{align*}
		\item[(b)]The $(p^{-1},q)$-super conformal differential operator is given by the relation \eqref{cjop} and  \begin{align*}
		{\mathbb T}^{a\Delta}_m&=\theta\bigg([\Delta(m+1)+\gamma]_{p^{-a},q^{a}}\,m!\, p^{a\,m}\,{\partial\over \partial t_m}\nonumber\\ &+ {q^{a(\Delta(m+1)+\gamma)}\over p^{-a} - q^{a}}\sum_{k=1}^{\infty}{(m+k)!\over k!} B_k(t^{a}_1,\cdots,t^{a}_k){\partial\over \partial t_{m+k}}\bigg).
		\end{align*}
		\item[(d)]The generalized $q$-Quesne super conformal differential operator is determined by the relation \eqref{qQop} and \begin{align*}
		{\mathbb T}^{a\Delta}_m&=\theta\bigg([\Delta(m+1)+\gamma]^Q_{p^{a},q^{a}}\,m!\, p^{-a\,m}\,{\partial\over \partial t_m}\nonumber\\ &+ {q^{-a(\Delta(m+1)+\gamma)}\over p^{a} - q^{-a}}\sum_{k=1}^{\infty}{(m+k)!\over k!} B_k(t^{a}_1,\cdots,t^{a}_k){\partial\over \partial t_{m+k}}\bigg).
		\end{align*}
	\end{enumerate}
\end{remark}
\section{Generalized  matrix model with conformal dimension $(\Delta\neq 0,1)$}
In this section, we characterize the  matrix model with conformal dimension $(\Delta\neq 0,1)$ from the $\mathcal{R}(p,q)$-deformed quantum algebra. Moreover, we derive particular cases induced by quantum algebra known in the literature.

We consider now the following relation:
\begin{eqnarray}
\left \{
\begin{array}{l}
F(z)=z, \\
\\
G(P,Q)=\frac{q^{Q}-p^{P}}{p^{Q}\mathcal{R}(p^P,q^Q)},\quad \mbox{if}\quad l>0,
\end{array}
\right .
\end{eqnarray}
where $l$ is given in the relation \eqref{r10}.
\begin{definition}
	The  $\mathcal{R}(p,q)$-Pochhammer symbol is given by:
	\begin{equation}
	\big(z;\mathcal{R}(p,q)\big)_{0}=1,\quad \big(z;\mathcal{R}(p,q)\big)_{n}:=\prod\limits_{k=0}^{n-1}\left( 1-F\big(\frac{q^{k}}{ p^k}\,z\big)G(P,Q)\right),\: n\in \mathbb{N},
	\end{equation}
	and the $\mathcal{R}(p,q)$-theta function $\theta(z; \mathcal{R}(p,q))$ as follows:
	\begin{equation}\label{rpqtheta}
	\theta(z; \mathcal{R}(p,q))=\prod\limits_{k=0}^{\infty}\bigg( 1-F\big(\frac{q^{k}}{ p^k}z\big)G(P,Q)\bigg)\prod\limits_{k=0}^{\infty}\bigg( 1-F\big(\frac{q^{k+1}}{ p^{k+1}}z^{-1}\big)G(P,Q)\bigg),
	\end{equation}
\end{definition}
Note that the $\mathcal{R}(p,q)$-theta function \eqref{rpqtheta} satisfies the following properties:
\begin{equation}\label{p1theta}
\theta\big(qz; \mathcal{R}(p,q)\big) = \theta \big((pz)^{-1}; \mathcal{R}(p,q)\big),
\end{equation}
and
\begin{equation}\label{p2theta}
\theta\big(q^{-1}z; \mathcal{R}(p,q)\big) =  \frac{p}{q} z^2~ \theta\big((pz)^{-1}; \mathcal{R}(p,q)\big).
\end{equation}
\begin{definition}
	The  generating function for the $\mathcal{R}(p,q)$-elliptic hermitian matrix model is defined as follows:
	\begin{eqnarray} 
	Z_N^{\rm ell}(\{ t\})= \oint \prod_{i=1}^N \frac{d z_i}{z_i} \prod_{i<j}\theta\bigg(\frac{z_i}{z_j};\mathcal{R}(p,q)\bigg)
	\theta\bigg(\frac{z_j}{z_i};\mathcal{R}(p,q)\bigg)e^{\sum\limits_{k=0}^{\infty} \frac{t_k}{k!}\sum\limits_{i=1}^N z_i^k},
	\label{wilson:loop}
	\end{eqnarray}
	where $\{t\}=\{t_k|k\in\mathbb{N}\}.$ The integration is over the counter around the origin.
\end{definition}
Note that, 
taking $\mathcal{R}(x,1)=\frac{x-1}{x},$ we obtain the well known $q$-Pochhammer symbol:
\begin{equation*}
\big(z;q\big)_{0}=1,\quad \big(z;q\big)_{n}:=\prod\limits_{k=0}^{n-1}\left( 1-q^{k}\,z\right),\: n\in \mathbb{N},
\end{equation*}
the $q$-theta function
\begin{equation*}
\theta(z, q)=\prod\limits_{k=0}^{\infty}\left( 1-q^{k}\,z\right)\prod\limits_{k=0}^{\infty}\left( 1-q^{k+1}\,z^{-1}\right) .
\end{equation*}
and the generating function for the elliptic hermitian matrix model \cite{NZ}:
\begin{equation*} 
Z_N^{\rm ell}(\{ t\})= \oint \prod_{i=1}^N \frac{d z_i}{z_i} \prod_{i<j}\theta(\frac{z_i}{z_j};q)
\theta(\frac{z_j}{z_i};q)e^{\sum\limits_{k=0}^{\infty} \frac{t_k}{k!}\sum\limits_{i=1}^N z_i^k}.
\end{equation*}

\begin{definition}
	The $\mathcal{R}(p,q)$-differential opertor with conformal dimension $(\Delta\neq 0,1)$ is defined by:
	\begin{eqnarray}
	T^{\Delta}_n := -\sum\limits_{l=1}^N z_l^{(1-\Delta)(n+1)}D_{\mathcal{R}(p,q)}\,z_l^{\Delta(n+1)},\label{rpqVir-matr-op}
	\end{eqnarray}
	which acts on the functions of $N$ variables  and $ D_{\mathcal{R}(p,q)}$ is $\mathcal{R}(p,q)$-derivative \eqref{rpqder} with respect to the $z_l$-variable. 
\end{definition}
Note that, the conformal operators \eqref{rpqVir-matr-op} satisfy the commutation relation \eqref{p1a}.
\begin{proposition}
	The $\mathcal{R}(p,q)$-conformal operator \eqref{rpqVir-matr-op} can be given by:
	\begin{align}
	T_n^{\Delta}&=\frac{K(p,q)}{p-q}\bigg[(\frac{q}{p})^{\Delta(n+1)-N} \sum\limits_{l=0}^\infty \frac{(l+n-2N)!}{l!} B_l(\tilde{t}_1,\ldots , \tilde{t}_l)\nonumber\\&\times
	{\mathcal D}_N \frac{\partial}{\partial t_{l+n-2N}}- p^{\Delta(n+1)} n!\frac{\partial}{\partial t_n}\bigg].
	\label{rpqcoperator}
	\end{align}
	where ${\it D}_{N}$ is a differential operator\eqref{diffcop}.
\end{proposition}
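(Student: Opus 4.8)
The plan is to apply $T^{\Delta}_n$ of \eqref{rpqVir-matr-op} directly to the integrand of the generating function $Z_N^{\rm ell}(\{t\})$ in \eqref{wilson:loop} and to trade every multiplication by a power of $z_l$ for a partial derivative in the source parameters $t_k$. First I would use the realization of the $\mathcal{R}(p,q)$-derivative through the scaling operators $P,Q$ as in \eqref{rpqder}, writing $D_{\mathcal{R}(p,q)}$ as the operator $\frac{p-q}{p^{P}-q^{Q}}\mathcal{R}(p^{P},q^{Q})$ composed with the finite difference $\frac{f(pz_l)-f(qz_l)}{z_l(p-q)}$. Applying the $\mathcal{R}(p,q)$-Leibniz rule to $z_l^{(1-\Delta)(n+1)}D_{\mathcal{R}(p,q)}\,z_l^{\Delta(n+1)}$ splits the result into two pieces: a \emph{diagonal} piece in which the difference operator hits the monomial prefactor $z_l^{\Delta(n+1)}$, giving $[\Delta(n+1)]_{\mathcal{R}(p,q)}\,z_l^{n}$, and a \emph{shifting} piece in which the scaling operators act on the exponential source and on the theta-function measure. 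The common prefactor $\frac{K(p,q)}{p-q}$ in \eqref{rpqcoperator} originates from the operatorial factor $\frac{p-q}{p^{P}-q^{Q}}\mathcal{R}(p^{P},q^{Q})$ evaluated on the relevant weights.

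The second step handles the source $e^{\sum_k \frac{t_k}{k!}\sum_i z_i^k}$. Since $\frac{\partial}{\partial t_k}$ brings down $\frac{1}{k!}\sum_i z_i^k$, multiplication by $\sum_i z_i^k$ under the contour integral is equivalent to $k!\,\partial/\partial t_k$; this is precisely what converts $T^{\Delta}_n$ into a differential operator in the $t$'s. When a scaling operator sends $z_l\mapsto p z_l$ (resp. $z_l\mapsto q z_l$) in the exponent, the argument becomes $\exp\!\big(\sum_s \frac{t_s}{s!}(pz_l)^s\big)$, which I would re-expand via the Bell-polynomial identity $\exp\!\big(\sum_s \frac{\tilde t_s}{s!}z^s\big)=\sum_{l\ge 0}B_l(\tilde t_1,\ldots,\tilde t_l)\frac{z^l}{l!}$ with rescaled times $\tilde t_s$. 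This reproduces the infinite sum $\sum_{l}\frac{(l+n-2N)!}{l!}B_l(\tilde t_1,\ldots,\tilde t_l)$ of \eqref{rpqcoperator}, in exact parallel with the toy-model computation of Section~\ref{sub4.2} and the operator \eqref{Rpqop}.

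The third step, which I expect to be the main obstacle, is tracking the action of the scaling operators on the theta-function measure $\prod_{i<j}\theta(z_i/z_j;\mathcal{R}(p,q))\theta(z_j/z_i;\mathcal{R}(p,q))$. For a fixed $l$ there are $2(N-1)$ theta factors depending on $z_l$, and each of them transforms under $z_l\mapsto p z_l$ or $z_l\mapsto q z_l$ through the quasi-periodicity relations \eqref{p1theta}--\eqref{p2theta}, picking up a definite power of $z_l$ together with an inversion. Collecting these powers across all factors, together with the $dz_l/z_l$ measure, is what simultaneously produces the overall weight $(q/p)^{\Delta(n+1)-N}$ and the index shift $n\mapsto l+n-2N$ in the $t$-derivative; the residual operatorial part of the transformed measure is packaged into the differential operator $\mathcal{D}_N$ of \eqref{diffcop}. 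Reconciling the $+l$ shift coming from the Bell re-expansion with the $-2N$ shift coming from the measure, so that both land consistently on $\partial/\partial t_{l+n-2N}$, is the delicate bookkeeping here.

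Finally I would assemble the two contributions: the diagonal monomial term yields $-\,p^{\Delta(n+1)}\,n!\,\partial/\partial t_n$, while the source-plus-measure term yields the Bell-polynomial sum weighted by $\mathcal{D}_N$ and the prefactor $(q/p)^{\Delta(n+1)-N}$, both carrying the factor $\frac{K(p,q)}{p-q}$ and the sign inherited from the definition \eqref{rpqVir-matr-op}. Matching these against \eqref{rpqcoperator} completes the identification, the consistency of the constraint $T^{\Delta}_n Z_N^{\rm ell}=0$ following from the vanishing of the total $\mathcal{R}(p,q)$-difference under the contour integral, exactly as in Section~\ref{sub4.2}.
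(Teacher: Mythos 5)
Your scaffolding agrees with the paper's: act with \eqref{rpqVir-matr-op} on the integrand of \eqref{wilson:loop}, use the quasi-periodicity relations \eqref{p1theta}--\eqref{p2theta} of the $\mathcal{R}(p,q)$-theta function, re-expand shifted exponentials in Bell polynomials, reproduce $\prod_j z_j^2$ through the Newton-identity determinant $\mathcal{D}_N$ of \eqref{diffcop}, and convert power sums $\sum_i z_i^k$ into $k!\,\partial/\partial t_k$. However, your central decomposition step is wrong, and it would prevent you from arriving at \eqref{rpqcoperator}. You split $z_l^{(1-\Delta)(n+1)}D_{\mathcal{R}(p,q)}z_l^{\Delta(n+1)}$ by the deformed Leibniz rule into a ``diagonal'' piece $[\Delta(n+1)]_{\mathcal{R}(p,q)}\,z_l^{n}$ plus a ``shifting'' piece. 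But in any twisted Leibniz rule the term where the difference operator hits the monomial is multiplied by the \emph{remaining factors evaluated at a scaled point}, $g(pz_l)$ or $g(qz_l)$, not at $z_l$; since the theta measure is not scale invariant, this piece is not diagonal at all, and the clean bracket coefficient $[\Delta(n+1)]_{\mathcal{R}(p,q)}$ never survives the computation. The paper instead evaluates the finite difference on the whole product, $\psi(pz_l)-\psi(qz_l)$ with $\psi=z_l^{\Delta(n+1)}\times(\text{measure})\times(\text{source})$, and uses \eqref{p1theta}--\eqref{p2theta} to rewrite the $q$-branch as $\prod_{j\neq l}\frac{p}{q}\frac{z_j^2}{z_l^2}$ times the $p$-branch (this is \eqref{useful-id}); the structure of \eqref{rpqcoperator} is a split between the two scaling branches, not between ``derivative hits the monomial'' and ``derivative hits the rest.''

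Your own conclusion exposes the inconsistency: in the first step you claim the diagonal piece is $[\Delta(n+1)]_{\mathcal{R}(p,q)}\,z_l^{n}$, yet in the final assembly you claim this same piece yields $-\,p^{\Delta(n+1)}\,n!\,\partial/\partial t_n$. These coefficients differ, since $[\Delta(n+1)]_{\mathcal{R}(p,q)}$ contains both powers $\tau_1^{\Delta(n+1)}$ and $\tau_2^{\Delta(n+1)}$, whereas the local term of \eqref{rpqcoperator} carries only $p^{\Delta(n+1)}$. The missing $q^{\Delta(n+1)}$ contribution is precisely the one that, in the paper's calculation, gets locked together with the measure factor $\prod_{j\neq l}\frac{p}{q}\frac{z_j^2}{z_l^2}$, producing after the Bell re-expansion (cf. \eqref{gen-terms-elvir}) the infinite sum with the weight $(q/p)^{\Delta(n+1)-N}$, the operator $\mathcal{D}_N$, and the index shift to $\partial/\partial t_{l+n-2N}$. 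Until your decomposition keeps track of which branch of the finite difference each term belongs to — and of the fact that the diagonal piece in a Leibniz split still carries shifted theta factors — the two terms of \eqref{rpqcoperator} cannot be matched.
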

\begin{proof}
	Putting these operators under the contour integral (\ref{wilson:loop}), we obtain naturally zero. 
	Now we have to evaluate how these differential operators act on the integrand. 
	From the relations \eqref{p1theta} and \eqref{p2theta} concerning the properties of the generalized $\theta$ function and by setting
	\begin{eqnarray}
	g(z)=\prod_{i<j}\theta\big(\frac{z_i}{z_j};\mathcal{R}(p,q)\big)
	\theta\big(\frac{z_j}{z_i};\mathcal{R}(p,q)\big)
	\end{eqnarray}
	and $$f(z_l)=z^{\Delta(n+1)}_l,$$
	we have:  
	\begin{align}\label{useful-id}
	A&:=\sum_{l=1}^N D_{\mathcal{R}(p,q)}\bigg(f(z_l)\prod_{i<j}\theta\big(\frac{z_i}{z_j};\mathcal{R}(p,q)\big)\theta\big(\frac{z_j}{z_i};\mathcal{R}(p,q)\big)\bigg)\nonumber\\&= 
	\sum_{l=1}^N\frac{K(P,Q)}{(p-q)z_l}\bigg(\frac{f(qz_l)}{f(pz_l)}\prod_{j\neq l}\frac{p}{q}\frac{z_j^2}{z_l^2}-1\bigg)
	f(pz_l)\nonumber\\&\times\prod_{i<j}\theta\big(\frac{z_i}{z_j};\mathcal{R}(p,q)\big)\theta\big(\frac{z_j}{z_i};\mathcal{R}(p,q)\big)\nonumber\\
	&= 
	\sum_{l=1}^N\frac{K(p,q)}{p-q}\bigg(\big(\frac{q}{p}\big)^{\Delta(n+1)}\prod_{j\neq l}\frac{p}{q}\frac{z_j^2}{z_l^2}-1\bigg)
	p^{\Delta(n+1)}z^{\Delta(n+1)-1}_l\nonumber\\&\times\prod_{i<j}\theta\big(\frac{z_i}{z_j};\mathcal{R}(p,q)\big)\theta\big(\frac{z_j}{z_i};\mathcal{R}(p,q)\big).
	\end{align}
	Thus,
	\begin{align}
	T^{\Delta}_ng(z)&=\sum_{l=1}^N\frac{K(p,q)}{p-q}\bigg(\big(\frac{q}{p}\big)^{\Delta(n+1)}\prod_{j\neq l}\frac{p}{q}\frac{z_j^2}{z_l^2}-1\bigg)
	p^{\Delta(n+1)}z^{n}_l\nonumber\\&\times\prod_{i<j}\theta\big(\frac{z_i}{z_j};\mathcal{R}(p,q)\big)\theta\big(\frac{z_j}{z_i};\mathcal{R}(p,q)\big).
	\end{align}
	The $n$th complete Bell polynomial $B_n$  satisfy the following relations:
	\begin{eqnarray}\label{Bell-pr3}
	B_l (\tilde{t}_1,\ldots ,\tilde{t}_l) = \sum\limits_{\nu=0}^l \tau_1^{l-\nu}\tau_2^{\nu} \binom{l}{\nu} B_{\nu} (t_1, \ldots, t_{\nu}) B_{n-\nu} (-t_1, \ldots , - t_{n-\nu}),
	\end{eqnarray}
	where $\tilde{t}_k = (\tau_1^k-\tau_29^k) t_k,$ 
	and 
	\begin{align}
	\exp\left(\sum\limits_{k=1}^{\infty}\frac{t_k}{k!}\tau^k z_i^k\right)&=
	\sum_{k=0}^{\infty}\frac{1}{k!}B_k\left(\tilde{t}_1,\dots,\tilde{t}_k\right)x^k\exp\left(\sum\limits_{l=1}^{\infty}\frac{t_l}{l!} z_i^l\right)
	.\label{Bells-useful}
	\end{align}
	From the formulas (\ref{useful-id}) and (\ref{Bells-useful}),  the $\mathcal{R}(p,q)$-conformal operator (\ref{rpqVir-matr-op})
	can be rewritten in the simpler form:
	\begin{align}
	T^{\Delta}_ng(z)&=\frac{K(p,q)}{p-q}\bigg[\prod_{j=1}^{N} z_j^2 \sum_{l=1}^{N}\sum_{k,\nu=0}^{\infty}\big(\frac{q}{p}\big)^{\Delta(n+1)-N}q^{k}\frac{1}{k!\nu!}B_k(t_1,\dots,t_k)\nonumber\\&\times
	B_{\nu}(-t_1,\dots,-t_{\nu})z_l^{k+\nu+n-2N} - p^{\Delta(n+1)}\,\sum_{l=1}^{N} z_l^n\bigg].\label{gen-terms-elvir}
	\end{align}
	Note that, the expectation value of these terms could be zero. 
	Let us now   generate these terms by using the  
	$t$-derivatives of the integrand of the relation (\ref{wilson:loop}). Thus, to do it, 
	we can rewrite $\prod\limits_{i=1}^N z_i$ in terms of sums  $\sum\limits_{i=1}^N z_i^k$ 
	by using the Newton's identities, 
	\begin{eqnarray}
	\prod_{i=1}^N z_i=\frac{1}{N!}\left|\begin{array}{cccccc}
	\nu_1 & 1 & 0 & \dots &  &  \\
	\nu_2  & \nu_1 & 2 & 0 & \dots & \\
	\dots   &  \dots  &\dots &  \dots  &\dots &\\
	\nu_{N-1} & \nu_{N-2} & \dots  &\dots& \nu_1 & N-1\\
	\nu_{N} & \nu_{N-1} & \dots  &\dots& \nu_2 & \nu_1 
	\end{array}\right|,
	\end{eqnarray}
	where $\nu_k\equiv \sum\limits_{i=1}^{N}z_{i}^k ,$   the terms $\sum\limits_{i=1}^{N}z_{i}^k$ may be generated by taking 
	the derivatives with respect to $t.$ We consider the following differential operator
	\begin{eqnarray}\label{diffcop}
	{\mathcal D}_N =\frac{1}{N!}\left|\begin{array}{cccccc}
	2! \frac{\partial}{\partial t_2} & 1 & 0 & \dots &  &  \\
	4!\frac{\partial}{\partial t_4}  &2! \frac{\partial}{\partial t_2} & 2 & 0 & \dots & \\
	\dots   &  \dots  &\dots &  \dots  &\dots &\\
	(2N-2)! \frac{\partial}{\partial t_{2N-2}} & (2N-4)!\frac{\partial}{\partial t_{2N-4}} & \dots  &\dots& 2!\frac{\partial}{\partial t_2} & N-1\\
	(2N)! \frac{\partial}{\partial t_{2N}} & (2N-2)!\frac{\partial }{\partial t_{2N-2}} & \dots  &\dots& 4!\frac{\partial}{\partial t_4} & 2! \frac{\partial}{\partial t_2} 
	\end{array}\right|,
	\end{eqnarray}
	with 
	\begin{eqnarray}
	\prod_{j=1}^{N} z_j^2 e^{\sum\limits_{k=0}^{\infty} \frac{t_k}{k!}\sum\limits_{i=1}^N z_i^k  }=
	{\mathcal D}_N  \left (e^{\sum\limits_{k=0}^{\infty} \frac{t_k}{k!}\sum\limits_{i=1}^N z_i^k  } \right ).
	\end{eqnarray}
	By using all together we derive $\mathcal{R}(p,q)$-conformal operator as follows:
	\begin{align}
	T^{\Delta}_n &= \frac{K(P,Q)}{p-q}\bigg[  \sum_{k,\nu=0}^{\infty}\big(\frac{q}{p}\big)^{\Delta(n+1)-N}q^{k}\frac{(k+\nu+n-2N)!}{k!\nu!}B_k(t_1,\dots,t_k)\nonumber\\&\times
	B_{\nu}(-t_1,\dots,-t_{\nu}) {\mathcal D}_N \frac{\partial}{\partial t_{k+\nu+n-2N}} - 
	p^{\Delta(n+1)}\,n!\frac{\partial}{\partial t_n}\bigg].\label{rpqVir-exp-matrix}
	\end{align}
	Note that, the above operator \eqref{rpqVir-exp-matrix}
	annihilates the generating function $Z_N^{\rm ell}(\{ t\})$. From the property of the Bell number, the relation \eqref{rpqcoperator} follows and the proof is achieved. 
\end{proof}
Taking $\Delta=1$ and $\mathcal{R}(x,y)=(p-q)^{-1}(x-y),$ we recovered the $q$-differential operator given in \cite{NZ}.
\begin{remark}
	Particular cases of Pochhammer symbol, theta function, and the generating function for the elliptic hermitian matrix model  related to quantum algebras known in the literature are deduced as:
	\begin{enumerate}
		\item[(a)]
		the  $(p,q)$-Pochhammer symbol:
		\begin{equation*}
		\big(z;q\big)_{0}=1,\quad \big(z;q\big)_{n}:=\prod\limits_{k=0}^{n-1}\left( 1-q^{-2k}\,z\right),\: n\in \mathbb{N},
		\end{equation*}
		the $q$- theta function
		\begin{equation*}
		\theta(z, q)=\prod\limits_{k=0}^{\infty}\left( 1-q^{-2k}\,z\right)\prod\limits_{k=0}^{\infty}\left( 1-q^{-2(k+1)}\,z^{-1}\right) .
		\end{equation*}
		and the generating function for the elliptic hermitian matrix model:
		\begin{equation*} 
		Z_N^{\rm ell}(\{ t\})= \oint \prod_{i=1}^N \frac{d z_i}{z_i} \prod_{i<j}\theta(\frac{z_i}{z_j};q)
		\theta(\frac{z_j}{z_i};q)e^{\sum\limits_{k=0}^{\infty} \frac{t_k}{k!}\sum\limits_{i=1}^N z_i^k}.
		\end{equation*}
		Moreover, the $q$-conformal operator \begin{eqnarray*}
			T^{\Delta}_n := -\sum\limits_{l=1}^N z_l^{(1-\Delta)(n+1)}D_{q}\,z_l^{\Delta(n+1)},
		\end{eqnarray*} can be given by:
		\begin{align*}
		T_n^{\Delta}&=\frac{K(q)}{q-q^{-1}}\bigg[q^{-2\Delta(n+1)-N} \sum\limits_{l=0}^\infty \frac{(l+n-2N)!}{l!} B_l(\tilde{t}_1,\ldots , \tilde{t}_l)\nonumber\\&\times
		{\mathcal D}_N \frac{\partial}{\partial t_{l+n-2N}}- q^{\Delta(n+1)} n!\frac{\partial}{\partial t_n}\bigg].
		\end{align*}
		\item[(b)]
		The  $(p,q)$-Pochhammer symbol:
		\begin{equation*}
		\big(z;p,q\big)_{0}=1,\quad \big(z;p,q\big)_{n}:=\prod\limits_{k=0}^{n-1}\left( p^{k}-q^{k}\,z\right),\: n\in \mathbb{N},
		\end{equation*}
		the $(p,q)$- theta function
		\begin{equation*}
		\theta(z;p, q)=\prod\limits_{k=0}^{\infty}\left( p^{k}-q^{k}\,z\right)\prod\limits_{k=0}^{\infty}\left( p^{k+1}-q^{k+1}\,z^{-1}\right) .
		\end{equation*}
		and the generating function for the elliptic hermitian matrix model:
		\begin{equation*} 
		Z_N^{\rm ell}(\{ t\})= \oint \prod_{i=1}^N \frac{d z_i}{z_i} \prod_{i<j}\theta(\frac{z_i}{z_j};p,q)
		\theta(\frac{z_j}{z_i};p,q)e^{\sum\limits_{k=0}^{\infty} \frac{t_k}{k!}\sum\limits_{i=1}^N z_i^k}.
		\end{equation*}
		Moreover, the $(p,q)$-conformal operator \begin{eqnarray*}
			T^{\Delta}_n := -\sum\limits_{l=1}^N z_l^{(1-\Delta)(n+1)}D_{p,q}\,z_l^{\Delta(n+1)},
		\end{eqnarray*} can be given by:
		\begin{align*}
		T_n^{\Delta}&=\frac{K(p,q)}{p-q}\bigg[(\frac{q}{p})^{\Delta(n+1)-N} \sum\limits_{l=0}^\infty \frac{(l+n-2N)!}{l!} B_l(\tilde{t}_1,\ldots , \tilde{t}_l)\nonumber\\&\times
		{\mathcal D}_N \frac{\partial}{\partial t_{l+n-2N}}- p^{\Delta(n+1)} n!\frac{\partial}{\partial t_n}\bigg].
		\end{align*}
	\end{enumerate}
\end{remark}
\section{Concluding  remarks}
The $\mathcal{R}(p,q)$-super Virasoro algebra, the $\mathcal{R}(p,q)$-conformal Virasoro $n$ algebra, the $\mathcal{R}(p,q)$-conformal super Virasoro $n$-algebra ($n$ even) have been constructed.  and discuss A toy model for the $\mathcal{R}(p,q)$-conformal Virasoro constraints and $\mathcal{R}(p,q)$-conformal super Virasoro constraints and  the $\mathcal{R}(p,q)$-elliptic hermitian matrix model with an arbitrary conformal dimension $\Delta$ have been discussed. Also, particular cases induced by quantum algebras existing in the literature have been derived. Note that the relation between the $\mathcal{R}(p,q)$-super Virasoro algebra and the nonlinear integrable system(super KdV equations) can be investigated in the futur work.  
\section*{Acknowledgments}
This research was partly supported by the SNF Grant No. IZSEZ0\_206010.

\end{document}